\documentclass[11pt,letterpaper]{article}
\usepackage[T1]{fontenc}
\usepackage[utf8]{inputenc}
\usepackage{lmodern}
\usepackage[margin=0.85in,headheight=14pt,footskip=28pt]{geometry}
\usepackage{amsmath,amssymb,amsthm}
\newtheorem{theorem}{Theorem}
\newtheorem{lemma}[theorem]{Lemma}
\newtheorem{proposition}[theorem]{Proposition}
\theoremstyle{definition}
\newtheorem{algorithm}[theorem]{Algorithm}
\theoremstyle{remark}
\newtheorem{remark}[theorem]{Remark}
\usepackage{microtype}
\usepackage{xcolor}
\usepackage{titlesec}
\usepackage{fancyhdr}
\usepackage{lastpage,verbatim}
\usepackage{hyperref}
\definecolor{linkblue}{RGB}{35,64,100}
\hypersetup{
  colorlinks=true,linkcolor=black,citecolor=black,urlcolor=black,
  pdftitle={Approximating the mixed volumes of a constant number of convex bodies in randomized polynomial time},
  pdfkeywords={mixed volume, membership oracle, randomized approximation, capacity, Gurvits ratio}
}
\newcommand{\R}{\mathbb R}
\newcommand{\Q}{\mathbb Q}
\newcommand{\E}{\mathbb E}
\DeclareMathOperator{\Vol}{Vol}
\DeclareMathOperator{\Capacity}{Cap}
\DeclareMathOperator{\poly}{poly}
\DeclareMathOperator{\interior}{int}
\DeclareMathOperator{\Unif}{Unif}
\newcommand{\op}{\mathrm{op}}
\newcommand{\norm}[1]{\lVert #1\rVert}
\titleformat{\section}{\large\bfseries}{\thesection.}{0.5em}{}
\titleformat{\subsection}{\normalsize\bfseries}{\thesubsection.}{0.5em}{}
\titlespacing*{\section}{0pt}{15pt plus 3pt minus 2pt}{6pt}
\titlespacing*{\subsection}{0pt}{11pt plus 2pt minus 1pt}{4pt}
\title{Approximation algorithm for mixed volumes of a
constant number of convex bodies}
\author{Hariharan Narayanan\thanks{School of Technology and Computer Science, TIFR, Mumbai}}
\date{}

\begin{document}
\maketitle
\vspace{-0.8em}

\begin{abstract}
We study the computation of mixed volumes of convex sets given as
rational subspace sections of bodies described by membership oracles.
The sets need not have nonempty interior. For positive multiplicities
$\alpha_1+\cdots+\alpha_k=n$, we give a randomized algorithm that
computes a relative $\varepsilon$-approximation with probability at least
$1-\delta$. Its running time is polynomial in Gurvits' factor
$\Gamma_\alpha=(n^n/n!)\prod_i\alpha_i!/\alpha_i^{\alpha_i}$,
the input presentation length, and $\varepsilon^{-1}$, and linear in
$1+\log\delta^{-1}$, with degree independent of $k$. The running-time
bound holds for every choice of the random bits, and zero mixed
volumes are detected exactly. We extend the capacity-scaling and
sampling method of Narayanan--Roy by using rational projections and
a smooth mixed Monge--Amp\`ere representation. In particular, the
algorithm runs in randomized polynomial time when the number of
distinct convex sets is bounded above by a constant.
\end{abstract}

\begingroup
\small
\setlength{\parskip}{3pt}
\textbf{AI use.}
OpenAI's Codex was used to develop and prepare this extension of the
capacity-scaling and sampling approach from \cite{NR}, including the smooth estimator,
its finite-bit oracle implementation, capacity optimization, exact
determinant-coefficient computation, complexity and probability
arguments, the worked example, and the mathematical exposition. \cite{NR} was fed as initial input.
\par

\noindent\textbf{Keywords.} Mixed volumes, membership oracles, capacity,
randomized approximation, mixed discriminants.

\noindent\textbf{2020 Mathematics Subject Classification.}
Primary 68W20; Secondary 52A39, 68W25, 68Q25.
\par
\endgroup

\section{Introduction}
\label{sec:introduction}

Let $\widetilde K_1,\ldots,\widetilde K_k\subseteq\R^n$ be compact
convex sets. Their mixed volumes are the coefficients, with the
following normalization, of the volume polynomial
\[
 p(t)=\Vol_n\!\left(\sum_i t_i\widetilde K_i\right)
     =\sum_{|\beta|=n}p_\beta t^\beta,
 \qquad
 p_\alpha=\binom n\alpha
 V(\widetilde K_1[\alpha_1],\ldots,\widetilde K_k[\alpha_k]).
\]
Here $V(K[n])=\Vol_n(K)$, and brackets denote repeated arguments.
 We rescale the summands so that
the coefficient contributes a sufficiently large fraction of the
volume, and express this fraction as the mean of a bounded statistic.

We consider sets $\widetilde K_i=K_i\cap S_i$, where
$B_n\subseteq K_i\subseteq R_0B_n$ is given by a membership oracle
and $S_i$ is a rational linear subspace. An algebraic test decides
whether the mixed volume is zero. On the remaining instances, a
controlled thickening reduces the problem to full-dimensional bodies.

We extend the work of Narayanan--Roy~\cite{NR} is to this oracle model.
We give below an overview of the reductions and the evaluation of
the statistic, followed by the full proofs.

\clearpage
\subsection{Main result}
\label{intro:guarantee}

Let $\alpha\in\mathbb Z_{\ge0}^k$ with $|\alpha|=n$. We discard
the bodies of zero multiplicity, leaving at most $n$ bodies. Define
\begin{equation}\label{eq:gamma-definition}
 \Gamma_\alpha=\frac{n^n}{n!}
       \prod_{i:\alpha_i>0}\frac{\alpha_i!}{\alpha_i^{\alpha_i}}.
\end{equation}
Throughout the paper, $\poly$ denotes a polynomial whose degree and
coefficients are absolute and independent of $k$. We denote the
Euclidean unit ball by $B_n$. Rational bases and numerical inputs
are specified in binary.

\begin{theorem}
\label{thm:main}
Let $K_i$ be convex bodies with $B_n\subseteq K_i\subseteq R_0B_n$,
for a known integer $R_0\ge1$,
and suppose that the $K_i$ have exact membership oracles on rational queries.
Let $S_i$ be rational linear subspaces whose bases have total encoding
length $L_S$. Put $\widetilde K_i=K_i\cap S_i$ and
$V_*=V(\widetilde K_1[\alpha_1],\ldots,\widetilde K_k[\alpha_k])$.
For $0<\varepsilon,\delta<1/2$, there is a randomized algorithm
which returns a nonnegative rational $\widehat V$ such that
\[
 \Pr\{(1-\varepsilon)V_*\le\widehat V\le(1+\varepsilon)V_*\}
 \ge1-\delta.
\]
For every choice of the random bits, the number of original membership
queries and additional bit operations, as well as the bit lengths of
the queries and output, are bounded above by
\begin{equation}\label{eq:main-complexity}
 \poly(n,k,L_S,1+\log R_0,\Gamma_\alpha,\varepsilon^{-1})
          (1+\log\delta^{-1}).
\end{equation}
If $V_*=0$, the algorithm returns zero after a deterministic test.
It terminates on every execution, and the probability of a correct
relative approximation is at least $1-\delta$.
\end{theorem}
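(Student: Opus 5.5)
We follow the Narayanan--Roy capacity-scaling and sampling scheme, adapted to subspace sections. The proof has four stages: an exact zero test, a thickening reduction, a capacity-based rescaling, and a Monte Carlo estimator. We analyse each stage separately and combine the failure probabilities with a union bound.

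\textbf{Zero test and reduction to full dimension.} By the classical characterization of positivity of mixed volumes, $V_*>0$ if and only if for every subset $J\subseteq\{i:\alpha_i>0\}$ we have $\dim\sum_{i\in J}\widetilde K_i\ge\sum_{i\in J}\alpha_i$. Because $B_n\subseteq K_i$, the set $\widetilde K_i$ contains the unit ball of $S_i$. Hence $\operatorname{aff}\widetilde K_i=S_i$, and the condition becomes $\dim\sum_{i\in J}S_i\ge\sum_{i\in J}\alpha_i$. This is a rank condition on the stacked rational bases. We do not enumerate the $2^k$ subsets $J$. Instead we test it in polynomial time as a matroid-intersection (Rado--Hall) problem: $V_*>0$ exactly when there are $n$ vectors, $\alpha_i$ of them taken from a basis of $S_i$, that are jointly linearly independent. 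This gives the deterministic test. If it passes, we replace $\widetilde K_i$ by $\widetilde K_i+\eta B_n$ for a small rational $\eta$. Membership in this thickened set is a convex feasibility problem, which we solve through projection onto $S_i$ and the oracle for $K_i$. Multilinearity gives a polynomial expansion in $\eta$ whose constant term is $V_*$. Since $V_*>0$, a lower bound on $V_*$ with polynomial bit length follows from the rank structure: $\widetilde K_i$ contains a unit ball of $S_i$, so $V_*$ is at least the mixed volume of these balls. That mixed volume is a mixed discriminant bounded below by $2^{-\poly(L_S)}$. Choosing $\eta=2^{-\poly}$ then makes the thickened quantity a $(1\pm\varepsilon/3)$-approximation of $V_*$.

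\textbf{Capacity scaling.} We write $p(t)=\Vol_n(\sum t_iK_i')$ for the full-dimensional bodies, with $\Capacity_\alpha(p)=\inf_{t>0}p(t)/t^\alpha$. The polynomial $p$ is Lorentzian, so Gurvits' inequality gives $\Capacity_\alpha(p)\cdot\alpha^\alpha/(\alpha!\,\Gamma_\alpha)\cdot(\ldots)\le p_\alpha\le\Capacity_\alpha(p)$, in the normalization where $p_\alpha/\Capacity_\alpha\ge\Gamma_\alpha^{-1}$ up to the multinomial factor. We compute an approximate minimizer $t^*$ of the convex function $\log p(e^s)-\langle\alpha,s\rangle$. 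Approximate volumes from a standard randomized volume algorithm make this feasible by an ellipsoid or cutting-plane method over a box of polynomial size; we argue the box is large enough using $B_n\subseteq K_i\subseteq R_0B_n$. At $t^*$, the coefficient term $p_\alpha t^{*\alpha}$ is a $\ge1/(e\Gamma_\alpha)$ fraction of $p(t^*)$. Making this work for every choice of random bits requires running the volume subroutine with a capped step count and accepting a $\delta/3$ failure event.

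\textbf{Estimator.} Using the smooth mixed Monge--Amp\`ere representation, $p_\alpha t^\alpha/p(t)$ is the expectation, over $x$ uniform in $\sum t_iK_i'$ (after smoothing with a Gaussian-convolved gauge), of a bounded statistic. This statistic is a normalized mixed discriminant of the Hessians of the summands' smoothed support or gauge functions, and it lies in $[0,\Gamma_\alpha\poly]$. We compute its coefficient exactly from a determinant polynomial in $t$ by interpolation. Averaging $\poly(\Gamma_\alpha,\varepsilon^{-1})$ samples gives relative error $\varepsilon/3$ by Hoeffding, and a median of $O(1+\log\delta^{-1})$ repetitions yields the $\delta$ factor. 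Multiplying by the volume estimate of $p(t^*)$ and dividing by $\binom n\alpha t^{*\alpha}$ gives $\widehat V$.

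\textbf{Main obstacle.} We expect the hardest step to be making the smoothed statistic both provably bounded by $\poly\cdot\Gamma_\alpha$ and computable from membership oracles with finite-bit error. The smoothing bias has to be controlled relative to $V_*$, which may be as small as $2^{-\poly}$. We would first try to bound the bias through Lipschitz dependence of mixed volumes on Hausdorff perturbation, scaled by the lower bound on $V_*$ obtained above.
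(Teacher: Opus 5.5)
Your outline follows the paper's architecture. Your zero test is a legitimate alternative: an independent selection of $n$ labelled integer basis columns with $\alpha_i$ from group $i$ is exactly the paper's criterion $c_\alpha>0$, and matroid intersection finds one without evaluating $\prod_i(\alpha_i+1)$ determinants. The lower bound on $V_*$ should then come from the segments $[-w_j/(nH),w_j/(nH)]$ along those columns, which lie in the relative unit balls. This gives $V_*\ge2^n/(n!(nH)^n)$. The mixed volume of subspace balls is not itself a mixed discriminant (already $V(B_2,B_2)=\pi$).

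The genuine gap is the step you call the main obstacle, and your proposed remedy would not give polynomial time. In the paper, smoothing introduces \emph{no} bias. Take $Q_i=L_i+\tau B_n$ and $f_i(u)=\E h_i(u+G)+\tau\sqrt{1+\norm u^2}$. Then $0\le f_i-h_{Q_i}\le R_1\sqrt n+\tau$, so $\nabla\sum_it_if_i$ is a diffeomorphism onto $\interior\sum_it_iQ_i$. Hence $b_\alpha^\tau=\int_{\R^n}[t^\alpha]\det(\sum_it_iH_i(u))\,du$ holds exactly, and the weight $d_\alpha/\mathcal J$ lies in $[0,1]$. The only bias is the added ball, and it is charged to the rescaled \emph{sum}. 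Capacity scaling, followed by the common rounding and joint centering that you omit, gives $0\in L_i$, $B_n\subseteq S\subseteq R_1B_n$ with $R_1=2n(n+1)$, and $\Vol_n(S)\le2\Gamma_\alpha b_\alpha$. So the ball costs at most $((1+k\tau)^n-1)\Vol_n(S)=O(\varepsilon)b_\alpha$ for $\tau=\Theta(\varepsilon/(nk\Gamma_\alpha))$. Trimming to $(1-\theta)Q$ with $\theta=\Theta(\varepsilon/(n\Gamma_\alpha))$ also costs $O(\varepsilon)b_\alpha$. The radial term has curvature at least $\tau(1+\norm u^2)^{-3/2}$, and trimming bounds $\norm{u(x)}$ polynomially. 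Together these give a floor $H_i\succeq\mu I_n$ with $\mu^{-1}=\poly(n,\Gamma_\alpha,\varepsilon^{-1})$.

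This floor is indispensable. The Hessians come from support queries through Monte Carlo estimates such as $\E[(GG^{\mathsf T}-I_n)(h_i(u+G)-h_i(u))]$, whose error decays like $q^{-1/2}$ on the scale $R_1$. The mixed-discriminant perturbation argument, however, needs $(1-\omega)H_i\preceq\widehat H_i\preceq(1+\omega)H_i$ for every $i$, i.e.\ absolute error about $\omega\mu$. If you make the perturbation small relative to the absolute bound $V_*\ge2^{-\poly}$, the regularization and hence $\mu$ are $2^{-\poly}$, and this estimation needs exponentially many samples. Gaussian smoothing alone gives no floor either. If $L_i$ has width $w$ in direction $v$, then $v^{\mathsf T}\nabla^2g_i(u)v\le w/\sqrt{2\pi}$, and individual widths can be exponentially small after scaling.

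Your coefficient extraction also misses the stated complexity. Generic interpolation of $\det(\sum_iz_iH_i)$ determines $\binom{n+k-1}{k-1}$ coefficients, which is not bounded by $\poly(n,\Gamma_\alpha)$. For $\alpha=(n-r,1,\ldots,1)$ with $r=\lceil\log n\rceil$, one has $\Gamma_\alpha\le e^r\le en$, whereas $\binom{n+r}{r}\ge(n/r)^r$ is superpolynomial. This lies in the dominant-multiplicity regime where the theorem promises polynomial time. The paper extracts only the target coefficient by homogeneity, $\nu_\alpha=\frac1{\alpha!}\sum_{0\le j\le\alpha}(-1)^{n-|j|}\binom\alpha j\det(\sum_ij_iM_i)$ with integer $M_i=q_0\widehat H_i$. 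This uses $\prod_i(\alpha_i+1)<8n\Gamma_\alpha^2$ exact determinants.

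You also omit the finite-bit layer that turns the identity into an algorithm:
the bound $\norm{u(x)}\le U$ on the trimmed region; approximate inversion of $\nabla F$ by strongly convex descent with Monte Carlo gradients; the Hessian Lipschitz bound, which converts sampling and inversion errors into relative matrix errors; and a spatial coupling for rational samples.

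Two smaller points. First, $\widetilde K_i+\eta B_n$ has only a weak membership oracle, since deciding membership requires the distance from $P_ix$ to $\widetilde K_i$. The paper's cylinder $\{x:P_ix\in K_i,\ \norm{P_i^\perp x}\le\eta\}$ needs one exact query. Second, Gurvits' bound reads $p_\alpha\le\Capacity_\alpha(p)\le\Gamma_\alpha p_\alpha$, with no extra multinomial factor, because $p_\alpha$ is already the coefficient of $t^\alpha$.
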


The parameter $\Gamma_\alpha$ is the explicit factor in Gurvits' capacity inequality.
Since $1\le\Gamma_\alpha\le n^n/n!<e^n$, the bound for $n$ distinct
arguments is singly exponential in dimension. For fixed active $k$,
Stirling's formula gives $\Gamma_\alpha=O_k(n^{(k-1)/2})$.
The theorem therefore gives a randomized polynomial-time algorithm
for a fixed number of distinct bodies.

We also obtain a polynomial bound when one multiplicity is large.
Suppose the largest multiplicity is $n-r$, and put $a_m=m^m/m!$.
Since $a_m\ge1$ and
$a_{m+1}/a_m=(1+1/m)^m\le e$, one obtains
\[
 \Gamma_\alpha=\frac{a_n}{\prod_i a_{\alpha_i}}
       \le\frac{a_n}{a_{n-r}}\le e^r.
\]
Thus the algorithm is polynomial time for $r=O(\log n)$. In
particular, it applies in the regime of Dyer--Gritzmann--Hufnagel~\cite{DGH}, where
$r\le\psi(n)$ and $\psi(n)\log\psi(n)=o(\log n)$.
The normalization of the input bodies and the uniform estimates are
given in Remark~\ref{rem:polynomial-regimes}. For unrestricted
multiplicities, the complexity bound includes $\Gamma_\alpha$.

\clearpage
\subsection{Oracle access and prior work}
\label{intro:model-comparison}

The membership oracle decides whether a rational point belongs to
$K_i$, with boundary points included. We assign unit cost to each
oracle call and count separately the bit operations used to form and
process the query. When the oracle is implemented in polynomial time,
these bounds also give a polynomial bound for the total computation.
A description of $S_i$ by rational homogeneous equations can be
converted to a rational basis by Gaussian elimination.

The inclusion $B_n\cap S_i\subseteq\widetilde K_i$ provides a known
relative inball. We use it both to test positivity and to bound the
error under thickening. The aspect ratio enters the complexity through
$\log R_0$. Although some subsequent transformations produce small
inradii, the weak-oracle reductions depend polynomially on the lengths
of their descriptions. Section~\ref{sec:model} gives the precise
model and the reductions between membership, separation, and
optimization used here~\cite{GLS,DGH}.

Narayanan--Roy~\cite{NR} give a relative approximation algorithm for
lattice polytopes specified as convex hulls of at most $m_0$ integer
points in $B_\infty(2^L)$. Their running time is
polynomial in $n,m_0,L,\varepsilon^{-1},\log\delta^{-1}$ and
\[
 \widetilde A=\prod_{i=2}^k
       \frac{(\alpha_i+1)^{\alpha_i+1}}{\alpha_i^{\alpha_i}}.
\]
They prove a stronger bound in terms of a refined capacity parameter.
Their result allows the number of bodies to vary. Moreover, Stirling's
bounds give
\[
 \widetilde A\le2n\Gamma_\alpha^2.
\]
Consequently their theorem already gives polynomial dependence on our
ratio for vertex-described inputs, and polynomial time for fixed-$k$.
We extend the method to membership-oracle bodies and their rational
subspace sections.

Their algorithm rescales the polytopes, samples from their Minkowski
sum, and finds the cell of a mixed subdivision containing the sample.
The indicator that the summand faces have the prescribed dimensions
has the required expectation. We construct a smooth weight taking
values in $[0,1]$, expressed in terms of the Hessians of smoothed
support functions. It has the corresponding coefficient fraction as
its mean and can be evaluated by support optimization and rational
matrix computations.

The proof has two parts. A change of variables under the gradient map
identifies the integral of the weight with the target coefficient.
We then bound the cost of evaluating the weight from membership
queries. For this we remove a thin boundary layer, obtain quantitative
curvature estimates, and control the errors in sampling and in the
matrix computations. Exact determinant arithmetic introduces no
further approximation error.

The two local computations have related convex dual problems.
Subgradients of support functions describe the faces in a subdivision;
in our construction the corresponding optimality condition is a
gradient equation. The parameter comparison is given in
Appendix~\ref{sec:nr-comparison}, and the dual problems in
Section~\ref{intro:kkt}. We use their expectation identities, without
requiring the two statistics to agree pointwise.

\clearpage
\subsection{Rational sections}
\label{intro:sections-overview}

We first decide whether the mixed volume is zero. This must be done
before thickening, since a relative approximation to zero is required
to be zero. Clearing denominators in the given bases produces integer
matrices $U_i$ with image $S_i$. Let $H\ge1$ bound the absolute values
of their entries, and consider
\[
 c_\alpha=[z^\alpha]\det\!\left(\sum_i z_iU_iU_i^{\mathsf T}\right).
\]
By the Cauchy--Binet formula, this coefficient is a sum of squared
determinants of selections having $\alpha_i$ basis columns from group
$i$. It is positive if and only if one such selection is independent.
In that case, suitably scaled segments along the selected columns
lie in the relative inballs and have positive mixed volume.
Conversely, enclose each section in a zonotope generated by its basis.
If every selection is dependent, multilinearity shows that the mixed
volume of these zonotopes vanishes. We obtain
\[
 c_\alpha=0\ \Longleftrightarrow\ V_*=0,
 \qquad c_\alpha>0\ \Longrightarrow\
 V_*\ge v_*:=\frac{2^n}{n!(nH)^n}.
\]
For the lower bound we use the fact that a nonzero integer determinant
has absolute value at least one. Finite differences compute $c_\alpha$
exactly in time polynomial in $n,L_S,\Gamma_\alpha$; this step makes
no membership queries.

Suppose now that the mixed volume is positive. Let $P_i$ be the rational
orthogonal projector onto $S_i$, put $P_i^\perp=I-P_i$, and choose
$\eta=\varepsilon v_*/[8n2^n(R_0+1)^{n-1}]$. Define
\[
 C_i=\widetilde K_i+\eta(B_n\cap S_i^\perp),\qquad
 A_i=\eta^{-1}C_i,\qquad
 B_n\subseteq A_i\subseteq\frac{R_0+1}{\eta}B_n.
\]
For rational $y$, we have $y\in A_i$ if and only if
$\eta P_i y\in K_i$ and $y^{\mathsf T}P_i^\perp y\le1$.
Thus membership in these cylinders can be decided exactly with one
original oracle call and rational arithmetic. This particular
thickening does not require exact membership in a general Minkowski sum.

By multilinearity and monotonicity,
\[
 0\le V(C_1[\alpha_1],\ldots,C_k[\alpha_k])-V_*
 \le n2^n(R_0+1)^{n-1}\eta\le\varepsilon V_*/8.
\]
We apply the full-dimensional algorithm to the $A_i$ with relative
accuracy $\xi=\varepsilon/2$ and multiply the estimate by $\eta^n$.
The resulting relative error is at most $\varepsilon$.
The thickness $\eta$ can be exponentially small, but
$\log\eta^{-1}$ is polynomial in the input length and
$\log\varepsilon^{-1}$. Neither $k$ nor $\alpha$ changes in this
reduction, so $\Gamma_\alpha$ is unchanged as well.

The proof is given in Section~\ref{sec:sections-reduction}. In the
rest of this overview, we denote normalized full-dimensional inputs
such as $A_i$ by $F_i$, and their requested relative accuracy by
$\xi$. All subsequent inball assumptions refer to these auxiliary
inputs.

\clearpage
\subsection{Capacity and rescaling}
\label{intro:scaling-overview}

Let $F_i$ be the normalized full-dimensional bodies obtained above,
and put $p_F(t)=\Vol_n(\sum_i t_iF_i)$. We seek relative accuracy
$\xi=\varepsilon/2$. To obtain a lower bound on the coefficient
fraction, we use Gurvits' inequality
\[
 [t^\alpha]p_F\le\Capacity_\alpha(p_F)
 :=\inf_{t>0}\frac{p_F(t)}{t^\alpha}
 \le\Gamma_\alpha[t^\alpha]p_F.
\]
Choose positive rational $\lambda_i$ giving an objective value within
a factor two of the infimum, and put $A=2\Gamma_\alpha$.
At $\lambda$, the volume is at most $A$ times the target monomial's
contribution. This is the same use of capacity scaling as in
Narayanan--Roy~\cite{NR}. We next give an estimator for the resulting
coefficient fraction using the membership oracles.

The dependence on $k$ in computing the scaling requires some care.
In logarithmic coordinates the objective is the convex function
$\log p_F(e^y)-\langle\alpha,y\rangle$. We eliminate one coordinate
by homogeneity. The inner and outer radius bounds then give a search
region of size polynomial in their logarithmic ratio and the
dimension. Convex optimization with approximate volume evaluations
computes the scaling with an absolute polynomial degree. In
particular, the computation has no grid search exponential in $k$.

Apply a common rational affine rounding to the scaled sum, with linear
map $T$ and center $\zeta$. Choose a decomposition
$\zeta=\sum_i\zeta_i$, with $\zeta_i\in T(\lambda_iF_i)$, and set
$L_i=T(\lambda_iF_i)-\zeta_i$. Then
\[
 0\in L_i,\qquad B_n\subseteq S:=\sum_iL_i\subseteq R_1B_n,
 \qquad R_1=2n(n+1).
\]
Each summand lies in the sum because the other summands contain the
origin. Its support function is consequently $R_1$-Lipschitz, even
though the inradius of $L_i$ may be small. If $b_\alpha$ denotes
the target coefficient after these transformations, translation
invariance and homogeneity give
\[
 b_\alpha=|\det T|\lambda^\alpha[t^\alpha]p_F,
 \qquad\Vol_n(S)\le A b_\alpha.
\]
We add a small ball to each summand, setting
\[
 \tau=\frac{\xi}{4096nkA},\qquad Q_i=L_i+\tau B_n,
 \qquad Q=S+k\tau B_n\subseteq(1+k\tau)S.
\]
By monotonicity, every coefficient increases. The sum of their
increases is $\Vol_n(Q)-\Vol_n(S)=O(\xi)b_\alpha$, which also
bounds the change in the target coefficient $b_\alpha^\tau$.
This estimate uses only the inball of the sum. It does not depend
on the reciprocal of an individual summand's inradius.
Sections~\ref{sec:scaling} and~\ref{sec:balls} give the scaling,
centering, and enlargement estimates.

\clearpage
\subsection{A smooth integral representation}
\label{intro:representation-overview}

To express $b_\alpha^\tau$ as an integral of local quantities, let
$h_i(u)=\sup_{y\in L_i}\langle u,y\rangle$ be the support function.
There are two obstructions to using its Hessian directly. The function
need not be differentiable; moreover, at points of twice
differentiability, homogeneity forces a zero radial eigenvalue.
We therefore smooth the support function and add a term with positive
radial curvature. For $G\sim N(0,I_n)$, define
\[
 g_i(u)=\E h_i(u+G),\qquad
 f_i(u)=g_i(u)+\tau\sqrt{1+\norm u^2},\qquad
 H_i(u)=\nabla^2f_i(u).
\]
The Gaussian convolution $g_i$ is smooth and convex. Differentiating
the square-root term and using the Lipschitz bound for the support
function gives
\[
 H_i(u)\succeq\frac{\tau}{(1+\norm u^2)^{3/2}}I_n,
 \qquad
 0\le f_i(u)-h_{Q_i}(u)\le R_1\sqrt n+\tau.
\]
The second bound is uniform in $u$. Thus the smoothed function has
the same recession function as the support function of the enlarged
body. This will imply an exact volume identity; the smoothing itself
introduces no further error.

For positive $t_i$, put $F_t=\sum_i t_i f_i$ and
$Q_t=\sum_i t_iQ_i$. We claim that
\[
 \nabla F_t:\R^n\longrightarrow\interior Q_t
 \quad\text{is a smooth bijection with a smooth inverse}.
\]
Indeed, the gradient of the Gaussian term belongs to $L_i$, and the
gradient of the square-root term belongs to the interior of $\tau B_n$.
The image is therefore contained in $\interior Q_t$. Conversely, fix
$x\in\interior Q_t$. The function $F_t(u)-\langle x,u\rangle$ tends
to infinity as $\norm u$ tends to infinity, since $x$ has a positive
distance from the supporting hyperplanes of the body. Strict
convexity gives a unique minimizer. At that point the gradient is
$x$, proving surjectivity. Positive definiteness of the Hessian also
gives injectivity and a smooth inverse.

Applying the change of variables $x=\nabla F_t(u)$, we obtain
\[
 \begin{aligned}
 \Vol_n(Q_t)&=\int_{\R^n}\det\!\left(\sum_i t_iH_i(u)\right)\,du,\\
 b_\alpha^\tau&=\int_{\R^n}[t^\alpha]
                 \det\!\left(\sum_i t_iH_i(u)\right)\,du.
 \end{aligned}
\]
The first equality follows from the Jacobian formula. To obtain the
second, apply Cauchy--Binet to Gram factorizations of the Hessians.
Every coefficient of the determinant polynomial is a sum of squares,
and hence is nonnegative and bounded by the determinant at all ones.
The latter is integrable by the first equality. We may therefore
integrate the coefficients separately. This gives the required smooth
mixed Monge--Amp\`ere representation, even when the boundaries of
the original bodies have flat faces or corners. In the next step we
transfer the integral to a bounded body, from which we can sample.
Section~\ref{sec:representation} supplies the details.

\clearpage
\subsection{The estimator}
\label{intro:estimator-overview}

We now sample from $Q$ using the standard algorithms for convex
bodies. Put $F=\sum_i f_i$. Given $x\in\interior Q$, let
$u(x)=(\nabla F)^{-1}(x)$ and consider the weight
\[
 W_\alpha(x)=
 \frac{[z^\alpha]\det\!\left(\sum_i z_iH_i(u(x))\right)}
      {\det\!\left(\sum_i H_i(u(x))\right)}.
\]
Here the $z_i$ are formal variables, and the inverse point is fixed
while the coefficient is extracted. The denominator is the sum of
all coefficients of the numerator's determinant polynomial. Their
nonnegativity implies $0\le W_\alpha\le1$.

Let $X$ be uniform on $Q$. Under the substitution
$x=\nabla F(u)$, the Jacobian cancels the denominator, and hence
\[
 \E W_\alpha(X)=\frac{b_\alpha^\tau}{\Vol_n(Q)}
 \ge\frac{1}{A(1+k\tau)^n}\ge\frac1{2A}.
\]
The lower bound on the mean is the consequence of capacity scaling
needed for relative approximation. A multiplicative concentration
bound for bounded random variables shows that
$O(A\xi^{-2}\log\delta^{-1})$ independent samples suffice in the
ideal setting. For the finite-bit algorithm, we must also estimate
the volume, remove a boundary layer, and bound the errors in the
sampled points and in the evaluation of the weights. These errors
are treated separately below.

The relation to the subdivision method of Narayanan--Roy~\cite{NR}
can also be expressed by convex duality. For vertex-described
polytopes $P_i$, the cell-recovery problem decomposes a sampled point
into $y_i\in\lambda_iP_i$ and maximizes a generic linear functional.
Its dual problem is
\[
 \min_u\left\{\sum_i h_{\lambda_iP_i}(u+a_i)
                         -\langle u,X\rangle\right\}.
\]
The optimality condition identifies the faces through shifted
normal-cone conditions with a common multiplier. Their dimensions
determine the indicator used in their estimator. Evaluating that
indicator therefore requires recovering face data.

The optimization problem in our construction is
\[
 u(X)=\mathop{\rm argmin}_u\{F(u)-\langle u,X\rangle\},
 \qquad\sum_i\nabla f_i(u(X))=X.
\]
The optimality condition is now a smooth gradient equation. The
Hessians give the local volume contribution and determine the
fractional weight through rational matrix computations. Thus the
two algorithms use the same capacity normalization, but evaluate
different local quantities. Our quantity can be recovered through
membership oracles once the inverse point and its Hessians have
been approximated. Section~\ref{sec:estimator} proves the expectation
identity and concentration bound. We next describe the estimates
which make this evaluation polynomial time.

\clearpage
\subsection{Conditioning and numerical evaluation}
\label{intro:numerics-overview}

As a point approaches the boundary of $Q$, its inverse $u(x)$ can
become arbitrarily large, and the lower bound on the Hessian can tend
to zero. To obtain uniform bounds, we restrict the samples to a
contraction of $Q$. Set
\[
 \theta=\frac{\xi}{4096nA},\qquad D=(1-\theta)Q,
 \qquad
 \int_{Q\setminus D}W_\alpha(x)\,dx
       \le n\theta\Vol_n(Q)=O(\xi)b_\alpha.
\]
The displayed error estimate follows from $W_\alpha\le1$ and the
volume of the removed layer. Every point of $D$ is at distance at
least $\theta$ from the boundary, because $B_n\subseteq Q$.
The minimizing property of the inverse point then gives
\[
 \norm{u(x)}\le U:=\frac{R_1\sqrt n+k\tau}{\theta}
       =\poly(n,k,A,\xi^{-1})\qquad(x\in D).
\]
We use $F(0)=\E h_S(G)+k\tau$ here, estimating the support function
of the whole sum so as not to introduce another factor $k$.
We will evaluate at rational points in the larger set
$E=(1-\theta/2)Q$. Their inverse points lie in a rational ball on
which the $H_i$ have a common lower eigenvalue bound $\mu>0$,
with $\mu^{-1}$ polynomial in the same parameters. This gives
both a bounded region for the optimization and quantitative control
of perturbations of the inverse point.

Weak convex optimization gives approximate support values from
membership queries. Differentiating the Gaussian density, we express
the Hessian of the smoothed support function as
\[
 \nabla^2g_i(u)=
 \E\!\left[(GG^{\mathsf T}-I_n)
                  \bigl(h_i(u+G)-h_i(u)\bigr)\right].
\]
Subtracting $h_i(u)$ gives the uniform bound
$|h_i(u+G)-h_i(u)|\le R_1\norm G$, independent of $u$.
The analogous formula for the gradient is used in a fixed number
of strongly convex optimization steps. We truncate and quantize
the Gaussian variables and specify an accuracy for each support
query. Moment estimates and concentration then give rational
derivative approximations with polynomial cost and a prescribed
failure probability. The number of steps and samples is bounded
regardless of whether the resulting approximations are accurate.

The lower eigenvalue bound $\mu$ converts sufficiently small absolute
matrix errors into the relative inequalities
\[
 (1-\omega)H_i(u(x))\preceq\widehat H_i
       \preceq(1+\omega)H_i(u(x)).
\]
Choose $\omega$ to be a sufficiently small constant multiple of
$\xi/n$. Monotonicity of mixed discriminants bounds the numerator
and denominator of the weight within factors $(1\pm\omega)^n$.
Lipschitz bounds for the Hessians give the same control when $x$ is
rounded. The required precision is therefore determined by explicit
parameters, without a lower bound on an unknown local coefficient.
The derivative and conditioning estimates are proved in
Section~\ref{sec:conditioning}; the sampling and combined error
estimates appear in Section~\ref{sec:output}.

\clearpage
\subsection{Coefficient extraction and rational sampling}
\label{intro:arithmetic-overview}

We require a coefficient-extraction algorithm whose complexity is
polynomial in the permitted ratio when $k$ varies. Let
$\widehat H_i$ be fixed rational positive-definite approximations
to the Hessians, and clear denominators to obtain integer matrices
$M_i=q_0\widehat H_i$. Put $\alpha!=\prod_i\alpha_i!$.
The following finite-difference identity holds by homogeneity:
\[
 [z^\alpha]\det\!\left(\sum_i z_iM_i\right)
 =\frac1{\alpha!}\sum_{0\le j\le\alpha}
 (-1)^{n-|j|}\binom\alpha j\det\!\left(\sum_i j_iM_i\right).
\]
The forward differences annihilate every degree-$n$ monomial except
the target monomial. Indeed, a surviving exponent must dominate
$\alpha$ in every coordinate, and equality of total degrees then
forces equality of exponents. We use the same rational matrices at
every node, so the signed cancellation takes place in exact integer
arithmetic and contributes no approximation error.

The number of nodes is $\mathcal N_\alpha=\prod_i(\alpha_i+1)$.
Although this can be exponential in the number of bodies, it satisfies
\[
 \mathcal N_\alpha<8n\Gamma_\alpha^2.
\]
To see this, put $s_m=m!e^m/m^m$. By Stirling's bounds,
$m+1\le s_m^2$ and
$\Gamma_\alpha=(\prod_i s_{\alpha_i})/s_n$. The number of nodes is
therefore at most $s_n^2\Gamma_\alpha^2=O(n\Gamma_\alpha^2)$.
Fraction-free elimination and exact summation and division take
polynomial bit time per node. Division by $\det(\sum_iM_i)$ then
gives the rational weight. The same identity is used in the section
zero test, allowing pivoting when matrices are singular.
The arithmetic bounds are proved in
Section~\ref{sec:determinant-arithmetic}.

We must also account for the discretization of sampled points.
The algorithm returns a rational point $\widehat X$, while the
analysis uses $X\sim\Unif(D)$. We construct these on a common
probability space so that
\[
 \Pr\{\|\widehat X-X\|>\sigma\}\le\rho,
 \qquad \widehat X\in E=(1-\theta/2)Q.
\]
A fine-grid sampler with weak membership gives this coupling.
The union of its accepted cells differs from the target body only
within a controlled boundary layer. The corner of a cell is close
to a continuous point chosen within it. We allow separately for
this boundary error, incomplete mixing, and termination of a capped
rejection loop.

The Hessian lower bound controls the inverse gradient map on $E$,
and the Lipschitz estimate transfers spatial error to relative
matrix error. An inverse-polynomial choice of $\sigma$ thus gives
the same multiplicative estimate for the weight as the Hessian
evaluation. We compare the rational and continuous samples by this
coupling; their laws cannot be close in total variation. The Gaussian
variables used for derivative estimation are also truncated and
quantized. Section~\ref{sec:rational-sampling} proves the required
sampling statement.

\clearpage
\subsection{Outline of the algorithm}
\label{intro:algorithm-overview}

\begin{algorithm}[Procedure for one target mixed volume]
\label{alg:main}
Discard the zero multiplicities and compute $c_\alpha$. If it is
zero, return zero. Otherwise construct the exact membership oracles
for $A_i=\eta^{-1}C_i$ and apply the following procedure at relative
accuracy $\xi=\varepsilon/2$.
\begin{enumerate}
\item In one complete run, compute a rational capacity scaling and
a common affine rounding, then jointly center the summands to obtain
$0\in L_i$ and $B_n\subseteq S=\sum_iL_i\subseteq R_1B_n$.
\item Form $Q_i=L_i+\tau B_n$, the smooth functions $f_i$, and
$Q=\sum_iQ_i$. Estimate $\Vol_n(Q)$ and generate
$N=O(\Gamma_\alpha\xi^{-2})$ independent rational samples near
the uniform law on $D=(1-\theta)Q$.
\item For each sample, approximately invert $\nabla F$, compute
rational Hessians, and extract their normalized determinant
coefficient exactly. Multiply the average weight by
$(1-\theta)^n\widehat{\Vol}_n(Q)$ and undo that run's scaling,
linear map, and multinomial normalization.
\item Repeat the entire run independently
$O(1+\log\delta^{-1})$ times, take the median of its unscaled
outputs, and multiply by $\eta^n$.
\end{enumerate}
\end{algorithm}

We bound separately the errors due to section thickening, ball
enlargement, boundary removal, volume estimation, empirical averaging,
sample discretization, and Hessian evaluation. Relative matrix
inequalities control the last two multiplicatively. A union bound
and multiplicative concentration show that a complete run gives
relative core error at most $\xi$ with probability at least $3/4$.
The median of independent runs has failure probability at most
$\delta$. Since the scaling and rounding data may differ between
runs, each output is unscaled before the median is taken.

We impose polynomial bounds on the number of iterations, query
lengths, and intermediate encodings, and return rational fallback
outputs whenever a computation would exceed a bound. The bounds hold for
every execution; the probability assertion concerns the accuracy
of the output. A complete run for full-dimensional inputs has cost
\[
 \poly(n,k,1+\log R,\Gamma_\alpha,\xi^{-1}),
\]
where $R$ is an outer radius bound for the normalized inputs.
The section reduction bounds $\log R$ polynomially in the original
presentation length. The sample count and coefficient grid require
$O(n\Gamma_\alpha^3\xi^{-2})$ determinant evaluations per run.
Support optimization and inversion require further work, which is
also bounded by a polynomial of absolute degree. Independent
repetition contributes the factor $1+\log\delta^{-1}$.

\paragraph{Organization of the paper.}
We specify the oracle model and polynomial-time consequences in
Section~\ref{sec:model}. Section~\ref{sec:sections-reduction} reduces
rational sections to full-dimensional inputs, and
Section~\ref{sec:scaling} constructs the capacity scaling and rounding.
The enlargement, integral representation, and estimator are treated
in Sections~\ref{sec:balls}--\ref{sec:estimator}.
Section~\ref{sec:conditioning} gives the conditioning and arithmetic
estimates, and Section~\ref{sec:output} proves the sampling and final
error bounds. Appendix~\ref{app:rectangle-example} contains a
two-dimensional example. Further comparisons and the convex-duality
calculation appear in Appendix~\ref{sec:nr-comparison}.

\label{intro:last-page}
\clearpage
\section{Oracle model and polynomial-time cases}
\label{sec:model}

Our model of computation is a randomized oracle Turing machine supplied
with independent fair random bits. For the input oracle
$\operatorname{MEM}_{K_i}$, a query $y\in\Q^n$ asks whether $y\in K_i$;
boundary points are included. We charge one oracle call for the answer
and count the bit operations used to form and process the rational query.
The oracle's internal computation is excluded. Thus a polynomial-time
implementation of membership in the query length gives a polynomial
total running time from the bounds below. The supplied integer $R_0$
has binary length $O(1+\log R_0)$; a noninteger bound can first be rounded
upwards. We use concise binary encodings of the accuracy and failure
parameters. Replacing them by smaller dyadic tolerances within a factor
two preserves the asymptotic bounds. The time to read longer encodings,
if supplied, is counted separately.

We specify each $S_i$ by a rational matrix whose columns form a basis,
allowing the empty matrix for $\{0\}$. A description by homogeneous
rational equations is equivalent, since exact Gaussian elimination
computes a basis in polynomial bit time. Let $L_S$ be the total encoding
length of the subspace data. The inclusions
$B_n\cap S_i\subseteq\widetilde K_i\subseteq R_0B_n$ provide relative
inballs, although the sections may have empty ambient interior. In
Section~\ref{sec:sections-reduction} we replace them by full-dimensional
bodies. Membership in each new body is determined by one query to the
original oracle and rational arithmetic.

For derived bodies we use \emph{weak} oracles. Given a convex body
$K$ and a rational tolerance $\rho>0$, define
\[
 K^{+\rho}=K+\rho B_n,
 \qquad K^{-\rho}=\{x: B(x, \rho) \subseteq K\}; B(x, \rho) = x + \rho B_n.
\]
Weak membership certifies either $y\in K^{+\rho}$ or
$y\notin K^{-\rho}$. Within the intervening layer either valid answer
is permitted. A well-presentation also specifies rational data
$B(b,r)\subseteq K\subseteq RB_n$ with $r>0$. For the input bodies,
$b=0$, $r=1$, and $R=R_0$. These are presentations of the ambient
$K_i$. The exact reduction for sections is performed before applying
any full-dimensional weak-oracle procedure. Exact membership implies
weak membership, but we do not assume exact membership in Minkowski sums.

The reductions between weak membership, separation, and optimization
in~\cite{GLS} apply to well-presented convex bodies. Their costs are
polynomial in the dimension, the presentation and query lengths, and
$\log(1/\rho)$ for $0<\rho<1$.
Dyer--Gritzmann--Hufnagel~\cite[Proposition~3, p.~364]{DGH} construct
well-presentations of positive rational Minkowski sums in polynomial
time. Their Proposition~4 gives an oracle-polynomial affine rounding
whose outer-to-inner radius ratio is at most $n\sqrt{n+1}$.
We use rational descriptions, enlarging this bound by a fixed factor
when slack is needed in the inclusions. Weak linear optimization
in~\cite{GLS} also approximates $h_K(v)$ to any specified absolute error.
We count all calls in these reductions as calls to the original input oracles.

Under translations and invertible rational linear maps, we transport
both the presentation data and the weak-oracle tolerances.
Section~\ref{sec:scaling} gives inner balls for the transformed summands.
Their reciprocal radii can be large; the logarithms of the reciprocal radii and
their encoding lengths have polynomial bounds. We apply the volume and
sampling procedures of~\cite{DFK,DGH,GLS} to the resulting
well-presented bodies, choosing weak-oracle tolerances to meet the
accuracy requirements of those procedures.

\paragraph{Reference laws and finite-bit sampling.}
The uniform and Gaussian random variables used in the analysis have
continuous laws. To implement sampling with finitely many random bits
and rational coordinates, we use the following coupling specification.
For spatial tolerance $\sigma>0$ and distributional tolerance $\xi>0$,
the rational output $\widehat X$ is coupled to
$X\sim\Unif(D)$ with
\[
 \Pr\{\norm{\widehat X-X}>\sigma\}\le\xi.
\]
One can instead compare the output law with a specified rounding of a
continuous reference sampler. Any total variation comparison below is
between continuous reference laws or their common quantizations. It is
never between a rational-valued law and continuous uniform measure.
For Gaussian sampling we use the same convention, with truncation and
quantization.

The truncation, quantization, and derivative tolerances are given in
Lemma~\ref{lem:numerical-evaluation}. The coupling guarantee and the
effect of spatial error on the weight are proved in
Section~\ref{sec:rational-sampling}. In Section~\ref{sec:output} we
combine these errors for one run with constant success probability.
Taking the median of independent complete runs then gives logarithmic
dependence on $\delta^{-1}$.

\begin{remark}[Polynomial time in the Dyer--Gritzmann--Hufnagel and fixed-$k$ regimes]
\label{rem:polynomial-regimes}
Theorem~\ref{thm:main} implies randomized polynomial time in the two
cases below. The dependence on the input presentation length,
$\varepsilon^{-1}$, and $1+\log\delta^{-1}$ is polynomial.
When $S_i=\mathbb R^n$, a supplied inner ball permits the normalization
in that theorem: replace $K_i$ by $(K_i-b_i)/r_i$ for
$B(b_i,r_i)\subseteq K_i$, and restore the factor
$\prod_i r_i^{\alpha_i}$ in the mixed volume at the end.
The normalized $\log R_0$ is polynomial in the presentation length.
For sections, we use the specified origin-centered ambient inballs and
rational subspaces, with polynomial dependence on $L_S$.

\emph{A dominant multiplicity.}
After relabelling, let $\alpha_1=\max_i\alpha_i=n-r$, and remove
zero multiplicities. Define $a_m=m^m/m!$ for $m\ge1$.
The inequalities $a_m\ge1$ and
$a_{m+1}/a_m=(1+1/m)^m\le e$ give
\begin{equation}\label{eq:dominant-multiplicity-ratio}
 \Gamma_\alpha
 =\frac{a_n}{\prod_i a_{\alpha_i}}
 \le\frac{a_n}{a_{n-r}}
 =\prod_{j=n-r}^{n-1}\left(1+\frac1j\right)^j
 \le e^r.
\end{equation}
When $r=0$, the empty product gives $\Gamma_\alpha=1$.
Dyer--Gritzmann--Hufnagel~\cite[Theorems~10--11]{DGH} consider
$r\le\psi(n)$, where $\psi$ is nondecreasing and
$\psi(n)\log\psi(n)=o(\log n)$. In particular,
$\psi(n)=o(\log n)$. Hence
$\Gamma_\alpha\le e^{\psi(n)}=n^{o(1)}$, and
\eqref{eq:main-complexity} is polynomial. More generally,
\eqref{eq:dominant-multiplicity-ratio} gives polynomial time whenever
$r=O(\log n)$, irrespective of the other multiplicities and their number.

\emph{A constant number of distinct bodies.}
Suppose the number of active bodies $k$ is fixed. The two-sided
Stirling bounds imply
\[
 \Gamma_\alpha
 \le (2\pi)^{(k-1)/2}
       \exp\!\left(\sum_{i=1}^k\frac1{12\alpha_i}\right)
       \sqrt{\frac{\prod_{i=1}^k\alpha_i}{n}}
 \le (2\pi)^{(k-1)/2}e^{k/12}n^{(k-1)/2}.
\]
Consequently $\Gamma_\alpha=O_k(n^{(k-1)/2})$ uniformly in the
multiplicities, and Theorem~\ref{thm:main} gives randomized polynomial
time in this case as well. In either case the relative error is at
most $\varepsilon$ with probability at least $1-\delta$.
\end{remark}

\paragraph{Zero multiplicities and multiple targets.}
We remove bodies with $\alpha_i=0$ and relabel. If a single body
remains, volume approximation proves the theorem with $\Gamma_{(n)}=1$
when its subspace is $\mathbb R^n$. If the subspace has dimension less
than $n$, the answer is zero. In the remaining case,
$2\le k\le n$ and all $\alpha_i\ge1$. Before the full-dimensional
construction we test for zero mixed volume as in
Section~\ref{sec:sections-reduction}.
If $m$ targets are specified, apply the algorithm to each with failure
probability $\delta/m$. Summing the costs gives the cost for the list,
and a union bound gives joint success probability at least $1-\delta$.

\section{Sections in rational subspaces}
\label{sec:sections-reduction}

We first reduce relative approximation for sections to the
full-dimensional case. Suppose
\[
 \widetilde K_i=K_i\cap S_i,
 \qquad B_n\subseteq K_i\subseteq R_0B_n,
\]
where the rational linear subspace $S_i\subseteq\R^n$ is given by a
rational basis. If instead it is specified by rational equations,
exact Gaussian elimination produces such a basis. Let $L_S$ denote
the total binary encoding length of the subspace descriptions.
Although $\widetilde K_i$ contains $B_n\cap S_i$, this relative unit
ball need not be an ambient inball. Throughout this section mixed
volume is taken in ambient dimension $n$; it may vanish even if
$\sum_iS_i=\R^n$.

\begin{theorem}[Full-dimensional bodies]
\label{thm:full-dimensional}
Let $B_n\subseteq F_i\subseteq RB_n$, where $R\ge1$ is a known
integer and the full-dimensional bodies $F_i$ are given by exact
rational membership oracles. For $\alpha\in\mathbb Z_{\ge0}^k$ with
$|\alpha|=n$, and $0<\xi,\zeta<1/2$, a randomized algorithm produces
a nonnegative rational $\widehat v$ satisfying
\[
 \Pr\!\left[(1-\xi)V(F_1[\alpha_1],\ldots,F_k[\alpha_k])
 \le\widehat v\le
 (1+\xi)V(F_1[\alpha_1],\ldots,F_k[\alpha_k])\right]\ge1-\zeta.
\]
On every execution, the number of membership queries and additional
bit operations, and the query and output bit lengths, are bounded by
\begin{equation}\label{eq:full-dimensional-complexity}
 \poly(n,k,1+\log R,\Gamma_\alpha,\xi^{-1})
       (1+\log\zeta^{-1}),
\end{equation}
with absolute polynomial degree.
\end{theorem}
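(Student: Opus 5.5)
The plan is to assemble the full-dimensional pipeline sketched in Sections~\ref{intro:scaling-overview}--\ref{intro:arithmetic-overview} into one complete run, show that a run has relative error at most $\xi$ with probability at least $3/4$, and then take the median of $O(1+\log\zeta^{-1})$ independent runs. First, discard zero multiplicities; if $k=1$, use the volume algorithm of~\cite{DFK}, for which $\Gamma_{(n)}=1$. Otherwise, by \cite[Propositions~3--4]{DGH} and~\cite{GLS}, build a well-presentation of $\sum_i t_iF_i$ for rational $t$, so that $p_F(t)$ can be approximated. Then minimize the convex function $y\mapsto\log p_F(e^y)-\langle\alpha,y\rangle$ over a box whose size is polynomial in $n$ and $\log R$, using homogeneity to fix one coordinate. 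The inclusions $B_n\subseteq F_i\subseteq RB_n$ imply that any near-optimal $y$ lies in such a box, because moving $y_i$ far out makes the objective grow linearly. Ellipsoid-type optimization with approximate function values, via the methods of~\cite{GLS}, then gives $\lambda$ within a factor $2$ of the capacity. By Gurvits' inequality, $\Vol_n(S)\le A b_\alpha$ with $A=2\Gamma_\alpha$. Next, round with $T$, center the summands to obtain the $L_i$, and enlarge to $Q_i=L_i+\tau B_n$. Monotonicity and $Q\subseteq(1+k\tau)S$ give $b^\tau_\alpha-b_\alpha\le((1+k\tau)^n-1)A b_\alpha=O(\xi)b_\alpha$.

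The core identity comes from the smooth functions $f_i$. I will prove that $\nabla F_t$ is a diffeomorphism of $\R^n$ onto $\interior Q_t$, by coercivity of $F_t-\langle x,\cdot\rangle$ together with strict convexity. The Jacobian formula then gives $\Vol_n(Q_t)=\int\det(\sum t_iH_i)$. Because every coefficient of $\det(\sum z_iH_i)$ is a nonnegative Cauchy--Binet sum of squares, each coefficient can be integrated separately. Pulling back the uniform law on $Q$ through $\nabla F$ then gives $\E W_\alpha(X)=b^\tau_\alpha/\Vol_n(Q)\ge 1/(2A)$, with $0\le W_\alpha\le1$. Restricting to $D=(1-\theta)Q$ loses at most $n\theta\Vol_n(Q)=O(\xi)b_\alpha$. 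A Chernoff bound shows that $N=O(A\xi^{-2})$ samples give relative error $\xi/8$ with probability at least $0.9$. A volume estimate for $Q$, from~\cite{DFK}, costs a further $\xi/8$ of relative error.

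The main obstacle is evaluating $W_\alpha$ at a rational near-sample $\widehat X\in E$ using only membership queries, with errors that are multiplicative. I will first bound $\norm{u(x)}\le U$ on $E$ by comparing $F(u)-\langle u,x\rangle$ with $F(0)$, using that $x$ lies at distance at least $\theta/2$ from $\partial Q$. This yields a common lower eigenvalue bound $\mu\ge\tau(1+U^2)^{-3/2}$ on the $H_i$. Upper bounds on the third derivatives of the Gaussian smoothing, of order $R_1\,\poly(n)$, together with $\mu$, make $u(\cdot)$ Lipschitz on $E$ with constant at most $\mu^{-1}$. I will then approximate $\nabla g_i$ and $\nabla^2 g_i$ by Monte Carlo estimates of the Gaussian-derivative formulas, using truncated and quantized $G$ and weak support-function queries. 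I will solve for $u$ by a bounded number of damped Newton or gradient steps on the strongly convex objective. Once the absolute Hessian errors fall below $\omega\mu$ with $\omega=c\xi/n$, the relative Loewner sandwich follows. Monotonicity of mixed discriminants then places both the numerator and the denominator within factors $(1\pm\omega)^n$, so the weight is accurate to a relative $O(\xi)$. A failed derivative estimate perturbs the mean by at most its probability, since $W_\alpha\in[0,1]$; I will set these failure probabilities to $O(\xi/(NA))$ per sample. The same Lipschitz argument, applied to the coupling $\Pr\{\norm{\widehat X-X}>\sigma\}\le\rho$, transfers the spatial error to the weight.

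Finally, I will compute the coefficient exactly by the finite-difference formula. This needs $\mathcal N_\alpha<8n\Gamma_\alpha^2$ determinants of integer matrices, each computed by fraction-free elimination in polynomial bit time. The output is then unscaled by $|\det T|^{-1}\lambda^{-\alpha}\binom n\alpha^{-1}$. A union bound over all error sources gives success probability at least $3/4$ per run, and the median over $O(1+\log\zeta^{-1})$ runs gives failure probability at most $\zeta$. Every loop is capped at an explicit polynomial bound, with a fallback output, so the cost bound~\eqref{eq:full-dimensional-complexity} holds on every execution. Its degree is absolute: $k$ enters only through sums of length $k$ and through $\mathcal N_\alpha$, which is controlled by $\Gamma_\alpha$.
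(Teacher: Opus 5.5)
Your proposal is correct and follows the paper's proof essentially step for step. The shared steps are capacity scaling over a bounded box, common rounding and joint centering, and ball enlargement; then the Gaussian-smoothed support functions plus $\tau\sqrt{1+\norm u^2}$, with the gradient-map and Cauchy--Binet representation; then boundary trimming, Monte Carlo Hessians turned into relative Loewner bounds via $\mu$ and mixed-discriminant monotonicity, exact finite-difference extraction, and a median of capped runs. The differences are minor: you use a GLS-style zeroth-order ellipsoid search where the paper runs projected gradient descent with forward differences (using $\nabla^2\Phi\preceq n^2I_k$), and you take the joint-centering step and the rational-sample coupling as given, whereas the paper constructs them via weak feasibility with explicit inballs and the Dyer--Frieze--Kannan grid sampler.
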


We prove Theorem~\ref{thm:full-dimensional} in
Sections~\ref{sec:scaling}--\ref{sec:output}. The following deterministic
reduction deduces the result for rational sections from this theorem.

Remove zero multiplicities and clear the denominators of each basis
vector separately. We obtain
\[
 U_i=(u_{i,1},\ldots,u_{i,d_i})\in\mathbb Z^{n\times d_i},
 \qquad \operatorname{im}U_i=S_i,
 \qquad d_i=\dim S_i.
\]
The matrix for the zero subspace is empty. Put
\begin{equation}\label{eq:section-height}
 H=\max\bigl(\{1\}\cup\{|(U_i)_{ab}|:i,a,b\}\bigr),
 \qquad q=nH,
 \qquad v_* =\frac{2^n}{n!q^n}.
\end{equation}
Exact elimination and denominator clearing require polynomial time,
and $1+\log H$ is polynomial in $n,k,L_S$. The basis need not be
orthonormal, so this procedure introduces no irrational coordinates.

\begin{lemma}[Exact zero test and a positive lower bound]
\label{lem:section-zero}
Put
\[
 V_* =V(\widetilde K_1[\alpha_1],\ldots,
             \widetilde K_k[\alpha_k]),
 \qquad
 c_\alpha=[z^\alpha]\det\!\left(\sum_i z_iU_iU_i^{\mathsf T}\right).
\]
We have $V_*=0$ if and only if $c_\alpha=0$. Moreover,
$c_\alpha>0$ implies $V_*\ge v_*$. Exact computation of the integer
$c_\alpha$ takes $\poly(n,k,L_S,\Gamma_\alpha)$ bit operations and
requires no membership queries.
\end{lemma}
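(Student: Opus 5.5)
We will prove the lemma in three steps: a Cauchy--Binet expansion of $c_\alpha$, a sandwich of $V_*$ between segment-sum and zonotope mixed volumes, and the finite-difference computation of $c_\alpha$.

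\emph{Cauchy--Binet.} Write $\sum_i z_iU_iU_i^{\mathsf T}=MZM^{\mathsf T}$, where $M=(U_1,\ldots,U_k)\in\mathbb Z^{n\times\sum d_i}$ and $Z$ is diagonal with entry $z_i$ on the columns of block $i$. Cauchy--Binet then gives
\[
 \det\Bigl(\sum_i z_iU_iU_i^{\mathsf T}\Bigr)=\sum_{|J|=n}\det(M_J)^2\prod_{j\in J}z_{i(j)},
\]
where $J$ ranges over $n$-subsets of columns and $i(j)$ is the block containing column $j$. Consequently $c_\alpha=\sum_J\det(M_J)^2$, summed over the $J$ that take exactly $\alpha_i$ columns from block $i$. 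Hence $c_\alpha>0$ if and only if some such selection is linearly independent.

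\emph{Lower bound when $c_\alpha>0$.} Fix an independent selection $J$. Each column satisfies $\norm{u_{i,j}}\le\sqrt nH\le q$, so the segment $I_{i,j}=[-u_{i,j}/q,\,u_{i,j}/q]$ lies in $B_n\cap S_i\subseteq\widetilde K_i$. Put $Z_i=\sum_{j\in J_i}I_{i,j}$, a zonotope with $\alpha_i$ segments. Then $Z_i\subseteq\alpha_i\widetilde K_i$, because $\widetilde K_i$ is convex and contains each segment. Monotonicity and multilinearity give
\[
 V_*\ge\prod_i\alpha_i^{-\alpha_i}\,V(Z_1[\alpha_1],\ldots,Z_k[\alpha_k]).
\]
Expanding by multilinearity and using nonnegativity of all terms, the right side is at least $\prod_i\alpha_i^{-\alpha_i}\cdot\prod_i\alpha_i!$ times the single term $V(I_{j_1},\ldots,I_{j_n})$ with distinct segments. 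For segments this term equals $\frac{1}{n!}\,|\det(2u_j/q)|=\frac{2^n}{n!\,q^n}|\det M_J|\ge v_*$; the integer determinant is nonzero, so its absolute value is at least $1$.

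The combinatorial factor $\prod_i\alpha_i!/\alpha_i^{\alpha_i}\le1$ works against us, so this step is the main obstacle. I will avoid the loss by not dilating. Instead I will use that $V(\widetilde K_1[\alpha_1],\ldots)$ dominates $V(I_{j_1},\ldots,I_{j_n})$ directly: the $n$ arguments are $I_{j_\ell}\subseteq\widetilde K_{i(j_\ell)}$, and mixed volume is monotone in each argument separately. This gives $V_*\ge v_*$ with no loss.

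\emph{Converse.} If $c_\alpha=0$, enclose $\widetilde K_i\subseteq R_0\,\mathrm{conv}$-type zonotopes $Z_i'=\sum_j[-t u_{i,j},t u_{i,j}]$ for a suitable $t$; such $t$ exists since $\widetilde K_i$ is bounded in $S_i=\operatorname{span}U_i$. Multilinear expansion writes $V(Z'_1[\alpha_1],\ldots)$ as a nonnegative combination of $V(I_{j_1},\ldots,I_{j_n})$, with $\alpha_i$ arguments drawn from block $i$, repetitions allowed. Each such term is proportional to $|\det|$ of the chosen vectors. The vectors are dependent, either because of a repetition or because every valid selection is dependent, so every term vanishes. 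Monotonicity then gives $V_*=0$.

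\emph{Computation.} By homogeneity, the finite-difference identity of Section~\ref{intro:arithmetic-overview} gives
\[
 c_\alpha=\frac{1}{\alpha!}\sum_{0\le j\le\alpha}(-1)^{n-|j|}\binom{\alpha}{j}\det\Bigl(\sum_i j_iU_iU_i^{\mathsf T}\Bigr),
\]
which evaluates $c_\alpha$ at $\mathcal N_\alpha=\prod_i(\alpha_i+1)<8n\Gamma_\alpha^2$ nodes. Each node is an integer determinant with entries bounded by $n^2H^2$, computed by fraction-free Bareiss elimination with pivoting in $\poly(n,\log H)$ bit time. The exact signed sum and the division by $\alpha!$ are also polynomial, and no membership queries are used.
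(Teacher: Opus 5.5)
Your proposal is correct and follows the paper's proof essentially step for step: Cauchy--Binet to characterize $c_\alpha>0$, argumentwise monotonicity against the segments $[-w_j/q,w_j/q]\subseteq B_n\cap S_i$ to get $V_*\ge v_*$, enclosing zonotopes with multilinearity for the converse, and the finite-difference identity over $\mathcal N_\alpha<8n\Gamma_\alpha^2$ integer determinants for the computation. The detour through $Z_i\subseteq\alpha_i\widetilde K_i$ is unnecessary and, as you noticed, loses a factor $\prod_i\alpha_i!/\alpha_i^{\alpha_i}$. You can delete it and keep only the direct monotonicity argument you settle on.
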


\begin{proof}
Concatenate the columns of the $U_i$ to form $U$, labelling each
column by its group. By the Cauchy--Binet formula,
\[
 \det\!\left(\sum_i z_iU_iU_i^{\mathsf T}\right)
  =\sum_{\substack{J\text{ a set of columns of }U\\|J|=n}}
       \det(U_J)^2\prod_{j\in J}z_{\operatorname{group}(j)}.
\]
It follows that $c_\alpha>0$ if and only if there are $n$ linearly
independent columns with $\alpha_i$ columns from group $i$.

Choose such columns $w_1,\ldots,w_n$, ordered according to the
repeated arguments of $V_*$. The norm of $w_j/q$ is at most
$\sqrt nH/(nH)\le1$. Thus $[-w_j/q,w_j/q]$ is contained in the
corresponding $\widetilde K_i$. Since a nonzero integer determinant
has absolute value at least one, monotonicity and the formula for
mixed volumes of segments imply
\[
 V_*\ge
 V([-w_1/q,w_1/q],\ldots,[-w_n/q,w_n/q])
 =\frac{2^n|\det(w_1,\ldots,w_n)|}{n!q^n}
 \ge v_*.
\]

For the converse, enclose each compact set
$\widetilde K_i\subseteq S_i$ in a zonotope
\[
 Z_i=\sum_{a=1}^{d_i}[-T_i u_{i,a},T_i u_{i,a}]
\]
with finite $T_i>0$. If $d_i=0$, take $Z_i=\{0\}$.
By multilinearity, $V(Z_1[\alpha_1],\ldots,Z_k[\alpha_k])$ is a sum
of mixed volumes of $n$ segments. Each term contains $\alpha_i$
generators from group $i$ and is proportional to the absolute value
of their determinant. When $c_\alpha=0$, these determinants all
vanish. A selection with a repeated generator is dependent; for
distinct generators the assertion follows from Cauchy--Binet with
the given group counts. Monotonicity now gives $V_*=0$.

To compute the coefficient, apply the homogeneous degree-$n$
finite-difference identity~\eqref{eq:mixed-finite-difference} to
$U_iU_i^{\mathsf T}$. This identity does not require positive
definiteness. By Lemma~\ref{lem:coefficient-grid}, computing
$c_\alpha$ requires
$\mathcal N_\alpha=\prod_i(\alpha_i+1)<8n\Gamma_\alpha^2$
determinants. For $0\le j_i\le\alpha_i$, the entries of
$\sum_i j_iU_iU_i^{\mathsf T}$ have absolute value at most
$n^2H^2$. Exact determinant computation with pivoting is valid also
for singular matrices and takes bit time polynomial in $n$ and
$1+\log H$. The binomial weights, determinants, and successive
signed sums have polynomial bit lengths; here
$\mathcal N_\alpha\le2^n$. Finally, divide exactly by
$\prod_i\alpha_i!$.
\end{proof}

\subsection{Thickening the sections}

\begin{proof}[Proof of Theorem~\ref{thm:main} from the full-dimensional case]

First apply Lemma~\ref{lem:section-zero}. For $c_\alpha=0$ the
algorithm returns zero. Otherwise $V_*\ge v_*$, and we put
\begin{equation}\label{eq:section-thickness}
 D_0=n2^n(R_0+1)^{n-1},\qquad
 \eta=\frac{\varepsilon v_*}{8D_0}
     =\frac{\varepsilon}
       {8n\,n!q^n(R_0+1)^{n-1}}.
\end{equation}
These are positive rational numbers with $0<\eta<1$. Let $P_i$ denote the
orthogonal projector onto $S_i$, and put
$P_i^\perp=I-P_i$. Rationality follows from
\[
 P_i=U_i(U_i^{\mathsf T}U_i)^{-1}U_i^{\mathsf T}
       \quad(d_i>0),\qquad P_i=0\quad(d_i=0).
\]
Exact rational linear algebra computes these matrices, whose entries
have polynomial bit lengths. Consider the bodies
\begin{equation}\label{eq:section-cylinder}
 C_i=\widetilde K_i+\eta(B_n\cap S_i^\perp)
    =\{x:P_ix\in K_i,\ \norm{P_i^\perp x}\le\eta\}.
\end{equation}
Since $\widetilde K_i$ contains its relative unit ball,
\begin{equation}\label{eq:section-cylinder-bounds}
 \eta B_n\subseteq C_i\subseteq(R_0+1)B_n,
 \qquad
 \widetilde K_i\subseteq C_i
       \subseteq\widetilde K_i+\eta B_n.
\end{equation}
For $A_i=\eta^{-1}C_i$, exact membership at a rational point $y$
is determined by
\begin{equation}\label{eq:section-exact-membership}
 y\in A_i
 \quad\Longleftrightarrow\quad
 \operatorname{MEM}_{K_i}(\eta P_i y)=\mathrm{yes}
       \ \text{and}\ y^{\mathsf T}P_i^\perp y\le1.
\end{equation}
As $P_i^\perp$ is a rational orthogonal projector, the second
comparison is exact. Each query therefore uses at most one original
membership call and polynomially many rational operations. This
constructs the oracle directly, without assuming membership in a
Minkowski sum.

Put $V_\eta=V(C_1[\alpha_1],\ldots,C_k[\alpha_k])$ and
$\kappa_n=\Vol_n(B_n)\le2^n$. Expand the $n$ arguments in the
rightmost inclusion of~\eqref{eq:section-cylinder-bounds} by
multilinearity. A term containing $j$ copies of $\eta B_n$ is
bounded by $\eta^jR_0^{n-j}\kappa_n$, since each remaining argument
is contained in $R_0B_n$. It follows that
\begin{align}
 0\le V_\eta-V_*
 &\le\kappa_n\bigl((R_0+\eta)^n-R_0^n\bigr)\notag\\
 &\le n2^n(R_0+1)^{n-1}\eta
   =\frac{\varepsilon v_*}{8}
   \le\frac{\varepsilon V_*}{8}.
 \label{eq:section-relative-bias}
\end{align}
Individual sections may have zero ambient volume in this argument.
The last inequality uses only positivity of the target mixed volume.

The normalized bodies satisfy $B_n\subseteq A_i\subseteq R_\eta B_n$,
where $R_\eta=\lceil(R_0+1)/\eta\rceil$ is known. Apply
Theorem~\ref{thm:full-dimensional} to the $A_i$ with the same
multiplicities, accuracy $\varepsilon/2$, and failure probability
$\delta$. Denote its output by $\widehat V_A$, and return
$\widehat V=\eta^n\widehat V_A$. Homogeneity and
\eqref{eq:section-relative-bias} give, with probability at least
$1-\delta$,
\[
 (1-\varepsilon/2)V_*
 \le\widehat V
 \le(1+\varepsilon/2)(1+\varepsilon/8)V_*
 \le(1+\varepsilon)V_*.
\]
The lower endpoint exceeds $(1-\varepsilon)V_*$, proving the error bound.

Finally,
\[
 \log\eta^{-1}
 =O\!\left(n\log(n+1)+n\log H
             +n\log(R_0+1)+\log\varepsilon^{-1}\right),
\]
with the same bound for $\log R_\eta$. The thickness and the
normalized presentations consequently have polynomial encoding
length. Substitution into Theorem~\ref{thm:full-dimensional},
together with the costs of the projections and zero test, yields
\eqref{eq:main-complexity}. The number of summand types and the
multiplicities are unchanged, so the ratio remains
$\Gamma_\alpha$. The final multiplication by $\eta^n$ respects the
claimed output-length and bit-complexity bounds.
\end{proof}

\begin{remark}[Zero mixed volume]
\label{rem:section-zero-first}
Thickening can make the target mixed volume positive when it was
originally zero, whereas relative approximation to zero must return
zero exactly. The algebraic test decides this case without querying
the membership oracles. On positive instances, the lower bound from
the rational subspace data controls the error of thickening. A limit
of full-dimensional instances alone would provide neither of these
guarantees, nor the required finite-bit running-time bound.
\end{remark}

\paragraph{Notation in the full-dimensional case.}
Until the end of Section~\ref{sec:output}, let $K_i$ denote the
full-dimensional inputs of Theorem~\ref{thm:full-dimensional} and
let $R_0$ be their known outer radius bound. In the reduction above,
these bodies are $A_i=\eta^{-1}C_i$, the radius bound is $R_\eta$,
and the accuracy parameter is half the accuracy requested for the
sections. We write $\varepsilon,\delta$ for the local accuracy and
failure parameters and set
\[
 p(t)=\Vol_n\!\left(\sum_i t_iK_i\right),\qquad
 p_\alpha=\binom n\alpha V(K_1[\alpha_1],\ldots,K_k[\alpha_k]).
\]
The ambient inballs, positive pure coefficients, and full-dimensional
rounding used below therefore refer to these auxiliary bodies.
We continue to reserve $\widetilde K_i$ for the original sections.

\section{Capacity scaling and the fraction $1/\Gamma_\alpha$}
\label{sec:scaling}

\begin{lemma}[Capacity scaling and joint centering]
\label{lem:scaling}
For positive multiplicities, one can choose a rational scaling within
a factor two of capacity, a common rational rounding, and joint
translations to obtain bodies $L_i$ satisfying $0\in L_i$,
$B_n\subseteq S=\sum_iL_i\subseteq R_1B_n$, and
\[
 \Vol_n(S)\le 2\Gamma_\alpha b_\alpha,
 \qquad b_\alpha=|\det T|\lambda^\alpha p_\alpha.
\]
The scaling, rounding, and translation data have bit length polynomial
in $n,k,1+\log R_0$. Capacity search requires polynomially many volume
calls with inverse-polynomial relative accuracy. The degree of this
polynomial is absolute, and the assertions hold when the volume calls
succeed.
\end{lemma}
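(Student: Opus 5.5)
The proof has three parts: the capacity scaling and its approximation guarantee, the rounding and joint centering, and the bit-length and complexity bookkeeping.

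\textbf{Step 1: capacity scaling.} I would first invoke Gurvits' inequality for the volume polynomial $p$ as a black box. The polynomial $p$ is a Lorentzian (and hence strongly log-concave) polynomial with nonnegative coefficients, so
\[
 p_\alpha\le \Capacity_\alpha(p)\le\Gamma_\alpha p_\alpha .
\]
In logarithmic coordinates $y=\log t$, the function $\phi(y)=\log p(e^y)-\langle\alpha,y\rangle$ is convex, being a log-sum-exp of affine functions. Homogeneity of degree $n$ with $|\alpha|=n$ makes $\phi$ invariant under $y\mapsto y+c\mathbf 1$, so I fix $y_1=0$.

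To obtain a bounded search region, I use $B_n\subseteq K_i\subseteq R_0B_n$. This gives
\[
 \kappa_n\Bigl(\sum_i t_i\Bigr)^n\le p(t)\le\kappa_nR_0^n\Bigl(\sum_i t_i\Bigr)^n .
\]
Comparing with the value at $y=0$, any point with some $y_i-y_j$ larger than $M=O(n\log(nR_0)+\log\Gamma_\alpha)$ has objective exceeding the value at $0$ by more than $\log 2$. The reason is that $t^\alpha\le(\max t)^{\alpha_{\max}}(\cdots)$ loses a factor $e^{\alpha_{\min}(y_{\min}-y_{\max})}$ against $(\sum t)^n$, while all $\alpha_i\ge1$. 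So it suffices to minimize over a box of side $O(M)$.

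Next I run an ellipsoid or shallow-cut convex minimization using zeroth-order evaluations of $\phi$. Each evaluation calls the volume algorithm of \cite{DFK} on the well-presented Minkowski sum $\sum_i e^{y_i}K_i$, with relative accuracy $1/\poly$; the presentation comes from \cite[Proposition~3]{DGH}. The Lipschitz constant of $\phi$ in $y$ is at most $2n$, so additive error $O(1)$ in $\phi$ is achieved with polynomially many calls. The iteration count is polynomial in $k$ and $\log M$, with absolute degree. Rounding $e^{y}$ to dyadic rationals $\lambda$ costs another factor close to $1$. The output satisfies
\[
 p(\lambda)/\lambda^\alpha\le2\Capacity_\alpha(p)\le2\Gamma_\alpha p_\alpha .
\]

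\textbf{Step 2: rounding and centering.} Apply \cite[Proposition~4]{DGH} to $\sum_i\lambda_iK_i$. This yields a rational affine map $x\mapsto T x-\zeta$ with
\[
 B_n\subseteq T\Bigl(\sum_i\lambda_iK_i\Bigr)-\zeta\subseteq n\sqrt{n+1}\,B_n .
\]
I would enlarge the ratio by a fixed factor to absorb rationality slack, giving $R_1=2n(n+1)$.

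To decompose $\zeta$, take $\zeta_i=T(\lambda_i c)+\zeta/k-T(\lambda_ic)\cdot(\text{correction})$. More simply, since $0\in K_i$, choose $\zeta_i=T(\lambda_i k_i)$ for points $k_i$ that I find directly. To do this, I solve the weak optimization or feasibility problem of writing $T^{-1}\zeta$ as $\sum_i\lambda_i k_i$ with $k_i\in K_i$. This is a convex feasibility problem in $\R^{nk}$ over the product body, solvable by \cite{GLS} with the product well-presented.

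If the solver returns only approximate points, I would absorb the error: shrink the inner radius slightly, or use the rounding's slack factor, so that $B_n\subseteq S$ persists. Setting $L_i=T(\lambda_iK_i)-\zeta_i$ gives $0\in L_i$ and $S=T(\sum_i\lambda_iK_i)-\zeta$.

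Finally, translation invariance and homogeneity of mixed volumes give
\[
 \Vol_n(S)=|\det T|\,p(\lambda)\le2\Gamma_\alpha|\det T|\lambda^\alpha p_\alpha=2\Gamma_\alpha b_\alpha .
\]

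\textbf{Step 3: bookkeeping.} All data have polynomial bit length because $M$, $\log R_0$, and the outputs of \cite{DGH,GLS} are polynomially bounded. I expect the main obstacle to be making the capacity search rigorous with only noisy volume values and an absolute polynomial degree independent of $k$. This needs the a priori box bound, Lipschitz control of $\phi$, and a zeroth-order convex method tolerant of additive $O(1)$ noise, for which I would cite a standard noisy ellipsoid or random-walk optimizer. The exact centering $\sum_i\zeta_i=\zeta$ with rational data is a secondary technical point.
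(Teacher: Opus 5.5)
Your Step~1 and the volume identity follow the paper's route. These are convexity of $\log p(e^y)-\alpha\cdot y$, a compact sublevel set once the homogeneity direction is removed, DGH Proposition~4, and $\Vol_n(S)=|\det T|\,p(\lambda)\le2\Gamma_\alpha b_\alpha$. Your normalization $y_1=0$ with $y_{\max}-y_{\min}\le n\log(kR_0)+\log2$ works as well as the paper's $\alpha\cdot y=0$.

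The genuine gap is the joint centering. You defer it as ``secondary'', but it is what yields $0\in L_i$ with polynomial-length data. The feasibility set $\{(z_i)\in\prod_i\overline K_i:\sum_iz_i=\zeta\}$ lies in an affine subspace of codimension $n$ in $\R^{nk}$. Your version, $\sum_i\lambda_ik_i=T^{-1}\zeta$ with $k_i\in K_i$, has the same problem. A well-presentation of the product body therefore does not make the GLS feasibility reductions applicable. You need an explicit lower bound on the relative inradius of this slice, and you give none. Moreover, weak feasibility returns points only near the bodies, and then $0\in L_i$ simply fails. Slack in the rounding can repair $B_n\subseteq S$ after a small error in $\sum_i\zeta_i$. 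It cannot repair the exact membership $\zeta_i\in\overline K_i$. Your sentence ``absorb the error \ldots\ gives $0\in L_i$'' conflates these two requirements.

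The paper closes this gap as follows.
With $\ell=\sum_i\lambda_i$, the inclusions $\sum_i\lambda_iK_i\subseteq R_0\ell B_n$ and $B(\zeta,1)\subseteq\overline K$ give $\sigma_{\min}(T)\ge1/(R_0\ell)$. Hence there are inballs $r_iB_n\subseteq\overline K_i$ with $r_i=\lambda_i/(R_0\ell)$.
Contract to $\overline K_i^-=(1-\gamma)\overline K_i$ with $\gamma$ of order $1/R_1$, so that $\overline K_i^-+\gamma r_iB_n\subseteq\overline K_i$.
Since $0\in\overline K\subseteq B(\zeta,R_1)$, the point $q=\zeta/(1-2\gamma)$ lies in $B(\zeta,1)\subseteq\overline K$.
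Write $q=\sum_iq_i$ with $q_i\in\overline K_i$ and set $z_i=(1-2\gamma)q_i$. Then $\sum_iz_i=\zeta$ and $B(z_i,\gamma r_i)\subseteq\overline K_i^-$. So the slice of $\prod_i\overline K_i^-$ has relative inradius at least $\gamma\min_ir_i$, whose logarithm is polynomial.
Now run weak feasibility in an exact rational parametrization of the slice, with error below $\gamma\min_ir_i/2$. It returns points that sum exactly to $\zeta$ and lie exactly in the $\overline K_i$.

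A smaller caution on Step~1 concerns the noisy optimizer you cite. It must succeed with evaluation errors that are only inverse-polynomial. Tolerating $O(1)$ noise is not enough, and neither is a guarantee polynomial merely in the encoding length of the requested tolerance. The reason is that randomized volume approximation costs polynomially in the reciprocal accuracy. The paper secures this with the explicit bound $\nabla^2\Phi=\mathrm{Cov}_y(\beta)\preceq n^2I_k$. This justifies forward-difference gradients from inverse-polynomially accurate values, followed by projected gradient descent with a fixed polynomial number of steps.
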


We prove the lemma by constructing the scaling and then the common
rounding and translations. We also specify the presentation data
needed for the oracle reductions. The preliminary reduction in the
introduction ensures that all multiplicities are positive.

Define
\[
  \Capacity_\alpha(p)=\inf_{t>0}\frac{p(t)}{t^\alpha},
  \qquad
  \Gamma_\alpha=\frac{n^n}{n!}
  \prod_{i=1}^k\frac{\alpha_i!}{\alpha_i^{\alpha_i}}.
\]
Then
\begin{equation}\label{eq:capacity}
  p_\alpha\le\Capacity_\alpha(p)\le\Gamma_\alpha p_\alpha,
  \qquad \Gamma_\alpha=O_k\!\left(n^{(k-1)/2}\right).
\end{equation}
To deduce this from Gurvits's mixed-volume capacity inequality
\cite[Theorem 2.1]{Gurvits}, replace $t_i$ by the average of
$\alpha_i$ new variables. More precisely, put
\[
 q(z)=p\!\left(\frac1{\alpha_1}\sum_{j=1}^{\alpha_1}z_{1j},
               \ldots,
               \frac1{\alpha_k}\sum_{j=1}^{\alpha_k}z_{kj}\right).
\]
For fixed positive group averages $t_i$, the arithmetic--geometric
mean inequality yields $\prod_{i,j}z_{ij}\le\prod_i t_i^{\alpha_i}$,
with equality for constant groups. The squarefree capacity of $q$
is therefore $\Capacity_\alpha(p)$, and its squarefree coefficient
is $p_\alpha\prod_i\alpha_i!/\alpha_i^{\alpha_i}$. Applying Gurvits'
inequality to $q$ proves the upper bound in~\eqref{eq:capacity}.
Nonnegativity of the coefficients gives the lower bound. For fixed
$k$, the stated growth follows from Stirling's formula.

\subsection{Polynomial-time capacity search}
\label{sec:uniform-capacity}

We minimize the logarithm of the capacity objective by approximate
volume evaluations. The number of evaluations is polynomial, with
degree independent of $k$. After removing zero multiplicities,
$1\le k\le n$ and $\alpha_i\ge1$. For
$p(t)=\Vol_n(\sum_i t_iK_i)$, let
\[
 \Phi(y)=\log p(e^{y_1},\ldots,e^{y_k})-\alpha\cdot y,
 \qquad \Phi_* = \inf_{y\in\R^k}\Phi(y).
\]

\begin{lemma}[Uniform capacity search]
\label{lem:uniform-capacity}
For $0<\zeta_{\rm cap}<1/2$, there is a randomized algorithm that
computes positive rational numbers $\lambda_1,\ldots,\lambda_k$
satisfying
\begin{equation}\label{eq:near-capacity}
 \frac{p(\lambda)}{\lambda^\alpha}
 \le 2\Capacity_\alpha(p).
\end{equation}
The probability that the inequality holds is at least $1-\zeta_{\rm cap}$.
On every execution, including unsuccessful ones, the number of oracle
calls and additional bit operations is bounded by a polynomial in
$n,k,1+\log R_0,\log(1/\zeta_{\rm cap})$ whose degree is independent
of $k$. The output coordinates have polynomial bit length.
\end{lemma}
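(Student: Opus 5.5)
The plan is to minimize the convex function $\Phi$ over a bounded region, using a randomized volume algorithm for each function evaluation. I would argue in four steps.

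\emph{Convexity.} First, $\Phi$ is convex on $\R^k$. The function $t\mapsto p(t)$ is a polynomial with nonnegative coefficients, so $\log p(e^y)$ is a log-sum-exp of affine functions of $y$ and hence convex. Subtracting the linear term $\alpha\cdot y$ preserves convexity.

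\emph{Reduction to a box.} Homogeneity of degree $n=|\alpha|$ gives $\Phi(y+s\mathbf 1)=\Phi(y)$, so we may fix $y_1=0$. Next I would confine a near-minimizer to a box $[-M,M]^{k-1}$ with $M=\poly(n,1+\log R_0)$. The inclusions $B_n\subseteq K_i\subseteq R_0B_n$ give the sandwich
\[
 \kappa_n\Bigl(\sum_i t_i\Bigr)^n\le p(t)\le\kappa_nR_0^n\Bigl(\sum_i t_i\Bigr)^n .
\]
Hence $\Phi(y)$ lies within $n\log R_0$ of the explicit convex function $\psi(y)=\log\kappa_n+n\log\sum_ie^{y_i}-\alpha\cdot y$. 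The minimum of $\psi$ is attained at $e^{y_i}\propto\alpha_i$. For $\psi$ we have
\[
 \psi(y)-\min\psi\ \ge\ c\,\bigl(\max_i y_i-\min_i y_i\bigr)-O(n\log n),
\]
because $\alpha_i\ge1$: a coordinate far below the maximum contributes at least its gap, weighted by $\alpha_i\ge1$, to $n\max_i y_i-\alpha\cdot y$. So every point with $\Phi(y)\le\Phi_*+1$ has coordinate spread $O(n\log(nR_0))$. After the normalization $y_1=0$ this gives the box. The bound is independent of $k$.

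\emph{Approximate minimization.} Inside the box, $\Phi$ is Lipschitz with constant at most $2n$, because $\nabla\log p(e^y)$ has entries $\partial_{y_i}\log p(e^y)\ge0$ summing to $n$ by Euler's identity. The values of $\Phi$ on the box therefore vary by $O(nM\sqrt k)$. A weak evaluation oracle for $\Phi$ with additive error $\gamma$ is obtained from a randomized volume approximation of $\sum_i e^{y_i}K_i$ with relative error $\gamma$. I would first round $e^{y_i}$ to rationals, using the Lipschitz bound, and form well-presented Minkowski sums as in \cite{DGH}. The \cite{DFK}/\cite{DGH} volume algorithms then run in time polynomial in $n,k,\log R_0,\gamma^{-1}$ and $\log$ of their failure probability.

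Then I would apply a convex minimization algorithm that needs only approximate function values, such as the shallow-cut ellipsoid method with a weak evaluation oracle as in \cite{GLS}, or a random-walk method. It reaches additive accuracy $\log 2/2$ with $\poly(k,\log(M/\gamma))$ evaluations, provided $\gamma$ is a sufficiently small inverse polynomial. The returned $\lambda=e^{y}$, rounded to a rational at a cost absorbed into the Lipschitz slack, satisfies $\Phi\le\Phi_*+\log2$, which is \eqref{eq:near-capacity}. I would choose the failure probability of each volume call as $\zeta_{\rm cap}/(\text{number of calls})$ and take a union bound. For the worst-case cost, I would cap the iteration count and the bit lengths, returning a fallback output if a cap is hit, so the stated bound holds on every execution. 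All degrees come from $n$, $k\le n$ and the volume algorithm, so they are absolute.

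\emph{Main obstacle.} The hard step is making the optimization with noisy zeroth-order values rigorous. With exact values, plain ellipsoid would suffice. Under additive noise $\gamma$ one needs the known results that weak evaluation of a convex function on a box, with bounded Lipschitz constant, yields weak minimization; \cite[Ch.~4]{GLS} gives this for convex functions via their epigraphs. I would cite that result rather than reprove it, checking that the error tolerance is inverse-polynomial and that the result applies in dimension $k-1$. The other point needing care is showing that the box radius $M$ does not pick up a dependence on $k$ through $\log\Gamma_\alpha$-type terms. The sandwich argument above keeps it at $O(n\log(nR_0))$.
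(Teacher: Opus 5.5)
Your convexity argument, the $\ell_\infty$-Lipschitz bound $2n$, and the box reduction are sound. Comparing $\Phi$ with $\psi(y)=\log\kappa_n+n\log\sum_ie^{y_i}-\alpha\cdot y$ is a valid alternative to the paper's normalization $\alpha\cdot y=0$ combined with $p(e^y)\ge v_ne^{ny_i}$. It gives a $k$-independent box of radius about $n\log(nR_0)+1$.

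The gap is in the step you yourself flag as the main obstacle, and that step is the real content of the lemma. The weak convex-function minimization theorem of \cite{GLS} is \emph{oracle-polynomial}: its cost is polynomial in the encoding length of the queries, which includes $\log(1/\delta)$ for each requested evaluation tolerance $\delta$. The theorem does not bound $1/\delta$ by a polynomial. Ellipsoid-type reductions are entitled to request tolerances with polynomially many bits, i.e.\ possibly $\delta=2^{-\poly(n)}$. Your evaluation oracle is a randomized relative volume approximation, whose cost is polynomial in $1/\delta$, not in $\log(1/\delta)$. So the sentence ``it reaches additive accuracy $\log2/2$ with $\poly(k,\log(M/\gamma))$ evaluations, provided $\gamma$ is a sufficiently small inverse polynomial'' is not what the citation supplies. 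It is exactly the claim that needs proof, and ``checking that the error tolerance is inverse-polynomial'' would require opening the black box, which the proposal does not do.

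The missing idea is that $\Phi$ is not merely convex and Lipschitz, but smooth with an explicit, $k$-independent curvature bound. Treat $\Phi(y)+\alpha\cdot y$ as the log-partition function of the law $\Pr_y\{\beta\}\propto p_\beta e^{\beta\cdot y}$ on $\{|\beta|=n\}$. Then $\nabla\Phi(y)=\E_y\beta-\alpha$ and $\nabla^2\Phi(y)=\operatorname{Cov}_y(\beta)\preceq n^2I_k$. Consequently, a forward difference with step $h$, computed from values accurate to $a$, has coordinate error at most $n^2h/2+2a/h$. Approximate gradients therefore come from approximate values at fixed, explicitly inverse-polynomial tolerances; the paper takes $h=e/(4n^2k)$, $a=eh/(8k)$, $e=1/(32D)$. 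Plain projected gradient descent on the box for $T=(32DG)^2$ steps then gives $\Phi(\bar y)-\Phi_*\le1/16$ using $T(k+1)$ volume calls, and this is how the paper proves the lemma. Alternatively, you could invoke a zeroth-order method for approximately convex functions with a stated inverse-polynomial noise tolerance (your ``random-walk method''). As written, however, the proposal names no such result, and the GLS citation alone does not give polynomial time.
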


\begin{proof}
Since the coefficients of $p$ are nonnegative, we may define the
probability distribution
\[
 \Pr_y\{\beta\}=\frac{p_\beta e^{\beta\cdot y}}{p(e^y)},
 \qquad |\beta|=n,
\]
for which differentiation gives
\[
 \nabla\Phi(y)=\mathbb E_y\beta-\alpha,
 \qquad \nabla^2\Phi(y)=\operatorname{Cov}_y(\beta).
\]
These identities imply convexity of $\Phi$, the bounds
$\|\nabla\Phi(y)\|_2\le2n$ and
$0\preceq\nabla^2\Phi(y)\preceq n^2I_k$, and
\begin{equation}\label{eq:capacity-lipschitz-uniform}
 |\Phi(y)-\Phi(z)|\le2n\|y-z\|_\infty.
\end{equation}
Here we used $\|\beta\|_2\le n$ and
$\|\mathbb E_y\beta-\alpha\|_1\le2n$.

We shall restrict the search to a bounded box containing a minimizer.
Homogeneity gives $\Phi(y+c\mathbf1)=\Phi(y)$, so we first impose
$\alpha\cdot y=0$. Let $v_n=\Vol_n(B_n)$ and $L=\log(kR_0)$.
The pure monomials, together with the input inclusions, give
\[
 p(e^y)\ge v_ne^{ny_i}\quad(i=1,\ldots,k),
 \qquad p(\mathbf1)\le v_n(kR_0)^n.
\]
On the sublevel set $\Phi(y)\le\Phi(0)$ in the hyperplane, it follows
that $y_i\le L$ for each $i$. Also,
$\alpha_i y_i=-\sum_{j\ne i}\alpha_jy_j\ge-(n-\alpha_i)L$.
Hence $-nL\le y_i\le L$. The sublevel set is nonempty and compact,
so it contains a minimizer. Define
\[
 \ell=\max\{1,\lceil\log_2(kR_0)\rceil\},\quad
 B=n\ell+1,\quad C=[-B,B]^k,\quad D=2Bk,\quad G=2n+1.
\]
The integer $\ell$ is obtained from the bit length of $kR_0$.
The box $C$ contains a global minimizer $y_*$ and has Euclidean
diameter at most $D$. We use the following tolerances, step size,
and fixed number of iterations:
\[
 e=\frac1{32D},\qquad h=\frac{e}{4n^2k},\qquad
 a=\frac{eh}{8k},\qquad
 s=\frac1{32G^2},\qquad T=(32DG)^2.
\]
For a rational point $v\in[-B-1,B+1]^k$, an approximation to
$\Phi(v)$ within absolute error $a$ requires one relative volume
evaluation of a positive rational Minkowski sum. Choose positive
rational $r_i$ satisfying $|\log r_i-v_i|\le a/(8n)$ and estimate
$p(r)$ within relative error $a/16$. Then approximate
$\log\widehat p(r)-\alpha\cdot v$ rationally. Each of the first two
errors is at most $a/8$: the function $y\mapsto\log p(e^y)$ is
$n$-Lipschitz in $\ell_\infty$, and $|\log(1+u)|\le2|u|$ for
$|u|\le1/2$. Allocate absolute error $a/8$ to the logarithm and
$a/8$ to the final dyadic rounding. The sum is less than $a$.
We place all outputs on one fixed dyadic grid of mesh at most $a/4$.

Since the coordinates $v_i$ have polynomial magnitude, exponential
and logarithm evaluation to these accuracies takes polynomial bit
time. The presentation of $\sum_i r_iK_i$ has polynomial length,
with known inner radius $\sum_i r_i$ and outer radius
$R_0\sum_i r_i$. The weak-oracle reductions and the volume algorithm
therefore apply. To obtain logarithmic dependence on the failure
allowance from an expected polynomial-time volume algorithm, first
use a constant number of repetitions to obtain success probability
at least $7/8$. Cap the resulting run at sixteen times its
expected-cost bound. Markov's inequality reduces the success
probability by at most $1/16$, so it remains greater than $3/4$.
The median of $O(1+\log\rho^{-1})$ independent capped runs fails
with probability at most $\rho$. When a cap is reached, return a
fixed bounded rational value. The cost of a volume call is thus
bounded on every execution and depends logarithmically on
$\rho^{-1}$.
We use these time limits throughout. An illegal output is replaced
by a fixed dyadic value, and every value output is clipped to
$[-M,M]$, with $M=4n(B+n+\ell+2)$. From
$(2/\sqrt n)^n\le v_n\le2^n$ and the radius bounds,
$|\Phi(v)|<M$ on the query box. Clipping therefore preserves the
accuracy of every successful evaluation and bounds the bit lengths
of all outputs, including those on unsuccessful executions.

Starting at $y_0=0$, perform the following update for
$t=0,\ldots,T-1$, using forward differences of the approximate values:
\[
 \widetilde g_{t,i}
 =\frac{\widetilde\Phi(y_t+he_i)-\widetilde\Phi(y_t)}h,
 \qquad
 y_{t+1}=\Pi_C(y_t-s\widetilde g_t).
\]
Projection $\Pi_C$ onto the box is coordinatewise clipping and
uses exact rational arithmetic without square roots. When the
$k+1$ evaluations in an iteration are accurate, the Hessian bound
implies
\[
 |\widetilde g_{t,i}-\partial_i\Phi(y_t)|
 \le \frac{n^2h}{2}+\frac{2a}{h}
 =\frac{3e}{8k},
 \qquad
 \|\widetilde g_t-\nabla\Phi(y_t)\|_2\le e,
 \qquad \|\widetilde g_t\|_2\le G.
\]
By convexity and nonexpansiveness of the projection,
\[
 2s\bigl(\Phi(y_t)-\Phi_*\bigr)
 \le \|y_t-y_*\|_2^2-\|y_{t+1}-y_*\|_2^2
      +s^2G^2+2seD.
\]
Sum over the iterations and apply Jensen's inequality to
$\bar y=T^{-1}\sum_{t=0}^{T-1}y_t$. If every evaluation is accurate,
we obtain
\[
 \Phi(\bar y)-\Phi_*
 \le\frac{D^2}{2sT}+\frac{sG^2}{2}+eD
 =\frac1{64}+\frac1{64}+\frac1{32}
 =\frac1{16}.
\]
Choose positive rational $\lambda_i$ with
$|\log\lambda_i-\bar y_i|\le1/(64n)$. Applying
\eqref{eq:capacity-lipschitz-uniform} once more gives
\[
 \log\frac{p(\lambda)}{\lambda^\alpha}-\Phi_*
 \le\frac1{16}+\frac1{32}=\frac3{32}<\log2,
\]
This proves the factor-two approximation.

The computation makes $T(k+1)$ volume calls. Assign each call
conditional failure probability $\zeta_{\rm cap}/[T(k+1)]$ and
independent fresh random bits. The volume guarantee holds for each
adaptively selected query. A conditional union bound therefore gives
probability at most $\zeta_{\rm cap}$ for any inaccurate evaluation.
The bounds on iterations and bit lengths hold without conditioning
on success: clipping keeps the iterates in $C$, values lie on a fixed
bounded dyadic grid, and $h,s$ are fixed rationals. All increments
consequently lie on a rational lattice with a polynomial-length
denominator. Clipping at the integer endpoints preserves such a
common denominator, and averaging introduces only the factor $T$.
All costs are bounded by the asserted uniform polynomial.
\end{proof}

\subsection{Rounding and joint centering}

Apply the rational affine rounding of~\cite[Proposition~4]{DGH}.
Write the map as $x\mapsto Tx-\zeta$, with $T$ invertible and
rational and $\zeta$ rational. For
$\overline K_i=T(\lambda_iK_i)$ and
$\overline K=\sum_i\overline K_i$, it gives
\[
  B(\zeta,1)\subseteq \overline K\subseteq B(\zeta,R_1),
  \qquad R_1=2n(n+1).
\]
Here $B(\zeta,s)=\zeta+sB_n$. Unlike the input radius $R_0$,
which depends on the geometry, $R_1$ is a fixed rational bound for
the rounded sum, with slack relative to $n\sqrt{n+1}$.
The original individual radius bounds concern $K_i$ before capacity
scaling. As $\zeta\in \overline K$, there are
$\zeta_i\in \overline K_i$ such that $\sum_i \zeta_i=\zeta$.
Define $L_i=\overline K_i-\zeta_i$. These joint translations satisfy
\[
  0\in L_i\quad\text{for every }i,\qquad
  B_n\subseteq S:=\sum_iL_i=\overline K-\zeta\subseteq R_1B_n.
\]
We henceforth use the center $a=0$. Since all summands contain the
origin,
\[
  L_i\subseteq S\subseteq R_1B_n.
\]
The individual $L_i$ need not contain a unit ball.

To compute these translations, we use well-bounded weak convex
feasibility on the slice $\sum_i z_i=\zeta$ as in~\cite{GLS}.
The required finite precision follows from explicit inballs.
Let $\ell=\sum_i\lambda_i$. The inclusions
$\ell B_n\subseteq\sum_i\lambda_iK_i\subseteq R_0\ell B_n$,
compared in width with $B(\zeta,1)\subseteq\overline K$, imply
$\sigma_{\min}(T)\ge1/(R_0\ell)$. Thus we have rational data
\[
 p_i=0,\qquad r_i=\frac{\lambda_i}{R_0\ell},\qquad
 B(p_i,r_i)\subseteq\overline K_i.
\]
The bounds on the logarithmic scaling coordinates give polynomial
bit lengths for the presentation, including $\log(1/r_i)$.
Choose a rational $\gamma$ of polynomial bit length such that
$1/[8(1+R_1)]\le\gamma\le1/[4(1+R_1)]$, and contract the bodies to
\[
  \overline K_i^-=(1-\gamma)\overline K_i+\gamma p_i,
  \qquad \overline K_i^-+\gamma r_iB_n\subseteq \overline K_i,
\]
where membership follows by affine substitution. We claim
that the slice of $\prod_i \overline K_i^-$ contains a relative ball
of radius at least $\gamma\min_i r_i$. Put $p=\sum_i p_i$.
Because $\norm{p-\zeta}\le R_1$, we have
$q=(\zeta-2\gamma p)/(1-2\gamma)$ in
$B(\zeta,1)\subseteq \overline K$. Decompose $q=\sum_i q_i$,
where $q_i\in \overline K_i$. Then
$z_i=(1-2\gamma)q_i+2\gamma p_i$ sum to $\zeta$, and
$B(z_i,\gamma r_i)\subseteq \overline K_i^-$, proving the claim.
Represent the affine constraint exactly in rational coordinates and
solve weak feasibility to error less than $\gamma\min_i r_i/2$
in the ambient product norm. This produces rational translation
vectors in $\overline K_i$ with sum exactly $\zeta$. The inradius
bounds show that the required precision has polynomial bit length.
For the translated body $L_i$, we know the inner ball
$B(-\zeta_i,r_i)$. Merely having $0\in L_i$ does not give an inball
at the origin. We use these presentation data for support optimization
as described in Section~\ref{sec:model}.

For the corresponding volume-polynomial coefficient $b_\alpha$, set
$A=2\Gamma_\alpha$. Combining
\eqref{eq:capacity}--\eqref{eq:near-capacity} gives
\begin{equation}\label{eq:volume-fraction}
  \boxed{\Vol_n(S)\le A b_\alpha.}
\end{equation}
By translation invariance of mixed volumes and the scaling rule,
\[
  b_\alpha=|\det T|\lambda^\alpha p_\alpha.
\]

\section{Addition of small balls}
\label{sec:balls}

Set $C=4096$ and define
\[
  \tau=\frac{\varepsilon}{CnkA},\qquad
  Q_i=L_i+\tau B_n,\qquad Q=\sum_iQ_i=S+k\tau B_n.
\]
Let $b_\alpha^\tau$ be the coefficient for $Q_i$. By monotonicity,
each volume-polynomial coefficient increases. It follows that
\begin{equation}\label{eq:regularization}
\begin{aligned}
  0\le b_\alpha^\tau-b_\alpha
  &\le \Vol_n(S+k\tau B_n)-\Vol_n(S)\\
  &\le\bigl((1+k\tau)^n-1\bigr)\Vol_n(S)
   =O(\varepsilon)b_\alpha.
\end{aligned}
\end{equation}
The second inequality follows from $B_n\subseteq S$. Only the
inradius of the sum enters this estimate; the individual rounded
bodies may have arbitrarily small inradii.

\section{An integral representation by Hessian determinants}\label{sec:representation}

\begin{proposition}[Smooth mixed Monge--Amp\`ere representation]
\label{prop:representation}
The functions $f_i$ defined below satisfy
$H_i=\nabla^2f_i\succ0$. For each $t>0$, the gradient map
$\nabla(\sum_i t_if_i)$ is a diffeomorphism from $\R^n$ onto
$\interior(\sum_i t_iQ_i)$, and
\[
 \Vol_n\!\left(\sum_i t_iQ_i\right)
 =\int_{\R^n}\det\!\left(\sum_i t_iH_i(u)\right)\,du,
 \qquad
 b_\alpha^\tau=\int_{\R^n}[t^\alpha]\det\!\left(\sum_i t_iH_i(u)\right)\,du.
\]
All determinant coefficients are nonnegative and integrable. This is
the smooth case of~\cite[Proposition~3]{PR}; see
Appendix~\ref{sec:nr-comparison}. We give the proof together with the
curvature bounds needed for the algorithm.
\end{proposition}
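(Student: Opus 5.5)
I will follow the outline of Section~\ref{intro:representation-overview}, with $h_i$ the support function of $L_i$, $g_i(u)=\E h_i(u+G)$, and $f_i=g_i+\tau\sqrt{1+\norm u^2}$.

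\textbf{Regularity and curvature of $f_i$.} Since $h_i$ is convex and $R_1$-Lipschitz (because $L_i\subseteq R_1B_n$), $g_i$ is convex. It is also $C^\infty$, because the Gaussian convolution can be differentiated through the density. Moreover $\nabla g_i(u)=\E\,\partial h_i(u+G)\in L_i$, using closedness and convexity of $L_i$ and the fact that the subgradient exists almost everywhere. The term $\phi(u)=\sqrt{1+\norm u^2}$ has gradient $u/\phi(u)$, of norm strictly below one, and Hessian
\[
\phi(u)^{-1}\bigl(I-uu^{\mathsf T}/\phi(u)^2\bigr)\succeq\phi(u)^{-3}I.
\]
Hence $H_i\succeq\tau(1+\norm u^2)^{-3/2}I\succ0$ and $\nabla f_i(u)\in L_i+\interior(\tau B_n)$. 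For the uniform comparison, note that $h_{Q_i}=h_i+\tau\norm u$. Then $0\le g_i-h_i\le R_1\E\norm G\le R_1\sqrt n$, where the lower bound is Jensen's inequality since $\E G=0$, and $0\le\tau(\phi-\norm u)\le\tau$.

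\textbf{Diffeomorphism.} Put $F_t=\sum_it_if_i$, so that $\nabla F_t(u)\in\sum_it_iL_i+\interior(\sum t_i\tau B_n)\subseteq\interior Q_t$. For surjectivity, fix $x\in\interior Q_t$ and choose $r>0$ with $x+rB_n\subseteq Q_t$. Then $h_{Q_t}(u)-\langle x,u\rangle\ge r\norm u$, and by the uniform comparison $F_t-h_{Q_t}$ is bounded. So $F_t-\langle x,\cdot\rangle$ is coercive and attains a minimum, where $\nabla F_t=x$. Since $\nabla^2F_t\succ0$, the function is strictly convex, which gives injectivity, and the inverse function theorem makes the inverse smooth.

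\textbf{Integrals.} The change of variables $x=\nabla F_t(u)$, with Jacobian $\det\nabla^2F_t>0$, gives $\Vol_n(\interior Q_t)=\int\det(\sum t_iH_i)$; the boundary of a convex body is null. Write each $H_i=M_iM_i^{\mathsf T}$. Cauchy--Binet then expresses $\det(\sum_it_iH_i)=\sum_\beta c_\beta(u)t^\beta$ with each $c_\beta(u)$ a sum of squared minors, so $c_\beta\ge0$, and $c_\beta\le\det(\sum_iH_i)$, which is integrable by the case $t=\mathbf1$.

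Now integrate term by term for every $t>0$. This shows that $\Vol_n(Q_t)=\sum_\beta(\int c_\beta)t^\beta$ on the positive orthant. Two polynomials agreeing on an open set coincide, so $\int c_\alpha=[t^\alpha]\Vol_n(\sum t_iQ_i)=b_\alpha^\tau$, matching the paper's normalization $p_\alpha=\binom n\alpha V$.

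\textbf{Main obstacle.} I expect the delicate step to be surjectivity onto the whole interior. It requires that the smoothing keep $F_t-h_{Q_t}$ bounded uniformly in $u$, so that the recession behaviour is exactly $h_{Q_t}$ and no volume is lost. The uniform bound on $g_i-h_i$ together with $\phi-\norm u\le1$ handles this.
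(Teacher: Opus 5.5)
Your proposal is correct and follows essentially the same route as the paper's proof. It uses Gaussian smoothing plus the $\tau\sqrt{1+\norm u^2}$ term for curvature, the comparison $F_t\ge h_{Q_t}$ for coercivity and surjectivity, the change of variables, and Cauchy--Binet on Gram factorizations with term-by-term integration. The differences are cosmetic: you derive injectivity from strict convexity instead of the paper's monotonicity integral, and for surjectivity only the lower bound $F_t\ge h_{Q_t}$ is needed, not two-sided boundedness.
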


\subsection{The volume polynomial}

Recall the bodies $Q_i=L_i+\tau B_n$, with
$L_i\subseteq R_1B_n$ and $\tau>0$. For
$t=(t_1,\ldots,t_k)$, where $t_i>0$ for each index, put
\[
 Q_t=\sum_{i=1}^k t_iQ_i,
 \qquad p_\tau(t)=\Vol_n(Q_t),
 \qquad b_\alpha^\tau=[t^\alpha]p_\tau(t).
\]
The notation $[t^\alpha]$ extracts the coefficient of
$t^\alpha=\prod_i t_i^{\alpha_i}$, where $\sum_i\alpha_i=n$.
We shall represent $b_\alpha^\tau$ by a nonnegative integral.
For this, we first write $p_\tau(t)$ as the integral of a determinant
polynomial and identify its coefficients. The representation is exact
for $Q_i$; Section~\ref{sec:balls} bounds the preceding enlargement error.

\subsection{Support functions}

For $L_i$, define the support function
\[
 h_i(u)=\sup_{x\in L_i}x\cdot u,\qquad u\in\R^n.
\]
This function is convex and positively homogeneous. Additivity of
support functions under Minkowski addition gives
\[
 h_{Q_i}(u)=h_i(u)+\tau\norm{u},
 \qquad h_{Q_t}(u)=\sum_i t_i h_{Q_i}(u).
\]
The inclusion $L_i\subseteq R_1B_n$ implies the Lipschitz bound
\[
 |h_i(u)-h_i(v)|\leq R_1\norm{u-v}.
\]

The addition of a ball does not in general make the support function
differentiable. When $u\ne0$, the supporting face of $L_i+\tau B_n$
in direction $u$ is the supporting face of $L_i$ translated by
$\tau u/\norm{u}$. A non-singleton face therefore remains so, and
nondifferentiability persists. Thus smoothing the corners of the
boundary by adding a ball need not remove flat faces or smooth the
support function. At the origin there is a further obstruction:
differentiability and positive homogeneity would force a support
function to be linear, which is impossible for a non-singleton body.

Even at points of twice differentiability, $h_{Q_i}$ cannot give the
gradient map we need. For any support function $h$, homogeneity gives,
where the Hessian exists away from the origin,
\[
 \nabla^2h(u)u=0.
\]
The radial direction is consequently in the kernel of the Hessian.
We therefore smooth the support function and add positive curvature
to obtain an invertible gradient map.

\subsection{Gaussian smoothing and positive curvature}

For $G\sim N(0,I_n)$, set
\[
 g_i(u)=\E h_i(u+G),\qquad
 f_i(u)=g_i(u)+\tau\sqrt{1+\norm{u}^2},\qquad
 H_i(u)=\nabla^2f_i(u).
\]
The function $g_i$ is smooth, since the linear growth of $h_i$
permits differentiation of the Gaussian density under the integral.
Moreover, $g_i$ is an average of translates of the convex function
$h_i$, so $\nabla^2g_i(u)\succeq0$. Gaussian convolution is a
convenient choice here. Adding only the smooth square-root term to
$h_i$ would not remove its singularities.

To bound the curvature from below, put $r=\norm{u}$ and
$\psi(u)=\tau\sqrt{1+\norm{u}^2}$. Direct differentiation gives
\[
 \nabla^2\psi(u)
 =\tau\left(
 \frac{I_n}{\sqrt{1+r^2}}
 -\frac{uu^{\mathsf T}}{(1+r^2)^{3/2}}\right).
\]
In the radial and orthogonal directions, respectively, this gives
\[
 \begin{aligned}
 \nabla^2\psi(u)u
 &=\frac{\tau\bigl((1+r^2)-r^2\bigr)}{(1+r^2)^{3/2}}u
 =\frac{\tau}{(1+r^2)^{3/2}}u,\\
 \nabla^2\psi(u)v
 &=\frac{\tau}{\sqrt{1+r^2}}v
 \qquad\text{when }v\perp u.
 \end{aligned}
\]
For $u\ne0$, the radial direction $u$ has eigenvalue
$\tau(1+r^2)^{-3/2}$, while directions orthogonal to $u$ have
eigenvalue $\tau(1+r^2)^{-1/2}$. At $u=0$ the eigenvalues are all
$\tau$. Since the Hessian of $g_i$ is positive semidefinite,
\begin{equation}\label{eq:curvature}
 H_i(u)\succeq
 \frac{\tau}{(1+\norm{u}^2)^{3/2}}I_n.
\end{equation}
It follows that $f_i$ is smooth and strictly convex. The curvature
bound is positive for each finite $u$, but is not uniform on $\R^n$.

We next bound the difference from the support function. By Jensen's
inequality, $\E G=0$, and the Lipschitz bound,
\[
 0\leq g_i(u)-h_i(u)
 \leq R_1\E\norm{G}
 \leq R_1\sqrt n.
\]
Also,
\[
 0<\sqrt{1+r^2}-r
 =\frac{1}{\sqrt{1+r^2}+r}\leq1.
\]
Using $h_{Q_i}(u)=h_i(u)+\tau\norm{u}$, we obtain
\begin{equation}\label{eq:recession-bound}
 0\leq f_i(u)-h_{Q_i}(u)\leq R_1\sqrt n+\tau.
\end{equation}

Thus $f_i$ has recession function $h_{Q_i}$. Namely, for any base
point $u_0$ and direction $v$,
\[
 \lim_{s\to\infty}
 \frac{f_i(u_0+sv)-f_i(u_0)}{s}=h_{Q_i}(v).
\]
The bounded discrepancy tends to zero after division by $s$,
and homogeneity and continuity imply
$h_{Q_i}(u_0+sv)/s=h_{Q_i}(v+u_0/s)\to h_{Q_i}(v)$.
In particular the directional linear growth is unchanged by smoothing.
Writing
\[
 F_t(u)=\sum_i t_i f_i(u),\qquad M=R_1\sqrt n+\tau,
\]
we have
\[
 h_{Q_t}(u)\leq F_t(u)
 \leq h_{Q_t}(u)+M\sum_i t_i.
\]

\subsection{The image of the gradient map}

We claim that
\[
 \nabla F_t:\R^n\longrightarrow\operatorname{int}Q_t
\]
is a diffeomorphism. By~\eqref{eq:curvature}, its derivative
$\nabla^2F_t(u)=\sum_i t_iH_i(u)$ is positive definite.

First consider the image. The Lipschitz function $h_i$ is
differentiable almost everywhere. At a differentiability point,
$\nabla h_i(z)$ is a supporting point of $L_i$. The Lipschitz bound
allows differentiation under the expectation, giving
\[
 \nabla g_i(u)=\E\nabla h_i(u+G)\in L_i.
\]
The inclusion follows from closedness and convexity of $L_i$.
Hence
\[
 \nabla f_i(u)=\nabla g_i(u)
 +\frac{\tau u}{\sqrt{1+\norm{u}^2}}
 \in L_i+\operatorname{int}(\tau B_n)
 \subseteq\operatorname{int}Q_i.
\]
A positive weighted sum therefore satisfies
$\nabla F_t(u)\in\operatorname{int}Q_t$.

Conversely, let $x\in\operatorname{int}Q_t$. Choose $\rho>0$ such
that $x+\rho B_n\subseteq Q_t$. Then
\[
 F_t(u)-x\cdot u
 \geq h_{Q_t}(u)-x\cdot u
 \geq \rho\norm{u}.
\]
It follows that $F_t(u)-x\cdot u$ tends to infinity as
$\norm{u}\to\infty$. Continuity and strict convexity give a unique
minimizer $u_x$, and its first-order condition is
$\nabla F_t(u_x)=x$. This proves surjectivity.

For injectivity, suppose $u\ne v$ and set $w=u-v$. We have
\[
 \bigl(\nabla F_t(u)-\nabla F_t(v)\bigr)\cdot w
 =\int_0^1 w^{\mathsf T}\nabla^2F_t(v+sw)w\,ds>0.
\]
The gradients at distinct points are therefore distinct.
The inverse function theorem applies everywhere because the Hessian
is positive definite. The resulting smooth local inverses agree by
bijectivity and define a smooth global inverse.

\subsection{The volume identity}

Apply the change of variables $x=\nabla F_t(u)$, whose Jacobian
determinant is $\det(\nabla^2F_t(u))>0$. The boundary of a convex
body has Lebesgue measure zero, so the volume of its interior is its
volume. Consequently,
\begin{equation}\label{eq:volume-integral}
 \Vol_n\!\left(\sum_i t_iQ_i\right)
 =\int_{\R^n}\det\!\left(\sum_i t_iH_i(u)\right)\,du.
\end{equation}
The nonnegative integrand is integrable on the unbounded domain,
since $Q_t$ has finite volume.

\subsection{Nonnegative coefficients}

Fix $u$ and expand the homogeneous polynomial
\[
 \det\!\left(\sum_i t_iH_i(u)\right)
 =\sum_{|\beta|=n}d_\beta(u)t^\beta.
\]
To prove nonnegativity of $d_\beta(u)$, write
$H_i(u)=B_iB_i^{\mathsf T}$ and concatenate these $n\times n$
matrices as $B=[B_1\ \cdots\ B_k]$. Cauchy--Binet applied to
$[\sqrt{t_1}B_1\ \cdots\ \sqrt{t_k}B_k]$ yields
\[
 \det\!\left(\sum_i t_iH_i(u)\right)
 =\sum_{\substack{J\subseteq\{1,\ldots,kn\}\\|J|=n}}
       \det(B_J)^2\prod_{i=1}^k t_i^{m_i(J)}.
\]
Here $B_J$ is the submatrix with columns indexed by $J$, and
$m_i(J)$ is the number of selected columns in $B_i$. Each
coefficient is thus a sum of squares. The equality for positive $t$
extends to a polynomial identity.

Evaluating at $t=(1,\ldots,1)$ and using nonnegativity, we have
\[
 0\leq d_\beta(u)
 \leq\sum_{|\gamma|=n}d_\gamma(u)
 =\det\!\left(\sum_i H_i(u)\right).
\]
By~\eqref{eq:volume-integral}, the last expression has integral
$\Vol_n(Q)<\infty$, with $Q=\sum_iQ_i$. Every coefficient function
is therefore integrable. Integrating the finite sum term by term gives
\[
 p_\tau(t)=\sum_{|\beta|=n}t^\beta
                  \int_{\R^n}d_\beta(u)\,du.
\]
Since the polynomials agree on the positive orthant, their
coefficients agree. In particular,
\begin{equation}\label{eq:coefficient-integral}
 b_\alpha^\tau
 =\int_{\R^n}[t^\alpha]
       \det\!\left(\sum_i t_iH_i(u)\right)\,du.
\end{equation}
This proves the nonnegative integral representation without
differentiating under an improper integral.

\section{A bounded estimator}
\label{sec:estimator}

\begin{proposition}[Bounded weight and ideal sampling]
\label{prop:estimator}
Let $X\sim\Unif(Q)$ and $u=(\nabla F)^{-1}(X)$. The weight
$W_\alpha$ in~\eqref{eq:weight} takes values in $[0,1]$ and has
expectation $b_\alpha^\tau/\Vol_n(Q)\ge1/(eA)$ for the chosen
regularization. This expectation can be estimated to relative error
$\xi$, with failure probability $\delta_s$, from
$O(A\xi^{-2}\log\delta_s^{-1})$ independent ideal samples.
\end{proposition}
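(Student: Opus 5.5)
The proof has three ingredients: a pointwise bound on $W_\alpha$, a change of variables for its mean, and a multiplicative Chernoff bound. Throughout, $F=\sum_if_i$ and $W_\alpha(x)=d_\alpha(u(x))/\det(\sum_iH_i(u(x)))$, where $d_\beta(u)$ are the coefficients of $\det(\sum_it_iH_i(u))$ from Section~\ref{sec:representation}.

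\emph{Boundedness.} The weight is well defined on $\interior Q$ by Proposition~\ref{prop:representation} with $t=\mathbf 1$. There $\nabla F$ is a diffeomorphism onto $\interior Q$, so $u(x)$ exists and is unique. The denominator $\det(\sum_iH_i(u))$ is positive by \eqref{eq:curvature}. By the Cauchy--Binet expansion, every $d_\beta(u)$ is nonnegative and
\[
 0\le d_\alpha(u)\le\sum_{|\gamma|=n}d_\gamma(u)=\det\!\Bigl(\sum_iH_i(u)\Bigr).
\]
This gives $0\le W_\alpha\le1$ on $\interior Q$. On the null set $\partial Q$ I will assign the value $0$.

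\emph{Expectation.} Write $\E W_\alpha(X)=\Vol_n(Q)^{-1}\int_{\interior Q}W_\alpha(x)\,dx$. Then substitute $x=\nabla F(u)$, whose Jacobian is $\det(\sum_iH_i(u))>0$. The Jacobian cancels the denominator, leaving $\int_{\R^n}d_\alpha(u)\,du$. By \eqref{eq:coefficient-integral} this integral equals $b_\alpha^\tau$. For the lower bound, combine three earlier facts:
\begin{itemize}
\item $b_\alpha^\tau\ge b_\alpha$, by monotonicity in \eqref{eq:regularization};
\item $\Vol_n(Q)\le(1+k\tau)^n\Vol_n(S)$, since $Q=S+k\tau B_n\subseteq(1+k\tau)S$ because $B_n\subseteq S$;
\item $\Vol_n(S)\le Ab_\alpha$, by \eqref{eq:volume-fraction}.
\end{itemize}
Together these give $\E W_\alpha\ge1/(A(1+k\tau)^n)$. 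Since $nk\tau=\varepsilon/(CA)\le1$, we have $(1+k\tau)^n\le e^{nk\tau}\le e$. This yields the constant $1/(eA)$, and indeed $\ge1/(2A)$ in the paper's regime.

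\emph{Sample complexity.} Let $\mu=\E W_\alpha\ge1/(eA)$, and let $\bar W$ be the mean of $N$ i.i.d.\ values $W_\alpha(X_j)\in[0,1]$. The multiplicative Chernoff bound gives
\[
 \Pr\{|\bar W-\mu|>\xi\mu\}\le2\exp(-N\mu\xi^2/3).
\]
Choosing $N=\lceil3eA\xi^{-2}\log(2/\delta_s)\rceil=O(A\xi^{-2}\log\delta_s^{-1})$ makes this at most $\delta_s$. Alternatively, one can use a constant-probability Chebyshev bound, since $\mathrm{Var}\le\mu$, and then take a median of $O(\log\delta_s^{-1})$ blocks.

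\emph{Main obstacle.} The only delicate point is justifying the change of variables on an unbounded domain. This is handled by the diffeomorphism statement and the integrability already established in Proposition~\ref{prop:representation}. Because the integrand is nonnegative, Tonelli-type reasoning applies and no limiting argument is needed.
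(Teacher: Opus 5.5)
Your proposal is correct and follows the paper's proof essentially step for step. It uses the same four ingredients: Cauchy--Binet nonnegativity of the coefficients $d_\beta(u)$ to get $0\le W_\alpha\le1$; the change of variables $x=\nabla F(u)$, whose Jacobian cancels the denominator and yields $b_\alpha^\tau/\Vol_n(Q)$; the chain $b_\alpha^\tau\ge b_\alpha$, $Q\subseteq(1+k\tau)S$, $\Vol_n(S)\le Ab_\alpha$ with $nk\tau\le1$; and the multiplicative Chernoff bound for $[0,1]$-valued variables with $N=\lceil3eA\xi^{-2}\log(2/\delta_s)\rceil$ (the median-of-Chebyshev alternative and the remark on the unbounded change of variables are harmless extras).
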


We prove the proposition for the continuous reference model.
Finite-bit evaluation and the combined error bounds are established
in Sections~\ref{sec:conditioning}--\ref{sec:output}.

To use the integral in Section~\ref{sec:representation}, we express it
as an expectation with respect to a law that can be sampled. The
resulting random variable lies in $[0,1]$, and a lower bound on its
mean will give relative approximation from independent samples.
For now we assume exact uniform sampling, exact function evaluation,
and an exact volume factor. The boundary and numerical errors are
handled in Sections~\ref{sec:conditioning} and~\ref{sec:output}.

\subsection{The sampling map}

Let the sampling body and convex potential be
\[
  Q=\sum_iQ_i,\qquad F(u)=\sum_i f_i(u).
\]
This sets $t_1=\cdots=t_k=1$ in Section~\ref{sec:representation}.
For $x\in\interior Q$, define $u(x)=(\nabla F)^{-1}(x)$.
The Jacobian determinant of $u\mapsto\nabla F(u)$ is
\[
  \mathcal J(u)=\det\nabla^2F(u)
  =\det\!\left(\sum_iH_i(u)\right)>0,
  \qquad dx=\mathcal J(u)\,du.
\]

For coefficient extraction, introduce formal variables
$z=(z_1,\ldots,z_k)$ independently of the fixed sampling weights,
and put
\[
  P_u(z)=\det\!\left(\sum_i z_iH_i(u)\right)
        =\sum_{|\beta|=n}d_\beta(u)z^\beta,
  \qquad d_\alpha(u)=[z^\alpha]P_u(z).
\]
The point $u$ is fixed when coefficients are extracted, and
$P_u(1,\ldots,1)=\mathcal J(u)$.

\subsection{The normalized coefficient}

Define
\begin{equation}\label{eq:weight}
  W_\alpha(x)=
  \frac{[z^\alpha]\det\!\left(\sum_i z_iH_i(u)\right)}
       {\det\!\left(\sum_i H_i(u)\right)}
  =\frac{d_\alpha(u)}{\mathcal J(u)},\qquad u=u(x).
\end{equation}
The denominator is the Jacobian of $x=\nabla F(u)$, for which all
sampling weights are one. It is therefore the value of the determinant
polynomial at $z=(1,\ldots,1)$ and contains no formal variables.

Nonnegativity of $d_\beta(u)$, proved in
Section~\ref{sec:representation}, implies
\[
  \mathcal J(u)=P_u(1,\ldots,1)=\sum_{|\beta|=n}d_\beta(u),
  \qquad 0\le d_\alpha(u)\le\mathcal J(u),
\]
so $0\le W_\alpha(x)\le1$. At a fixed $x$ it is the proportion
of the determinant belonging to the target coefficient. Defining
these weights for each $\beta$ with $|\beta|=n$ gives
$\sum_{|\beta|=n}W_\beta(x)=1$. Since the boundary of $Q$ has
volume zero, the choice of $W_\alpha$ on the boundary does not
change its uniform expectation.

\subsection{Expectation of the weight}

For $X$ uniform on $Q$, the change of variables $x=\nabla F(u)$
gives
\begin{align*}
  \E W_\alpha(X)
  &=\frac1{\Vol_n(Q)}\int_Q W_\alpha(x)\,dx\\
  &=\frac1{\Vol_n(Q)}\int_{\R^n}
       \frac{d_\alpha(u)}{\mathcal J(u)}\mathcal J(u)\,du\\
  &=\frac1{\Vol_n(Q)}\int_{\R^n}d_\alpha(u)\,du
   =\frac{b_\alpha^\tau}{\Vol_n(Q)}.
\end{align*}
The final equality follows from~\eqref{eq:coefficient-integral}.
The normalization in~\eqref{eq:weight} has cancelled the Jacobian.

We can also describe this calculation through the probability density
of $(\nabla F)^{-1}(X)$, which is
\[
  u\longmapsto\frac{\mathcal J(u)}{\Vol_n(Q)}
\]
on $\R^n$ by~\eqref{eq:volume-integral}. The weight is the quotient
of the coefficient density $d_\alpha(u)$ by the volume density
$\mathcal J(u)$.

\subsection{A lower bound for the mean}

The lower bound for the expectation follows from capacity scaling
and the enlargement estimate. As $B_n\subseteq S$, convexity gives
\[
  Q=S+k\tau B_n\subseteq(1+k\tau)S.
\]
By~\eqref{eq:volume-fraction}, the volumes satisfy
\[
  \Vol_n(Q)\le(1+k\tau)^n\Vol_n(S)
             \le(1+k\tau)^n A b_\alpha.
\]
Monotonicity and full dimensionality give
$b_\alpha^\tau\ge b_\alpha>0$. With
$m_\alpha=\E W_\alpha(X)$, it follows that
\[
  m_\alpha\ge\frac1{A(1+k\tau)^n}
  \ge\frac{e^{-nk\tau}}A.
\]
We have $nk\tau=\varepsilon/(CA)\le1$ for the chosen sufficiently
large constant $C$. Therefore
$m_\alpha\ge1/(eA)$, and we have proved
\begin{equation}\label{eq:mean}
  \boxed{\E_{X\sim\Unif(Q)}W_\alpha(X)
  =\frac{b_\alpha^\tau}{\Vol_n(Q)}\ge\frac1{O(A)}.}
\end{equation}
Since $A=2\Gamma_\alpha$, the reciprocal mean is
$O(\Gamma_\alpha)$ for all $k$. It is polynomial in $n$ for fixed
$k$.

\subsection{The number of samples}

Take independent uniform samples $X_1,\ldots,X_N$ from $Q$.
Set $Y_j=W_\alpha(X_j)$ and
\[
  \overline Y_N=\frac1N\sum_{j=1}^NY_j,
  \qquad\E Y_j=m_\alpha.
\]
The bounds $0\le Y_j\le1$ imply $Y_j^2\le Y_j$, whence
\[
  \operatorname{Var}(Y_j)\le m_\alpha(1-m_\alpha)\le m_\alpha,
  \qquad
  \frac{\operatorname{Var}(\overline Y_N)}{m_\alpha^2}
  \le\frac1{Nm_\alpha}.
\]
The relative variance thus depends on $1/m_\alpha$. To bound the
failure probability, apply the multiplicative Chernoff inequality for
independent variables in $[0,1]$. For $0<\xi<1$, it gives
\[
  \Pr\!\left[|\overline Y_N-m_\alpha|\ge\xi m_\alpha\right]
  \le2\exp\!\left(-\frac{Nm_\alpha\xi^2}{3}\right).
\]
This version does not require indicator random variables. Indeed,
for $0\le y\le1$, convexity gives
$e^{\lambda y}\le1+y(e^\lambda-1)$, and hence
$\E e^{\lambda Y_j}\le\exp(m_\alpha(e^\lambda-1))$.
The usual exponential-moment argument then applies.

Consequently, for a sampling failure allowance $0<\delta_s<1$,
it suffices to choose
\[
  N\ge\left\lceil
       3eA\,\xi^{-2}\log\frac2{\delta_s}
      \right\rceil.
\]
The corresponding ideal estimator, using the exact volume, is
\[
  \widetilde b_\alpha^\tau=\Vol_n(Q)\,\overline Y_N,
  \qquad\E\widetilde b_\alpha^\tau=b_\alpha^\tau,
\]
with relative error at most $\xi$ and failure probability at most
$\delta_s$. Taking $\xi$ and $\delta_s$ to be fixed fractions of
$\varepsilon$ and $\delta$, respectively, gives
$N=O(A\varepsilon^{-2}\log\delta^{-1})$.
Sections~\ref{sec:conditioning}--\ref{sec:output} supply the
approximate evaluation, the boundary truncation, and the estimate of
the volume factor.

\section{Boundary estimates and numerical conditioning}
\label{sec:conditioning}

\subsection{An interior sampling region}

We restrict sampling to the smaller body defined by
\[
  \theta=\frac{\varepsilon}{CnA},\qquad D=(1-\theta)Q.
\]
If $Z$ is uniform on $Q$ and $X=(1-\theta)Z$, then $X$ is uniform
on $D$. Every point of $D$ has distance at least $\theta$ from $\partial Q$.
The bound $W_\alpha\le1$ therefore gives
\begin{equation}\label{eq:boundary}
  \int_{Q\setminus D}W_\alpha(x)\,dx
  \le n\theta\Vol_n(Q)=O(\varepsilon)b_\alpha.
\end{equation}
We next bound the inverse image of this region under the gradient map.
The identity $\sum_i h_i=h_S$ and the inclusion $S\subseteq R_1B_n$
imply $h_S(G)\le R_1\norm G$. It follows that
\[
  F(0)=\sum_i\E h_i(G)+k\tau
      =\E h_S(G)+k\tau
      \le R_1\sqrt n+k\tau.
\]
Thus the Gaussian term has no additional factor $k$.
For $x\in D$, let $u_*=(\nabla F)^{-1}(x)$. We have
\[
  \theta\norm{u_*}\le F(u_*)-x\cdot u_*
  \le F(0)\le R_1\sqrt n+k\tau.
\]
Hence
\begin{equation}\label{eq:radius}
  \norm{u_*}\le U:=\frac{R_1\sqrt n+k\tau}{\theta}
  =\poly(n,k,A,\varepsilon^{-1}).
\end{equation}
The rational samples will be evaluated in the larger region
$E=(1-\theta/2)Q$. Since this region is at distance at least $\theta/2$
from $\partial Q$, its inverse image lies in the ball of radius $2U$ by
the same argument. Taking $R_1$ to be the rational outer bound supplied
by rounding, define
\[
 \overline U=\frac{R_1n+k\tau}{\theta}\ge U,\qquad
 B=4\overline U,\qquad
 \mu=\frac{\tau}{(1+B^2)^2}.
\]
All three constants are rational. We have $\norm{u(x)}\le B/2$ for every $x\in E$,
and~\eqref{eq:curvature} implies $H_i(u)\succeq\mu I_n$ on $BB_n$.
Replacing the exponent by $2$ gives a rational lower bound for the expression
with exponent $3/2$. The quantities $B$ and $\mu^{-1}$ are bounded by polynomials
of fixed degree in $n,k,A,\varepsilon^{-1}$.
We will also use $D+(\theta/4)B_n\subseteq E$. This follows from
$B_n\subseteq Q$ and the fact that positive scalar multiples of the convex
body $Q$ add by adding their scalars. In particular, rounding a sample
within this distance keeps it in the evaluation region.

\subsection{Rational approximation of the Hessians}
\label{sec:numerical-evaluation}

\begin{lemma}[Numerical evaluation]
\label{lem:numerical-evaluation}
Let $x\in E=(1-\theta/2)Q$ be rational, and let
$0<\eta\le1/2$ and $0<\zeta_{\mathrm{eval}}<1/2$ be rational.
There is a randomized subroutine returning rational positive-definite matrices
$\widehat H_i$ for which~\eqref{eq:matrix-accuracy} holds with probability
at least $1-\zeta_{\mathrm{eval}}$, where
$u_*=(\nabla F)^{-1}(x)$.
The number of oracle calls and bit operations, and the query and output
bit lengths, are bounded by a polynomial of absolute degree in $n,k,A$,
$1+\log R_0$, $\varepsilon^{-1}$, $\eta^{-1}$,
$1+\log\zeta_{\mathrm{eval}}^{-1}$, and the bit length of $x$.
These cost bounds hold on every execution, whether or not the accuracy
claim holds.
\end{lemma}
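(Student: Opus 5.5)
The plan is to reduce the lemma to two Monte Carlo primitives built on approximate support values, and then to convert absolute matrix errors into the relative inequalities~\eqref{eq:matrix-accuracy} by means of the uniform lower bound $H_i\succeq\mu I_n$ on $BB_n$. Throughout, accuracy $\eta$ in~\eqref{eq:matrix-accuracy} means
\[
(1-\eta)H_i(u_*)\preceq\widehat H_i\preceq(1+\eta)H_i(u_*).
\]
First, using the presentation data of Section~\ref{sec:scaling} (the inner ball $B(-\zeta_i,r_i)$ of $L_i$, with $\log r_i^{-1}$ polynomially bounded, and $L_i\subseteq R_1B_n$), weak linear optimization from~\cite{GLS} returns rational $\widehat h_i(v)$ with $|\widehat h_i(v)-h_i(v)|\le\beta$. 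Each call uses polynomially many original membership queries and costs polynomial in $\log\beta^{-1}$.

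Differentiating the Gaussian density gives the two estimators
\[
\nabla g_i(u)=\E\bigl[G\,(h_i(u+G)-h_i(u))\bigr],\qquad
\nabla^2 g_i(u)=\E\bigl[(GG^{\mathsf T}-I_n)(h_i(u+G)-h_i(u))\bigr].
\]
The square-root term is differentiated in closed form, after rational rounding of $\sqrt{1+\norm u^2}$. Since $|h_i(u+G)-h_i(u)|\le R_1\norm G$, both integrands have polynomially bounded second moments, independently of $u$. I will proceed in three steps.
\begin{enumerate}
\item Truncate $G$ at radius $\sqrt n+O(\sqrt{\log(1/\beta)})$ and quantize it to a dyadic grid. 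The truncation bias is exponentially small, and the quantization bias is controlled by the Lipschitz bound.
\item Average $N_0$ samples, with $N_0$ a polynomial chosen so that the variance bound gives accuracy $\beta'$ with probability at least $3/4$.
\item Take an entrywise median over $O(\log\zeta^{-1})$ blocks to reach failure probability $\zeta$.
\end{enumerate}
The sample counts are fixed in advance, so the cost bound holds on every execution.

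Next, I approximate $u_*$ by projected inexact gradient descent on $\Psi(u)=F(u)-x\cdot u$ over the rational box $[-B/2,B/2]^n$, which contains $u_*$ by~\eqref{eq:radius} and the discussion after it. On $BB_n$, $\Psi$ is $k\mu$-strongly convex by~\eqref{eq:curvature}. It is $L$-smooth with
\[
L\le k\bigl(\E\norm{GG^{\mathsf T}-I}\,R_1\norm G+\tau\bigr)=\poly(n,k,R_1),
\]
so the condition number $L/(k\mu)$ is polynomial. A fixed number $T=O((L/\mu)\log(B/\gamma))$ of steps with step size $1/L$ contracts the distance to $u_*$ geometrically. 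Gradient errors $e$ contribute an additive term $O(e/(k\mu))$, so choosing $e$ suitably small gives $\norm{\widehat u-u_*}\le\gamma$. Coordinatewise clipping and rational rounding keep the bit lengths bounded.

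Then I bound the third derivative of $g_i$ by the analogous Gaussian formula with a cubic Hermite tensor; this gives a Lipschitz constant $\Lambda=O(R_1n^{2})$ for $H_i$. Hence $\norm{H_i(\widehat u)-H_i(u_*)}_{\op}\le(\Lambda+O(\tau))\gamma$. I estimate $H_i(\widehat u)$ to operator error $\eta\mu/4$, symmetrize, and round. Choosing $\gamma$ so that $(\Lambda+1)\gamma\le\eta\mu/4$ makes the total error at most $\eta\mu/2\le(\eta/2)\lambda_{\min}(H_i(u_*))$. This gives~\eqref{eq:matrix-accuracy}, and positive definiteness follows because $\eta\le1/2$. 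A union bound over the $T$ gradient evaluations and the $k$ Hessian evaluations, each assigned failure probability $\zeta_{\mathrm{eval}}/(T+k)$, finishes the argument. On failure the outputs are replaced by a fixed matrix so that all bit lengths stay bounded.

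The main obstacle is the descent step. Inexact gradients must not drive the iterates out of the region where $\mu$ is valid. I will handle this by projecting onto the box, which lies inside $BB_n$, and by checking that the support-value errors $\beta$ enter the gradient estimator only through the bounded factor $\E\norm G$, so that $\beta$ can be chosen polynomially small without any dependence on the unknown position of $u_*$.
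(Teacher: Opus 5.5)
Your plan follows the paper's proof closely. Both arguments use weak optimization for support values, the centered Gaussian score formulas with truncation and quantization, and projected inexact gradient descent on $F(u)-x\cdot u$, with strong convexity coming from the $\tau\sqrt{1+\norm u^2}$ term. Both then use a Hessian Lipschitz bound to transfer the inverse-point error, convert an absolute error of order $\eta\mu$ into \eqref{eq:matrix-accuracy} via $H_i(u_*)\succeq\mu I_n$, and finish with a union bound. Your variations are fine:
\begin{itemize}
\item You use medians of block means where the paper applies Hoeffding to the bounded truncated scores.
\item You bound the Lipschitz constant through a third-derivative Hermite tensor. The paper instead uses the two-point identity $\nabla^2g_i(u)-\nabla^2g_i(v)=\E[(GG^{\mathsf T}-I_n)(h_i(u+G)-h_i(v+G))]$, which gives $L_H=R_1(n+1)+6\tau$ more directly.
\end{itemize}

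One claim is false as stated. The box $[-B/2,B/2]^n$ does not lie in $BB_n$ once $n\ge5$, because its corners are at distance $B\sqrt n/2$. So the strong-convexity constant $k\mu$ is not justified along your iterates. The repair is easy; use either of the following.
\begin{itemize}
\item Apply \eqref{eq:curvature} on the ball of radius $B\sqrt n/2$. This gives the contraction constant $k\tau(1+nB^2/4)^{-3/2}$, which is still polynomial. The final conversion still uses $\mu$ at $u_*$, which lies in $(B/2)B_n$.
\item Project onto the Euclidean ball, as the paper does. Compute $\min\{1,B/\norm v\}$ from below and carry the projection error in the contraction recursion.
\end{itemize}

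Two smaller points also need fixing.
\begin{itemize}
\item The algorithm cannot observe the failure event, so ``replace by a fixed matrix on failure'' needs an observable trigger. Test each $\widehat H_i$ for positive definiteness by exact rational $LDL^{\mathsf T}$ elimination and replace any failure by $\mu I_n$. This is what makes the output positive definite on every execution. Bit lengths are already bounded by your fixed sample counts and rounding, so they are not the reason for the fallback.
\item ``Quantize $G$ to a dyadic grid'' presupposes a continuous sample. To keep a deterministic bound on random bits, you need an explicit generator coupled spatially to $G$. The paper applies the clipped normal quantile to midpoint uniforms; this map is Lipschitz because the normal density is bounded below on the truncation interval.
\end{itemize}
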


\begin{proof}
We first approximate derivatives at a rational point, then use these
approximations to invert the gradient map. Throughout the proof, support
values are computed to prescribed absolute accuracy by the deterministic
weak-optimization reductions of Section~\ref{sec:model}. The inner balls
needed for these reductions were obtained in Section~\ref{sec:scaling}.

\paragraph{Uniform derivative bounds.}
For $g_i(u)=\E h_i(u+G)$, Gaussian differentiation and centering give
\[
 \nabla g_i(u)=\E\!\left[G\bigl(h_i(u+G)-h_i(u)\bigr)\right]
\]
and
\begin{equation}\label{eq:hessian-estimator}
 \nabla^2g_i(u)=
 \E\!\left[(GG^\top-I_n)\bigl(h_i(u+G)-h_i(u)\bigr)\right].
\end{equation}
The centering terms vanish since $\E G=0$ and $\E(GG^\top-I_n)=0$.
Let $V_u(G)=G(h_i(u+G)-h_i(u))$ and
$A_u(G)=(GG^\top-I_n)(h_i(u+G)-h_i(u))$. The $R_1$-Lipschitz property of $h_i$ gives
the following bounds uniformly in $u$:
\[
 \E\norm{V_u(G)}^2\le R_1^2n(n+2),\qquad
 \E\norm{A_u(G)}_F^2\le R_1^2n(n+1)(n+4).
\]
Here $\norm{\cdot}_F$ is the Frobenius norm. The second estimate follows by
substituting $\norm{GG^\top-I_n}_F^2=\norm G^4-2\norm G^2+n$ and using the
second, fourth, and sixth moments of $\norm G$.
Applying the uncentered Gaussian Hessian identity at two points gives
\begin{equation}\label{eq:hessian-lipschitz}
 \norm{H_i(u)-H_i(v)}_{\op}\le L_H\norm{u-v},
 \qquad L_H=R_1(n+1)+6\tau.
\end{equation}
For this estimate, use $\E\norm{GG^\top-I_n}_{\op}\le n+1$ and the fact
that the derivative of the Hessian of $\sqrt{1+\norm u^2}$ has operator
norm at most $6$. The moment estimate also gives, for all points,
$\norm{H_i(u)}_{\op}\le B_H=R_1(n+4)^2+\tau$.

\paragraph{A finite-bit derivative primitive.}
Fix rational $u$ and $0<s\le1$. We construct an approximation to
$\nabla f_i(u)$ or $H_i(u)$ whose Euclidean or Frobenius error, respectively,
is at most $s$, except with probability $\rho$. The cost will be polynomial
in the query length, $n$, $R_1$, $s^{-1}$, and $\log\rho^{-1}$.

Let $J_0=4R_1(n+4)^2$ and choose an integer $M\ge1$ such that
\[
 2n e^{-M^2/2}\le \left(\frac{s}{16J_0}\right)^2.
\]
For example, $M=2\lceil\log_2(64n(J_0+1)/s)\rceil$ suffices.
Replace each coordinate of $G$ outside $[-M,M]$ by the nearer endpoint,
and denote the resulting vector by $G^{\mathrm c}$.
This changes the vector with probability at most $2ne^{-M^2/2}$.
By Cauchy--Schwarz, the resulting bias in either score is at most
$J_0$ times the square root of this probability, and hence at most $s/16$.
The required bound for the clipped score follows from
$\norm{G^{\mathrm c}}\le\norm G$ and
\[
 \norm{V_u(g)}\le R_1\norm g^2,\qquad
 \norm{A_u(g)}_F\le R_1(\norm g^2+\sqrt n)\norm g.
\]
The same Gaussian moments bound the sum of the two $L^2$ norms,
before and after clipping, by $J_0$.

Put
\[
 T_G=n(M+1)+1,\qquad
 L_G=R_1(3T_G^2+n+2T_G+1),\qquad
 \omega=\min\{1,s/(16L_G)\}.
\]
On the radius-$T_G$ ball, the gradient score is $2R_1T_G$-Lipschitz
and the Hessian score is $R_1(3T_G^2+n)$-Lipschitz in Frobenius norm.
It follows that replacing the clipped Gaussian by a rational $\widehat G$
with a coupling satisfying $\norm{\widehat G-G^{\mathrm c}}\le\omega$
changes either expectation by at most $s/16$.

To generate this vector in bounded time, apply the clipped standard-normal
quantile function to independent $b$-bit midpoint uniforms, taking
\[
 b\ge M^2+2+\left\lceil\log_2(4n/\omega)\right\rceil,
\]
and compute each quantile to absolute error at most $\omega/(2n)$.
The clipped quantile map is $2^{M^2+2}$-Lipschitz, since the normal
density on $[-M,M]$ is bounded below by $2^{-M^2-2}$.
Coupling each midpoint uniform to its continuous uniform gives the
claimed spatial approximation for $\widehat G$.
To compute the normal CDF on $[-M,M]$, integrate the Taylor expansion
of $e^{-t^2/2}$. Accuracy $a$ uses $O(M^2+\log a^{-1})$ terms and
arithmetic of polynomial bit length. Bisection using CDF error at most
$\omega 2^{-M^2-2}/(32n)$ then gives the required quantile accuracy.
When a comparison cannot be resolved, the lower bound for the density
implies that the spatial error is already small enough to terminate.
The constants $\pi$ and the square roots are computed to the same
polynomial precision. We have thus used a bounded number of fair bits
and a spatial coupling, without requiring a total-variation comparison
between discrete and continuous Gaussian laws.

For every score, approximate each of $h_i(u+\widehat G)$ and $h_i(u)$
with absolute error at most
\[
 e_h=\frac{s}{32(T_G^2+T_G+n+1)}.
\]
The error in either score is then at most $s/16$ on every sample.
It remains to bound the deviation of the empirical scores with exact
support values. Every coordinate of such a score at $\widehat G$ is
bounded in absolute value by $J=R_1(T_G+1)(T_G^2+n+1)$. Take
\[
 q\ge \frac{32n^2J^2}{s^2}\log\frac{2n^2}{\rho}
\]
independent samples. Hoeffding's inequality, followed by a union bound
over at most $n^2$ coordinates, gives coordinate error at most $s/(4n)$.
Thus the Euclidean or Frobenius error is at most $s/4$, except with
probability $\rho$. We implement this choice using integer upper bounds
for $q$ and for the logarithmic quantities.

We add the gradient or Hessian of $\tau\sqrt{1+\norm u^2}$, computed
with norm error at most $s/16$, and round the entries to dyadic rationals
with norm error at most $s/16$. Symmetric entries receive the same value.
Summing the errors from clipping, quantization, support evaluation,
sampling, deterministic evaluation, and rounding gives
$s/16+s/16+s/16+s/4+s/16+s/16<s$.
The construction is uniform in $u$. Its random-string lengths, numbers
of iterations, query lengths, and output lengths have polynomial bounds
on all executions, including failed ones.

\paragraph{Approximation of the inverse point.}
Let $B,\mu$ be the rational constants of Section~\ref{sec:conditioning},
and set $m=k\mu$ and $L=kB_H$.
The function $\varphi_x(u)=F(u)-x\cdot u$ on the radius-$B$ ball satisfies
\[
 mI_n\preceq\nabla^2\varphi_x(u)\preceq LI_n,
 \qquad \norm{u_*}\le B/2.
\]
Set
\[
 d=\frac{\eta\mu}{8L_H},\qquad
 e=\frac{md}{4},\qquad p=\frac{md}{4L},\qquad
 J_{\rm iter}=\left\lceil\frac Lm\log\frac{2B}{d}\right\rceil.
\]
We may use an integer upper bound for this iteration count.
Start at $u_0=0$. At each iteration approximate the gradients
$\nabla f_i(u_j)$ to tolerance $e/k$, sum them, and subtract the rational
vector $x$. After a step of length $1/L$, approximate the projection
onto $BB_n$ by a rational point in the ball with error at most $p$.
For a rational tentative point $v$, compute
$\min\{1,B/\norm v\}$ from below and multiply by $v$; when $v=0$,
no approximation is needed. Rational norm comparisons and square-root
approximation achieve this accuracy with polynomially many bits.

Condition on the derivative approximations being accurate. The exact
gradient step is a contraction, and projection is nonexpansive. Hence
\[
 \norm{u_{j+1}-u_*}
 \le(1-m/L)\norm{u_j-u_*}+e/L+p.
\]
After the prescribed number of iterations, the initial error contributes
at most $d/4$, while the perturbations contribute at most
$e/m+pL/m=d/2$. We therefore obtain $\norm{u_{J_{\rm iter}}-u_*}\le d$.
The derivative primitive uses dyadic rounding. We also take the rational
projection scales to be dyadic. Their bit lengths are bounded by the
tolerance and the magnitude bounds, independently of the iteration.
It follows that each step increases the denominator length by at most
a polynomial amount. Since there are polynomially many steps, all
query lengths remain polynomial; a common fixed grid for the iterates
is unnecessary.

\paragraph{Matrix accuracy and failure probability.}
Apply the derivative primitive at the last iterate to approximate each
$H_i$ in Frobenius norm with tolerance $s=\eta\mu/4$.
On the event that these approximations are accurate,
\[
 \norm{\widehat H_i-H_i(u_*)}_{\op}
 \le \eta\mu/4+L_Hd=3\eta\mu/8<\eta\mu.
\]
Since $H_i(u_*)\succeq\mu I_n$, this implies
\begin{equation}\label{eq:matrix-accuracy}
 (1-\eta)H_i(u_*)\preceq\widehat H_i\preceq(1+\eta)H_i(u_*).
\end{equation}
For each call, set its failure probability to
$\rho=\zeta_{\mathrm{eval}}/[k(J_{\rm iter}+1)]$.
There are $kJ_{\rm iter}$ gradient calls and $k$ final Hessian calls,
each using fresh random bits. The current iterate always lies in $BB_n$,
and the estimates for the primitive are uniform. Its guarantee therefore
holds conditional on any preceding history. The union bound gives
failure probability at most $\zeta_{\mathrm{eval}}$.

Test the resulting matrices for positive definiteness using exact rational
$LDL^\top$ elimination, and replace any matrix that fails this test by
$\mu I_n$. On the successful event there are no replacements. On all
other executions, the outputs are still positive definite and have
polynomial bit length. In particular, their normalized determinant
coefficient is defined and lies in $[0,1]$.
The construction has polynomial bounds in the parameters of the lemma.
Dependence on the inverse failure probability occurs only logarithmically,
through the sample counts. This proves the two assertions.
\end{proof}

\paragraph{Perturbation of the normalized weight.}
We next apply mixed-discriminant monotonicity. Under the replacement of
$H_i$ by $\widehat H_i$, the ratio~\eqref{eq:weight} changes in either
direction by a factor of at most
\begin{equation}\label{eq:ratio-accuracy}
 \left(\frac{1+\eta}{1-\eta}\right)^n=1+O(\varepsilon)
 \quad\text{when }\eta=O(\varepsilon/n).
\end{equation}
To see monotonicity, increase one argument by a positive-semidefinite
matrix. By multilinearity, the change is the mixed discriminant with
this increment as the corresponding argument. It is nonnegative by
Section~\ref{sec:representation}.
Applying this fact to the $n$ repeated arguments bounds the numerator
between $(1-\eta)^n$ and $(1+\eta)^n$ times its original value.
The same bounds hold for the denominator, giving~\eqref{eq:ratio-accuracy}.

\begin{lemma}[Lipschitz continuity of the normalized weight]
\label{lem:weight-lipschitz}
Let $E=(1-\theta/2)Q$, and let $\mu$ and $L_H$ be the constants from
Section~\ref{sec:conditioning} and~\eqref{eq:hessian-lipschitz}.
For every $\alpha\in\mathbb Z_{\ge0}^k$ with $|\alpha|=n$, the normalized weight
$W_\alpha$ in~\eqref{eq:weight} is smooth and strictly positive on
$\interior Q$. On $E$ it satisfies
\begin{equation}\label{eq:weight-lipschitz}
 \bigl|W_\alpha(x)-W_\alpha(y)\bigr|
 \le L_W\norm{x-y},\qquad
 L_W:=\frac{2nL_H}{k\mu^2}=\poly(n,k,A,\varepsilon^{-1}).
\end{equation}
In fact, the same bound holds for
$|\log W_\alpha(x)-\log W_\alpha(y)|$.
Consequently, for any two probability measures $\nu,\nu'$ supported in $E$,
\begin{equation}\label{eq:weight-wasserstein}
 \left|\int_E W_\alpha\,d\nu-\int_E W_\alpha\,d\nu'\right|
 \le L_W\mathsf W_1(\nu,\nu'),
\end{equation}
where $\mathsf W_1$ denotes the Wasserstein distance with Euclidean cost:
\[
 \mathsf W_1(\nu,\nu')
 =\inf_{\pi\in\Pi(\nu,\nu')}
   \int_{E\times E}\norm{x-y}\,d\pi(x,y),
\]
and $\Pi(\nu,\nu')$ is the set of couplings of the two measures.
In particular,~\eqref{eq:weight-wasserstein} applies with
$\nu'=\Unif(D)$, since $D=(1-\theta)Q\subseteq E$.
\end{lemma}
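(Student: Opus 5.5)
The plan is to reduce everything to the relative matrix perturbation argument already used for~\eqref{eq:ratio-accuracy}, combined with a bound on the derivative of the inverse gradient map on $E$.

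\emph{Smoothness and positivity.} By Proposition~\ref{prop:representation}, $u(x)=(\nabla F)^{-1}(x)$ is a smooth map on $\interior Q$. The Hessians $H_i$ are smooth in $u$. The numerator $d_\alpha(u)$ and the denominator $\mathcal J(u)$ are polynomials in the entries of the $H_i(u)$, and $\mathcal J>0$, so $W_\alpha$ is smooth on $\interior Q$.

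For strict positivity, fix $u$ and take $c>0$ with $H_i(u)\succeq cI_n$ for all $i$; such $c$ exists by~\eqref{eq:curvature}. Monotonicity of the coefficient in each argument, proved via the Cauchy--Binet sum of squares in Section~\ref{sec:representation}, gives
\[
 d_\alpha(u)\ge[z^\alpha]\det\!\Bigl(\sum_i z_icI_n\Bigr)=c^n\binom n\alpha>0 .
\]

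\emph{Conditioning of the inverse map on $E$.} The body $E$ is convex, so the segment $[x,y]$ lies in $E$. By Section~\ref{sec:conditioning}, its image under $u(\cdot)$ lies in $(B/2)B_n$, where every $H_i\succeq\mu I_n$. Hence $\nabla^2F=\sum_iH_i\succeq k\mu I_n$ there. Since $Du(x)=(\nabla^2F(u(x)))^{-1}$, we get $\norm{Du}_{\op}\le1/(k\mu)$. Therefore $\norm{u(x')-u(x'')}\le\norm{x'-x''}/(k\mu)$ for all $x',x''$ on the segment.

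\emph{Relative perturbation of the weight.} Subdivide $[x,y]$ into points $x_0,\dots,x_N$ with consecutive images $u_j=u(x_j)$ satisfying $\norm{u_{j+1}-u_j}\le\norm{x_{j+1}-x_j}/(k\mu)$. By~\eqref{eq:hessian-lipschitz}, $\norm{H_i(u_{j+1})-H_i(u_j)}_{\op}\le L_H\norm{u_{j+1}-u_j}=:\mu r_j$. Combined with $H_i(u_j)\succeq\mu I_n$, this gives
\[
 (1-r_j)H_i(u_j)\preceq H_i(u_{j+1})\preceq(1+r_j)H_i(u_j).
\]
The monotonicity argument behind~\eqref{eq:ratio-accuracy} then yields
\[
 |\log W_\alpha(x_{j+1})-\log W_\alpha(x_j)|\le n\log\frac{1+r_j}{1-r_j}=2nr_j+O(r_j^2).
\]
Summing over $j$ and letting the mesh tend to zero, the quadratic terms vanish. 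This gives
\[
 |\log W_\alpha(x)-\log W_\alpha(y)|\le\frac{2nL_H}{\mu}\cdot\frac{\norm{x-y}}{k\mu}=L_W\norm{x-y}.
\]
Alternatively, one can bound the derivative of $\log W_\alpha$ directly along the segment.

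Since $0<W_\alpha\le1$, we have $|dW_\alpha|=W_\alpha|d\log W_\alpha|\le|d\log W_\alpha|$, so the same bound holds for $W_\alpha$ itself. Polynomiality of $L_W$ follows from the bounds on $L_H$ and on $\mu^{-1}$ in Section~\ref{sec:conditioning}.

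\emph{Wasserstein bound.} For any coupling $\pi\in\Pi(\nu,\nu')$,
\[
 \Bigl|\int W_\alpha\,d\nu-\int W_\alpha\,d\nu'\Bigr|
 =\Bigl|\int\bigl(W_\alpha(x)-W_\alpha(y)\bigr)\,d\pi\Bigr|
 \le L_W\int\norm{x-y}\,d\pi .
\]
Taking the infimum over $\pi$ gives~\eqref{eq:weight-wasserstein}. The inclusion $D\subseteq E$ is immediate from $\theta/2<\theta$.

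\emph{Main obstacle.} The delicate step is turning the finite relative inequality, with its $(1-r)^{-1}$ denominators, into a clean linear constant. The fine subdivision handles this. One must also ensure the whole segment stays in the region where $\mu$ is a valid lower bound, which is exactly why convexity of $E$ and the radius bound $B/2$ are used.
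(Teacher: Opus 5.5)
Your proof is correct and uses the same ingredients as the paper: the bound $\norm{\dot u}\le\norm{v}/(k\mu)$ from $\sum_iH_i\succeq k\mu I_n$ on the radius-$B/2$ ball, the conversion of the Hessian Lipschitz bound into relative matrix bounds via $\mu$, and mixed-discriminant monotonicity, which costs a factor $n$ for the numerator and again for the denominator. It then uses $0<W_\alpha\le1$ and integration against a coupling, as the paper does. The only difference is that you telescope the finite relative inequalities over a fine subdivision and let the mesh tend to zero, whereas the paper differentiates $\log N(s)$ and $\log J(s)$ directly using $-cM_i\preceq\dot M_i\preceq cM_i$, which avoids the $(1-r_j)^{-1}$ bookkeeping.
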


\begin{proof}
The inverse map $u=(\nabla F)^{-1}$ is smooth on $\interior Q$.
Positive definiteness of the $H_i(u)$ implies $d_\alpha(u)>0$ by
mixed-discriminant monotonicity. Indeed, for fixed $u$, take $a>0$ such
that $H_i(u)\succeq aI_n$ for every $i$. Comparison with
$\det(\sum_i z_i aI_n)=a^n(\sum_i z_i)^n$ proves the assertion.
Since $\mathcal J(u)$ is positive as well, $W_\alpha$ is smooth and
strictly positive. Section~\ref{sec:estimator} gives $W_\alpha\le1$.

Let $x,y\in E$, $v=y-x$, $x_s=x+sv$, and $u_s=u(x_s)$,
where $0\le s\le1$. By convexity of $E$ and the estimates of
Section~\ref{sec:conditioning}, $\norm{u_s}\le B/2$ and
$H_i(u_s)\succeq\mu I_n$. Differentiation of $\nabla F(u_s)=x_s$ gives
\[
 \dot u_s=\left(\sum_i H_i(u_s)\right)^{-1}v,
 \qquad \norm{\dot u_s}\le\frac{\norm v}{k\mu}.
\]
Write $M_i(s)=H_i(u_s)$ and $c=L_H\norm v/(k\mu^2)$.
The Hessian Lipschitz bound~\eqref{eq:hessian-lipschitz} gives
\[
 \norm{\dot M_i(s)}_{\op}\le\frac{L_H\norm v}{k\mu}=c\mu,
 \qquad -cM_i(s)\preceq\dot M_i(s)\preceq cM_i(s).
\]
Set $N(s)=[z^\alpha]\det(\sum_i z_iM_i(s))$ and
$J(s)=\det(\sum_iM_i(s))$.
The function $N$ is a positive constant times a mixed discriminant with
$n$ arguments, in which $M_i$ occurs $\alpha_i$ times.
Differentiating gives $n$ terms, one for each differentiated argument.
Multilinearity and positivity, together with the preceding spectral
bounds, show that each term has absolute value at most $cN$.
For $J$ the same reasoning applies because
$-c\sum_iM_i\preceq\sum_i\dot M_i\preceq c\sum_iM_i$.
It follows that
\begin{align*}
 \left|\frac{d}{ds}\log N(s)\right|&\le nc,\qquad
 \left|\frac{d}{ds}\log J(s)\right|\le nc,\\
 \left|\frac{d}{ds}\log W_\alpha(x_s)\right|
 &\le2nc=L_W\norm v.
\end{align*}
Integration gives the logarithmic estimate. Using $0<W_\alpha\le1$ in
$\frac{d}{ds}W_\alpha=W_\alpha\frac{d}{ds}\log W_\alpha$ also proves
\eqref{eq:weight-lipschitz}.
Now let $\pi$ be any coupling of $\nu$ and $\nu'$.
Integrating the Lipschitz inequality against this coupling yields
\[
 \left|\int_E W_\alpha\,d\nu-\int_E W_\alpha\,d\nu'\right|
 \le L_W\int_{E\times E}\norm{x-y}\,d\pi(x,y).
\]
Taking the infimum over couplings proves~\eqref{eq:weight-wasserstein}.
\end{proof}

\subsection{Exact coefficient computation}
\label{sec:determinant-arithmetic}
\label{sec:exact-coefficient}

We compute the determinant coefficient by mixed finite differences,
with a complexity bound uniform in $k$. The number of evaluations is $\prod_i(\alpha_i+1)$, which we bound
polynomially in the multiplicity-sensitive Gurvits ratio
\[
 \Gamma_\alpha=\frac{n^n}{n!}
                  \prod_{i=1}^k\frac{\alpha_i!}{\alpha_i^{\alpha_i}}.
\]
As before, we discard bodies of zero multiplicity. Thus $1\le k\le n$
and $\alpha_i\ge1$.

\begin{lemma}[Size of the coefficient grid]
\label{lem:coefficient-grid}
For positive integers $\alpha_1,\ldots,\alpha_k$ with sum $n$, put
$\mathcal N_\alpha=\prod_i(\alpha_i+1)$.  Then
\begin{equation}\label{eq:coefficient-grid-bound}
 \mathcal N_\alpha\le
  2\pi n\exp\!\left(\frac1{6n}\right)\Gamma_\alpha^2
  <8n\Gamma_\alpha^2,
 \qquad \mathcal N_\alpha\le2^n.
\end{equation}
The positive rational number $\Gamma_\alpha$ can be computed exactly with
numerator and denominator of $O(n\log(n+1))$ bits.
\end{lemma}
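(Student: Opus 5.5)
The plan is to write $\Gamma_\alpha$ as a ratio of Stirling correction factors and then bound each factor $\alpha_i+1$ by the square of its own correction factor. Put $s_m=m!e^m/m^m$ for $m\ge1$. Since $\sum_i\alpha_i=n$, the exponentials cancel:
\[
 \Gamma_\alpha=\frac{n^n}{n!}\prod_i\frac{\alpha_i!}{\alpha_i^{\alpha_i}}
 =\frac{e^n\prod_i\alpha_i!/\alpha_i^{\alpha_i}}{e^n\,n!/n^n}
 =\frac{\prod_i s_{\alpha_i}}{s_n}.
\]
The two-sided Stirling bounds $\sqrt{2\pi m}\le s_m\le\sqrt{2\pi m}\,e^{1/(12m)}$ then give everything else.

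For the lower bound we use $s_m^2\ge2\pi m\ge m+1$ for all $m\ge1$, since $2\pi m-m-1=(2\pi-1)m-1>0$. Hence
\[
 \mathcal N_\alpha=\prod_i(\alpha_i+1)\le\prod_i s_{\alpha_i}^2=\Gamma_\alpha^2s_n^2\le\Gamma_\alpha^2\,2\pi n\,e^{1/(6n)},
\]
which is the first inequality. For the strict bound, $2\pi e^{1/(6n)}\le2\pi e^{1/6}\approx7.42<8$ for $n\ge1$. The bound $\mathcal N_\alpha\le2^n$ follows from $\alpha_i+1\le2^{\alpha_i}$ for integers $\alpha_i\ge1$, multiplied over $i$.

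For exact computation, form the numerator $n^n\prod_i\alpha_i!$ and the denominator $n!\prod_i\alpha_i^{\alpha_i}$ as integers by repeated multiplication. The numerator is at most $n^n\cdot n!\le n^{2n}$. The denominator is at most $n!\,n^{n}\le n^{2n}$, because $\prod_i\alpha_i^{\alpha_i}\le\prod_i n^{\alpha_i}=n^n$. So both have $O(n\log(n+1))$ bits. There are $O(n)$ multiplications of integers of this length, so the cost is polynomial. Reduction by a gcd is optional, since the stated bit bound already holds.

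The only delicate point is the direction of the inequalities. The lower Stirling bound is used for the $\alpha_i$, where we need $m+1\le s_m^2$, and the upper bound only for $s_n$. The small-$m$ check $s_1^2=e^2\ge2$ is covered by $2\pi m\ge m+1$.
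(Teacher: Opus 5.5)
Your proof is correct and follows the paper's argument essentially verbatim: you write $\Gamma_\alpha=\prod_i s_{\alpha_i}/s_n$ with $s_m=m!e^m/m^m$, bound $\alpha_i+1\le 2\pi\alpha_i\le s_{\alpha_i}^2$ by the lower Stirling bound and $s_n^2$ by the upper one, and use $m+1\le 2^m$ and the explicit fraction $n^n\prod_i\alpha_i!/(n!\prod_i\alpha_i^{\alpha_i})$ for the remaining claims. Your explicit estimates $\prod_i\alpha_i!\le n!$ and $\prod_i\alpha_i^{\alpha_i}\le n^n$ spell out the bit-length bound that the paper states without detail.
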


\begin{proof}
Write $s_m=m!e^m/m^m$.  The usual two-sided Stirling bounds give
\[
 \sqrt{2\pi m}\le s_m
 \le\sqrt{2\pi m}\exp\!\left(\frac1{12m}\right),\qquad m\ge1.
\]
Since $m+1\le2\pi m\le s_m^2$ and
$\Gamma_\alpha=(\prod_i s_{\alpha_i})/s_n$, we obtain
\[
 \mathcal N_\alpha\le\prod_i s_{\alpha_i}^2
 =\Gamma_\alpha^2s_n^2
 \le2\pi n\exp\!\left(\frac1{6n}\right)\Gamma_\alpha^2.
\]
The factor multiplying the squared ratio is less than $8n$ when $n\ge1$.
The inequality $m+1\le2^m$ proves the second bound.
To bound the encoding length, write
\[
 \Gamma_\alpha=
 \frac{n^n\prod_i\alpha_i!}{n!\prod_i\alpha_i^{\alpha_i}}.
\]
Both products have $O(n\log(n+1))$ bits, since
$\sum_i\alpha_i=n$ and $\alpha_i\le n$. Reduction of the fraction
does not increase its encoding length.
\end{proof}

\begin{proposition}[Exact computation from rational Hessians]
\label{prop:exact}
Let $k$ vary, let $\alpha_i\ge1$ and $|\alpha|=n$, and let
$\widehat H_1,\ldots,\widehat H_k$ be rational symmetric positive-definite
$n\times n$ matrices.  Suppose that each entry numerator and positive
denominator has at most $b_{\mathrm{in}}$ bits.  The coefficient
\[
 \widehat N=[z^\alpha]\det\!\left(\sum_i z_i\widehat H_i\right)
\]
and its normalized weight
$\widehat W_\alpha=\widehat N/\det(\sum_i\widehat H_i)$ can be computed
exactly using at most $\mathcal N_\alpha$ determinant evaluations and
\[
 O\!\left(\mathcal N_\alpha\,
       \poly(n,k,b_{\mathrm{in}})\right)
 \subseteq \poly(n,k,b_{\mathrm{in}},\Gamma_\alpha)
\]
bit operations.  The polynomial degrees are absolute and do not depend on
$k$.  The coefficient and normalized weight have polynomial bit lengths
in $n,k,b_{\mathrm{in}}$.
\end{proposition}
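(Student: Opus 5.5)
We clear denominators and then apply the finite-difference identity~\eqref{eq:mixed-finite-difference} already quoted in the overview. Let $q_0$ be the product (or least common multiple) of all entry denominators of the $\widehat H_i$. Then $q_0$ has at most $kn^2b_{\mathrm{in}}$ bits, and $M_i=q_0\widehat H_i$ are integer symmetric matrices whose entries have $O(kn^2b_{\mathrm{in}})$ bits. By homogeneity of degree $n$,
\[
 [z^\alpha]\det\Bigl(\sum_i z_i\widehat H_i\Bigr)=q_0^{-n}[z^\alpha]\det\Bigl(\sum_i z_iM_i\Bigr),
\]
and the same factor $q_0^{-n}$ appears in $\det(\sum_i\widehat H_i)$. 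Hence $\widehat W_\alpha$ equals the ratio computed with the $M_i$, and only $\widehat N$ needs the rescaling.

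First we verify the identity
\[
 [z^\alpha]P(z)=\frac1{\alpha!}\sum_{0\le j\le\alpha}(-1)^{n-|j|}\binom\alpha j P(j),
 \qquad P(z)=\det\Bigl(\sum_i z_iM_i\Bigr).
\]
The operator $\prod_i\Delta_i^{\alpha_i}$, where $\Delta_i$ is the forward difference in $z_i$, evaluated at $z=0$, sends $z^\beta$ to $\prod_i\Delta^{\alpha_i}t^{\beta_i}|_{0}$. This quantity vanishes unless $\beta_i\ge\alpha_i$ for every $i$, and it equals $\alpha_i!$ when $\beta_i=\alpha_i$. Since $|\beta|=|\alpha|=n$, domination forces $\beta=\alpha$. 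Expanding the product of differences gives exactly the stated signed binomial sum, so only the monomial $z^\alpha$ survives, with multiplier $\alpha!$. The identity uses no positivity, which is why it can also be invoked in Lemma~\ref{lem:section-zero}.

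Next comes the cost accounting. There are $\mathcal N_\alpha=\prod_i(\alpha_i+1)$ nodes $j$. At each node the integer matrix $\sum_i j_iM_i$ has entries bounded by $n\cdot 2^{O(kn^2b_{\mathrm{in}})}$, since $\sum_i j_i\le n$. Bareiss fraction-free elimination with pivoting computes its determinant exactly in $\poly(n,k,b_{\mathrm{in}})$ bit operations. Hadamard's bound gives the determinant $O(n\log n+kn^3b_{\mathrm{in}})$ bits. Each binomial weight $\binom\alpha j$ is at most $2^n$. Each partial sum is bounded by $\mathcal N_\alpha 2^n$ times the maximal determinant, and $\mathcal N_\alpha\le2^n$ by Lemma~\ref{lem:coefficient-grid}, so all accumulated integers have polynomial bit length and each addition costs polynomial time. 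We then divide exactly by $\alpha!$, which has $O(n\log n)$ bits. The result is an integer, because it equals a coefficient of an integer polynomial. Finally we form $\widehat N$ by dividing by $q_0^n$, and we form $\widehat W_\alpha$ by dividing by $\det(\sum_iM_i)$, which is the node $j=\mathbf 1$ evaluated without the weights. That determinant is positive because each $\widehat H_i$ is positive definite. Reducing the fractions keeps polynomial size.

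Combining these, the total work is $\mathcal N_\alpha\cdot\poly(n,k,b_{\mathrm{in}})$. By Lemma~\ref{lem:coefficient-grid}, $\mathcal N_\alpha<8n\Gamma_\alpha^2$, which gives the stated inclusion in $\poly(n,k,b_{\mathrm{in}},\Gamma_\alpha)$ with absolute degree. The only step needing care is the bit-length bookkeeping, so that the degree does not depend on $k$. The bound $\mathcal N_\alpha\le2^n$ keeps the signed sums at $O(n)$ extra bits. The clearing factor $q_0$ contributes only a factor $k$ inside the polynomial, never in an exponent. Enumerating the nodes $j$ in mixed-radix order costs $O(k\log n)$ per node.
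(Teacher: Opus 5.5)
Your proposal is correct and follows essentially the same route as the paper. Both arguments clear denominators to get $M_i=q_0\widehat H_i$, prove the mixed finite-difference identity by the domination argument on degree-$n$ monomials, compute each node determinant by Bareiss elimination with Hadamard bit bounds, divide exactly by $\alpha!$, reuse the all-ones node for the denominator, and invoke Lemma~\ref{lem:coefficient-grid} to get $\mathcal N_\alpha<8n\Gamma_\alpha^2$. The only cosmetic difference is that the paper avoids pivoting, because $\sum_i j_iM_i$ is positive definite for $j\ne0$, and sets the $j=0$ determinant to zero directly; your pivoting version handles that node automatically.
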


\subsubsection{Mixed finite differences}
\label{sec:finite-differences}

Let $q_0$ be a common positive denominator of the matrix entries, and set
$M_i=q_0\widehat H_i\in\mathbb Z^{n\times n}$. We may take the product
of all entry denominators for this purpose. Put
\[
 P(z)=\det\!\left(\sum_i z_iM_i\right),\qquad
 \nu_\alpha=[z^\alpha]P(z),\qquad
 \alpha!=\prod_i\alpha_i!.
\]
For $0\le j_i\le\alpha_i$ write
$|j|=\sum_i j_i$ and
$\binom\alpha j=\prod_i\binom{\alpha_i}{j_i}$.  Then
\begin{equation}\label{eq:mixed-finite-difference}
 \boxed{\displaystyle
 \nu_\alpha=
 \frac1{\alpha!}\sum_{0\le j\le\alpha}
       (-1)^{n-|j|}\binom\alpha j
       \det\!\left(\sum_i j_iM_i\right).}
\end{equation}
The denominator is the product of the multiplicity factorials.
For $\alpha=(1,\ldots,1)$ it is $1$, whereas for $M_i=M$ the resulting
coefficient is $\nu_\alpha=\binom n\alpha\det M$, with
$\binom n\alpha=n!/\alpha!$.

To prove the identity, recall that the $a$th forward difference of $t^b$
at zero satisfies
\[
 \sum_{j=0}^{a}(-1)^{a-j}\binom aj j^b
 =0\quad(b<a),\qquad =a!\quad(b=a).
\]
Apply the product of the difference operators to a monomial $z^\beta$
of the homogeneous degree-$n$ polynomial $P$. Its contribution vanishes
unless $\beta_i\ge\alpha_i$ for every $i$. Since
$|\beta|=|\alpha|=n$, the only surviving monomial has $\beta=\alpha$.
Its contribution is $\alpha!\nu_\alpha$, as required
in~\eqref{eq:mixed-finite-difference}.
With the usual convention for a zeroth difference, the proof also permits
zero multiplicities; those bodies have already been removed in our algorithm.

The formula~\eqref{eq:mixed-finite-difference} involves integer matrices.
If $j\ne0$, then $\sum_i j_iM_i$ is positive definite and all its leading
principal minors are positive. We can therefore use fraction-free Bareiss
elimination~\cite{Bareiss}, with nonzero pivots and exact integer divisions.
For $j=0$, we set the determinant of the zero matrix to zero without
elimination. The same rational Hessian approximations are used at all
nodes, so the cancellations in the signed sum introduce no error.

The identity~\eqref{eq:mixed-finite-difference} shows that its signed sum
is divisible by $\alpha!$. After this exact division, define
\[
 \Delta=P(1,\ldots,1),\qquad
 \widehat N=\frac{\nu_\alpha}{q_0^n},\qquad
 \widehat W_\alpha=\frac{\nu_\alpha}{\Delta}.
\]
Since every $\alpha_i\ge1$, the all-ones node is in the grid and
$\Delta$ has already been computed. The Cauchy--Binet formula applied
to Gram factorizations gives nonnegative determinant coefficients for
positive-semidefinite arguments. Expanding
$M_i=cI+(M_i-cI)$ for a sufficiently small common $c>0$ shows further
that every degree-$n$ coefficient is positive when all matrices are
positive definite. Consequently $\Delta>0$ and $\nu_\alpha>0$.
Nonnegativity of the remaining coefficients gives $0<\nu_\alpha\le\Delta$.

\subsubsection{Uniform bit complexity}
\label{sec:coefficient-bit-complexity}

The common denominator obeys
\[
 \log_2q_0\le kn^2 b_{\mathrm{in}},
\]
and the entries of $M_i$ have
$B=O(kn^2 b_{\mathrm{in}}+1)$ bits.  Since
$\sum_i j_i\le n$ at every grid node, its matrix entries have
$O(B+\log(n+1))$ bits.  Hadamard's inequality bounds every determinant
and every minor used in fraction-free elimination by
\[
 O\!\left(n(B+\log(n+1))\right)
\]
bits.  Products formed during elimination have bit lengths of the same
polynomial order. It follows that each determinant can be computed in
$\poly(n,k,b_{\mathrm{in}})$ bit operations, with an absolute degree.

Every integer binomial weight has at most $n+1$ bits, and
\[
 \sum_{0\le j\le\alpha}\binom\alpha j
 =\prod_i2^{\alpha_i}=2^n.
\]
A partial signed sum therefore has $O(n(B+\log(n+1))+n)$ bits,
since its absolute value is at most $2^n$ times the largest determinant
magnitude. The bound $\alpha!\le n!$ gives an encoding length of
$O(n\log(n+1))$ bits. Summation, exact division by $\alpha!$, and
formation of the output fractions thus have polynomial cost per node,
uniformly in the number of bodies. Proposition~\ref{prop:exact} now
follows from Lemma~\ref{lem:coefficient-grid}.

Only the replacement of the Hessians by rational matrices contributes
approximation error; this was bounded in~\eqref{eq:matrix-accuracy}.
The subsequent coefficient computation and normalization are exact.
If we instead round the coefficient, its positivity and integrality give
$\nu_\alpha\ge1$, and therefore $\widehat N\ge q_0^{-n}$.
It is enough to retain
\[
 \left\lceil n\log_2q_0+\log_2(1/\rho)\right\rceil+1
\]
fractional binary bits to obtain relative rounding error at most $\rho$.

\section{The estimator and its running time}
\label{sec:output}

We first prove a success probability of at least $3/4$ for one complete
run. Independent repetitions and a median will reduce the failure
probability to $\delta$.
Take $C=4096$ in Sections~\ref{sec:balls} and~\ref{sec:conditioning}, and set
\[
 \kappa=\varepsilon/16,\qquad
 \eta=\kappa/(8n),\qquad
 \sigma=\min\{\theta/4,\ k\eta\mu^2/L_H\},\qquad
 N=\left\lceil72A\kappa^{-2}\right\rceil.
\]
All parameters are rational, and $N$ is a positive integer.

\subsection{Rational sampling}
\label{sec:rational-sampling}

For each sample, we require a fresh run producing a rational point
$\widehat X\in E$ and a coupling to $X\sim\Unif(D)$ such that
\[
 \Pr\{\norm{\widehat X-X}>\sigma\}\le\frac1{32N}.
\]
We obtain this guarantee from the finite-grid sampler of
Dyer--Frieze--Kannan~\cite[Theorem~1 and
Sections~2--3]{DFK}.
Their sampler mixes on coarse cubes, selects a fine-grid point uniformly
inside the chosen cube, and applies rejection. Weak membership is
permitted by Section~6, Remark~2 of that paper. We use its mixing
estimate and give the quantization argument needed for the present
spatial coupling. In particular, we do not compare a discrete sample
directly to a continuous law in total variation.

Let $K$ satisfy $rB_n\subseteq K$. Subdivide each coarse cube into an
equal number of half-open dyadic cells of diameter at most $d_0$.
Fix deterministic weak predicates, including tie-breaking, for the coarse
walk and the acceptance test. Use tolerance $\rho_0$ to test each proposed
lower corner. The accepted set is consequently fixed, and under the ideal
stationary coarse law the accepted corners are uniform on this set.
For the coupling argument only, displace a corner uniformly in its cell.
If $J$ is the union of accepted cells, the displaced point is uniform
on $J$. Writing $s_0=\rho_0+d_0$, we have
\[
 K^{-s_0}\subseteq J\subseteq K^{+s_0},\qquad
 (1-s_0/r)K\subseteq K^{-s_0},\qquad
 K^{+s_0}\subseteq(1+s_0/r)K.
\]
For the first inclusion, use a coarse covering containing every cube
that meets $K$, and a fine subdivision smaller than the enlargement
scale of this covering. Every point of $K^{-s_0}$ then lies in a covered
cell whose corner belongs to $K^{-\rho_0}$, so that corner is accepted.
Cell boundaries have measure zero. When $ns_0/r$ is sufficiently small,
the displayed inclusions bound the total-variation distance between
the continuous laws $\Unif(J)$ and $\Unif(K)$ by $O(ns_0/r)$.
The displacement from a corner has length at most $d_0$. Taking
$d_0\le\sigma$ and then choosing $s_0$ sufficiently small gives the
desired coupling. We allocate fixed portions of the failure probability
to this boundary error, to mixing of the coarse walk, and to rejection
exceeding its prescribed cap. After rounding, the acceptance probability
has a constant lower bound; thus a cap logarithmic in the inverse
allowance suffices. The tolerances are transported through the sampler's
rational affine rounding map.

We apply this construction with $K=D$, whose inradius is at least
$1/2$ and whose outer radius is polynomial. In the original coordinates,
take the final weak acceptance tolerance at most $\theta/4$.
Every accepted corner then belongs to $D^{+\theta/4}\subseteq E$.
If rejection reaches its cap, return $0\in E$ and include this event
in the failure allowance. Thus the returned point belongs to $E$ on
every execution.
All grid and oracle parameters have polynomial bit lengths. For one
run, a polynomial bound in $n$, the presentation length, $\sigma^{-1}$,
and the inverse coupling allowance suffices, since these quantities
are polynomially bounded. Oracle simulation costs are counted using
the original input bodies, as in Section~\ref{sec:model}.

We now bound the effect of this spatial error on $W_\alpha$.
The relative matrix estimates in the proof of Lemma~\ref{lem:weight-lipschitz}
give the required multiplicative bound. On the successful coupling event,
strong convexity on the inverse-image ball implies
\[
 \norm{u(\widehat X)-u(X)}\le\frac{\sigma}{k\mu},\qquad
 \norm{H_i(u(\widehat X))-H_i(u(X))}_{\op}
 \le\frac{L_H\sigma}{k\mu}\le\eta\mu.
\]
Using $H_i(u(X))\succeq\mu I_n$, we obtain the relative bounds
of~\eqref{eq:matrix-accuracy} for the spatial perturbation.
Equation~\eqref{eq:ratio-accuracy} therefore applies both to this
perturbation and to the evaluation error of Lemma~\ref{lem:numerical-evaluation}.
Each changes the weight by a factor in $[1-\kappa,1+\kappa]$, since
\[
 \log\left(\frac{1+\eta}{1-\eta}\right)^n
 \le4n\eta=\kappa/2,\qquad e^{\kappa/2}\le1+\kappa,
\]
and the reciprocal is at least $1-\kappa$.

\subsection{Accuracy of one run}

Assign total failure probability $1/32$ to capacity scaling and rounding.
Estimate $\Vol_n(Q)$ with relative error $\kappa$ and failure probability
$1/32$. Generate $N$ independent rational samples by the preceding
construction. At each sample, apply Lemma~\ref{lem:numerical-evaluation}
with $\eta$ and $\zeta_{\mathrm{eval}}=1/(32N)$, and compute the normalized
weight exactly from the resulting rational matrices. The estimator is
\begin{equation}\label{eq:final-estimator}
 \widehat b_\alpha=(1-\theta)^n\widehat{\Vol}_n(Q)
 \frac1N\sum_{j=1}^N\widehat W_\alpha(\widehat X_j).
\end{equation}
Divide the preprocessing allowance among the volume calls used in
capacity optimization and the randomized rounding calls. Give each
subroutine a polynomial step limit and a rational output in case of
failure. If only an expected polynomial bound is available, impose a
cutoff and charge its probability to that subroutine's allowance.
For a constant-success run, the inverse allowances are polynomially
bounded. The deterministic oracle reductions require no allowance.
If preprocessing reports failure or returns malformed data, return zero;
otherwise continue subject to the fixed cost bounds. Also return zero
if the final volume estimate is nonpositive or malformed. An inaccurate
result not detected by these tests belongs to the corresponding
subroutine's failure event. For numerical evaluation,
Lemma~\ref{lem:numerical-evaluation} already supplies a positive-definite
fallback output.

Fix any successful preprocessing realization for the remainder of the
single-run analysis. Write $b=b_\alpha$ and set
\[
 I_D=\int_D W_\alpha(x)\,dx,\qquad r_\varepsilon=2\varepsilon/C.
\]
The choice $nk\tau=\varepsilon/(CA)$ and the bounds for regularization
and the omitted boundary layer imply
\[
 \Vol_n(Q)\le2Ab,\qquad
 0\le b_\alpha^\tau-b\le r_\varepsilon b,\qquad
 0\le b_\alpha^\tau-I_D\le r_\varepsilon b.
\]
These estimates use $e^t-1\le2t$ at $t=nk\tau$, together with
$\Vol_n(Q\setminus D)\le n\theta\Vol_n(Q)$. Hence
\[
 (1-r_\varepsilon)b\le I_D\le(1+r_\varepsilon)b,\qquad
 m_D:=\frac{I_D}{\Vol_n(D)}\ge\frac1{4A}.
\]
The sampler runs are independent. Taking the product of their couplings
therefore gives independent ideal points $X_1,\ldots,X_N$.
For $Y_j=W_\alpha(X_j)\in[0,1]$, the multiplicative Chernoff bound gives
\[
 \Pr\!\left\{\left|N^{-1}\sum_jY_j-m_D\right|>\kappa m_D\right\}
 \le2\exp\!\left(-\frac{N\kappa^2}{12A}\right)<\frac1{32}.
\]
This estimate is conditional on preprocessing alone; no conditioning on
the later coupling or evaluation events is used.

There are five failure allowances: preprocessing, volume estimation,
sample couplings, evaluations, and concentration of the ideal samples.
Each is at most $1/32$. For the evaluations this follows from their
conditional guarantees and a union bound. The probability that all
five conditions hold is consequently at least $27/32>3/4$.
On this event, the two multiplicative perturbation bounds give
\[
 (1-\kappa)^2Y_j\le\widehat W_\alpha(\widehat X_j)
 \le(1+\kappa)^2Y_j.
\]
Use these inequalities, the volume and concentration estimates, and
$\Vol_n(D)=(1-\theta)^n\Vol_n(Q)$ to obtain
\[
 (1-r_\varepsilon)(1-\kappa)^4
 \le\frac{\widehat b_\alpha}{b_\alpha}
 \le(1+r_\varepsilon)(1+\kappa)^4.
\]
Because $r_\varepsilon\le\kappa=\varepsilon/16$ and
$0<\varepsilon<1/2$, the lower bound is at least
$(1-\kappa)^5\ge1-5\kappa\ge1-\varepsilon$ and the upper bound is at most
$(1+\kappa)^5\le e^{5\kappa}\le1+10\kappa\le1+\varepsilon$.
The single-run accuracy claim follows.

\subsection{Amplification and total cost}

Undo the affine transformation and the scaling in each run by returning
\[
 \widehat V=
 \frac{\widehat b_\alpha}
 {|\det T|\lambda^\alpha\binom{n}{\alpha_1,\ldots,\alpha_k}}.
\]
The conversion factors are exact rationals. Repeat the complete procedure
independently an odd number $M_{\rm rep}$ of times, with
\[
 M_{\rm rep}\ge8\log(1/\delta),
\]
and take the median of the resulting $\widehat V$ values.
For example, $M_{\rm rep}=2\lceil4\log_2(2/\delta)\rceil+1$ suffices.
The inverse transformations must be applied before the median, since
$T$ and $\lambda$ can differ between runs. Each run has success
probability at least $3/4$. Hoeffding's inequality therefore bounds
the probability of median failure by $e^{-M_{\rm rep}/8}\le\delta$.
The median is computed exactly over the rationals.

\paragraph{Uniform dependence on the ratio.}
We give explicit bounds to show that the polynomial degree is independent
of $k$. Using $k\le n$, $A=2\Gamma_\alpha\ge2$, and
$0<\varepsilon<1/2$ in the definitions gives
\begin{align*}
 R_1&=O(n^2),&
 \tau^{-1}&=O(n^2A\varepsilon^{-1}),&
 \theta^{-1}&=O(nA\varepsilon^{-1}),\\
 B&=O(n^4A\varepsilon^{-1}),&
 \mu^{-1}&=O(n^{18}A^5\varepsilon^{-5}),&
 L_H&=O(n^3),\\
 B_H&=O(n^4),&
 N&=O(A\varepsilon^{-2}),&
 \sigma^{-1}&=O(n^{40}A^{10}\varepsilon^{-11}).
\end{align*}
The implicit constants are absolute. We have not attempted to optimize
the dimension exponents. The inverse step size, iteration and Gaussian
sample counts, and reciprocal support accuracies used for numerical
evaluation are obtained from these quantities by a fixed number of
additions, multiplications, reciprocals, and logarithms. They too have
polynomial bounds of fixed degree.
Although scaling can produce exponentially small individual inradii,
their logarithms and encoding lengths are polynomial in
$n,k,1+\log R_0$. It is these encoding lengths that enter the weak-oracle
reductions; no polynomial dependence on the reciprocal inradii is required.

A single complete run has cost at most
\[
 T_{\rm scale}+T_{\rm round}+T_{\rm volume}
 +N\bigl(T_{\rm sample}+T_{\rm eval}
       +\mathcal N_\alpha\poly(n,k,b_{\rm in})\bigr).
\]
The bound for $T_{\rm scale}$ in Lemma~\ref{lem:uniform-capacity} requires
no grid exponential in $k$. Lemma~\ref{lem:coefficient-grid} gives
$\mathcal N_\alpha<8n\Gamma_\alpha^2$. In every term we include the cost
of simulating weak-oracle queries through the original membership oracles.
The preceding bounds show that both the presentation lengths and
$b_{\rm in}$ are polynomial. Thus the cost of one run is
\[
 \poly(n,k,1+\log R_0,\Gamma_\alpha,\varepsilon^{-1}),
\]
with an absolute degree. The number of determinant evaluations, obtained
by multiplying the outer sample count by the grid size, is
$O(n\Gamma_\alpha^3\varepsilon^{-2})$ per run. The remaining numerical
work is included separately in the polynomial cost above.

Impose fixed caps on all iteration counts and query and output lengths.
If a query or intermediate encoding exceeds its cap, terminate that
run and return zero. None of these caps is reached on the good event
proved above. They ensure the claimed running time for every random
bit string, including those producing inaccurate preprocessing data.
When the $N$ rational weights are added exactly, the denominator length
is at most the sum of their individual lengths, and is therefore
polynomial. Undoing the transformations also takes polynomial bit time.
For the median, use deterministic linear-time selection on the rational
outputs; each comparison has the cost bound for those outputs.
There are $O(1+\log\delta^{-1})$ independent runs, so the total cost has
the form~\eqref{eq:full-dimensional-complexity}. In particular, there is
no polynomial dependence on $\delta^{-1}$.
This proves Theorem~\ref{thm:full-dimensional}. Applying the exact
reduction of Section~\ref{sec:sections-reduction} proves Theorem~\ref{thm:main}.

For several targets, use $\delta$ divided by the list length for each
one and add their costs, as in Section~\ref{sec:model}.

\paragraph{Volume algorithms used as subroutines.}
For volume and sampling, the polynomial-time guarantees of
Dyer--Frieze--Kannan~\cite{DFK} suffice, together with the reductions
for well-presented bodies in~\cite{DGH,GLS}. One may also use faster
volume implementations, such as Gaussian cooling~\cite{CV} and the
rounding methods of~\cite{JLLV}. These volume query bounds cover only
part of the present algorithm. We also count the simulation of derived
body oracles, the smoothing computation, and the evaluation of the
inverse gradient map.
\section*{Acknowledgements}
This work was supported by the Department of Atomic Energy, Government of
India (project number RTI4014), by the Infosys--Chandrasekharan Virtual
Center for Random Geometry at the Tata Institute of Fundamental Research
(TIFR).

This work was carried out while the author was a Visiting Scientist at
the Simons Institute for the Theory of Computing.


\clearpage
\appendix
\numberwithin{equation}{section}
\section{Two rectangles}
\label{app:rectangle-example}

We apply the representation in Section~\ref{sec:representation} to the
rectangles
\[
  L_1=[-2,2]\times[-1,1],\qquad
  L_2=[-1,1]\times[-3,3],\qquad Q_i=L_i+\tau B_2,
\]
where $\tau>0$. The bodies $Q_i$ have rounded corners. We first compute
the coefficient $b_{(1,1)}^\tau$ of $t_1t_2$ in
$\Vol_2(t_1Q_1+t_2Q_2)$ directly, and then obtain the same coefficient
from~\eqref{eq:coefficient-integral}. The calculation holds for every
$\tau>0$. The restriction on $\tau$ in Section~\ref{sec:balls} is used only
in the approximation estimate.

\subsection{The volume polynomial}

For $t_1,t_2>0$, set
\[
  a_t=2t_1+t_2,\qquad b_t=t_1+3t_2,\qquad
  \rho_t=\tau(t_1+t_2).
\]
By Minkowski addition,
\[
  Q_t=t_1Q_1+t_2Q_2
  =\bigl([-a_t,a_t]\times[-b_t,b_t]\bigr)+\rho_t B_2.
\]
The rectangle has area $4a_tb_t$. The four strips along its edges have total
area $4(a_t+b_t)\rho_t$, and the four quarter-disks have total area $\pi\rho_t^2$.
It follows that
\[
  \Vol_2(Q_t)=4a_tb_t+4(a_t+b_t)\rho_t+\pi\rho_t^2.
\]
Expanding these terms gives
\begin{align*}
  4a_tb_t
    &=8t_1^2+28t_1t_2+12t_2^2,\\
  4(a_t+b_t)\rho_t
    &=12\tau t_1^2+28\tau t_1t_2+16\tau t_2^2,\\
  \pi\rho_t^2
    &=\pi\tau^2t_1^2+2\pi\tau^2t_1t_2+\pi\tau^2t_2^2.
\end{align*}
Thus
\begin{equation}\label{eq:example-geometric-coefficient}
  \boxed{b_{(1,1)}^\tau=28+28\tau+2\pi\tau^2.}
\end{equation}

\subsection{Gaussian smoothing}

For $u=(x,y)$, the two support functions are
\[
  h_1(x,y)=2|x|+|y|,\qquad
  h_2(x,y)=|x|+3|y|.
\]
Let $Z\sim N(0,1)$ and define
\[
  m(s)=\E|s+Z|,\qquad
  \phi(s)=\frac{1}{\sqrt{2\pi}}e^{-s^2/2},\qquad
  \Phi(s)=\int_{-\infty}^s\phi(z)\,dz.
\]
By differentiation,
\[
  m'(s)=\Pr(s+Z>0)-\Pr(s+Z<0)=2\Phi(s)-1,
  \qquad m''(s)=2\phi(s).
\]
Let $\kappa(s)=2\phi(s)$ and $r=\sqrt{1+x^2+y^2}$. The Gaussian
coordinates are standard normal. Thus the functions defined in
Section~\ref{sec:representation} are
\[
  f_1(x,y)=2m(x)+m(y)+\tau r,\qquad
  f_2(x,y)=m(x)+3m(y)+\tau r.
\]

\subsection{The Hessians}

The Hessian of the square-root term, before multiplication by $\tau$, is
\[
  J(x,y):=\nabla^2r
  =\frac1{r^3}
  \begin{pmatrix}
    1+y^2&-xy\\
    -xy&1+x^2
  \end{pmatrix}.
\]
Consequently,
\[
  H_1=\begin{pmatrix}2\kappa(x)&0\\0&\kappa(y)\end{pmatrix}+\tau J,
  \qquad
  H_2=\begin{pmatrix}\kappa(x)&0\\0&3\kappa(y)\end{pmatrix}+\tau J.
\]
The Gaussian terms in this example have positive-definite Hessians at every
point. The additional term gives the curvature bound~\eqref{eq:curvature},
which is needed for general bodies.

Expanding the determinant of the symmetric $2\times2$ matrix gives
\[
  \det(t_1H_1+t_2H_2)
  =t_1^2\det H_1+t_1t_2d_\tau(x,y)+t_2^2\det H_2,
\]
where
\begin{align*}
  d_\tau(x,y)
   &=(H_1)_{11}(H_2)_{22}+(H_2)_{11}(H_1)_{22}
     -2(H_1)_{12}(H_2)_{12}\\
   &=(2\kappa(x)+\tau J_{11})(3\kappa(y)+\tau J_{22})\\
   &\quad +(\kappa(x)+\tau J_{11})(\kappa(y)+\tau J_{22})
     -2\tau^2J_{12}^2.
\end{align*}
Collecting terms, we obtain
\begin{equation}\label{eq:example-density}
\begin{aligned}
  d_\tau(x,y)
   &=7\kappa(x)\kappa(y)
     +\tau\bigl(3\kappa(x)J_{22}+4\kappa(y)J_{11}\bigr)
     +2\tau^2\det J.
\end{aligned}
\end{equation}
Each term is nonnegative. Equation~\eqref{eq:coefficient-integral} identifies
the integral over $\R^2$ with $b_{(1,1)}^\tau$.

\subsection{Integration}

\textbf{The Gaussian terms.}
Since $\int_{\R}\kappa(s)\,ds=2$, we have
\[
  \int_{\R^2}7\kappa(x)\kappa(y)\,dx\,dy
  =7\cdot2\cdot2=28.
\]
This is the mixed coefficient of the two rectangles.

\textbf{The terms linear in the radius.}
For fixed $x$,
\[
  J_{22}(x,y)=\frac{1+x^2}{(1+x^2+y^2)^{3/2}}
  =\frac{\partial}{\partial y}
    \frac{y}{\sqrt{1+x^2+y^2}},
\]
and therefore
\[
  \int_{\R}J_{22}(x,y)\,dy=1-(-1)=2.
\]
Similarly, $\int_{\R}J_{11}(x,y)\,dx=2$ for each fixed $y$. It follows that
\[
  \int_{\R^2}\tau\bigl(3\kappa(x)J_{22}+4\kappa(y)J_{11}\bigr)\,dx\,dy
  =\tau(3\cdot2\cdot2+4\cdot2\cdot2)=28\tau.
\]

\textbf{The quadratic term.}
The determinant of $J$ is
\[
  \det J=\frac{(1+x^2)(1+y^2)-x^2y^2}{r^6}
  =\frac1{(1+x^2+y^2)^2}.
\]
Integrating in polar coordinates, with radial variable $s$, gives
\[
  \int_{\R^2}\det J\,dx\,dy
  =2\pi\int_0^\infty\frac{s}{(1+s^2)^2}\,ds=\pi.
\]
The last term therefore has integral $2\pi\tau^2$. Adding the three integrals gives
\begin{equation}\label{eq:example-integral}
  \boxed{
  \int_{\R^2}[t_1t_2]\det(t_1H_1+t_2H_2)\,dx\,dy
  =28+28\tau+2\pi\tau^2.}
\end{equation}
This is the coefficient in~\eqref{eq:example-geometric-coefficient}.

\subsection{The gradient map and mixed area}

For $F_t=t_1f_1+t_2f_2$, the gradient map is
\[
  \nabla F_t(x,y)
  =\begin{pmatrix}
     a_t(2\Phi(x)-1)\\
     b_t(2\Phi(y)-1)
   \end{pmatrix}
   +\rho_t\frac{(x,y)^{\mathsf T}}{\sqrt{1+x^2+y^2}}.
\]
The first vector lies in $(-a_t,a_t)\times(-b_t,b_t)$ and the second has
norm less than $\rho_t$. Hence their sum belongs to the interior of $Q_t$.
By the diffeomorphism argument in Section~\ref{sec:representation}, every
point in this interior has a unique preimage. The Jacobian is
$t_1H_1+t_2H_2$, so its determinant integrates to the area.

By the definition of mixed volume in dimension two,
\[
  \Vol_2(t_1Q_1+t_2Q_2)
  =t_1^2\Vol_2(Q_1)+2t_1t_2V(Q_1,Q_2)+t_2^2\Vol_2(Q_2).
\]
It follows that
\begin{equation}\label{eq:example-mixed-area}
  \boxed{V(Q_1,Q_2)=14+14\tau+\pi\tau^2.}
\end{equation}
Letting $\tau\downarrow0$ gives $V(L_1,L_2)=14$. The $t_1t_2$ coefficient
for the rectangles is therefore $28$.

\clearpage
\section{Comparison with earlier algorithms}
\label{sec:nr-comparison}

\paragraph{Vertex descriptions.}
Narayanan--Roy~\cite{NR} give a randomized relative approximation
algorithm for lattice polytopes $P_i\subseteq B_\infty(2^L)$, each
specified as the convex hull of at most $m_0$ integer points. Their
running time is polynomial in $n,m_0,L,\widetilde A,\varepsilon^{-1}$,
and $\log\delta^{-1}$, where, for positive multiplicities,
\[
 \widetilde A=\prod_{i=2}^k
       \frac{(\alpha_i+1)^{\alpha_i+1}}{\alpha_i^{\alpha_i}}.
\]
A stronger theorem in that paper uses a refined capacity parameter $A$.
The number of bodies is allowed to vary. For fixed $k$, their parameter
bound gives polynomial running time. Their algorithm first applies capacity
scaling. It then samples the Minkowski sum and recovers the cell of a mixed
subdivision containing the sampled point. The estimator is the indicator
that the dimensions of the summand faces equal the prescribed multiplicities.

\paragraph{Membership oracles.}
In this paper the bodies are rational subspace sections of convex bodies
given by membership oracles, together with inner and outer bounds.
Section~\ref{sec:sections-reduction} detects zero mixed volume exactly and
reduces the positive case to full-dimensional bodies. We then use a smooth
mixed Monge--Amp\`ere representation. It gives a determinant weight in
$[0,1]$ whose expectation is the desired coefficient divided by the volume
of the Minkowski sum. To evaluate the weight, we solve $\nabla F(u)=X$,
estimate the Hessians, and compute their determinant coefficient by exact
rational arithmetic. Support optimization, boundary trimming, and the
regularization of the Hessians give the required finite-bit estimates.
We compare this optimization with the cell-recovery problem of
Narayanan--Roy in Section~\ref{intro:kkt}. Each statistic has the required
expectation. We do not use a pointwise equality between the smooth weight
and the subdivision indicator.

\paragraph{The complexity parameters.}
The bound in~\cite{NR} is already polynomial in $\Gamma_\alpha$ for
vertex-described inputs. To see this, discard zero multiplicities and write
\[
 \widetilde A
 =\prod_{i=2}^k(\alpha_i+1)
                 (1+\alpha_i^{-1})^{\alpha_i}
 \le(2e)^{k-1}\prod_{i=2}^k\alpha_i,
\]
whereas the two-sided Stirling bounds give
\[
 \Gamma_\alpha^2\ge
 (2\pi)^{k-1}e^{-1/(6n)}
       \frac{\prod_{i=1}^k\alpha_i}{n}.
\]
Consequently, in any ordering of the positive multiplicities,
\begin{equation}\label{eq:nr-parameter-comparison}
 \frac{\widetilde A}{\Gamma_\alpha^2}
 \le e^{1/(6n)}\left(\frac e\pi\right)^{k-1}
                       \frac n{\alpha_1}
 \le2n.
\end{equation}
For $k=1$ both parameters equal one. Theorem~\ref{thm:main} extends a
guarantee with polynomial dependence on this ratio to rational sections
of membership-oracle bodies. The proof uses the exact section reduction
and the finite-bit evaluation of the smooth statistic. In particular,
the polynomial running-time bound for fixed $k$ holds for these oracle inputs.

\paragraph{Capacity and repeated bodies.}
Gurvits~\cite{Gurvits} proves a capacity bound with factor $n^n/n!$
for the squarefree coefficient, which equals $n!$ times the conventional
mixed volume. Polarization yields
\[
 p_\alpha\le\Capacity_\alpha(p)\le\Gamma_\alpha p_\alpha,
 \qquad \Capacity_\alpha(p)=\inf_{t>0}p(t)/t^\alpha.
\]
Gurvits computes capacity using convex optimization and volume approximation.
Both sampling algorithms use this normalization. In
Section~\ref{sec:uniform-capacity} we give the finite-precision implementation
and verify that the polynomial degree is independent of the number of bodies.

\paragraph{Interpolation and dominant multiplicities.}
Dyer--Gritzmann--Hufnagel~\cite[Theorems~10--11]{DGH} give randomized relative
approximation for well-presented convex bodies when one multiplicity is at
least $n-\psi(n)$, where $\psi$ is nondecreasing and
$\psi(n)\log\psi(n)=o(\log n)$. Their method interpolates approximate volumes
of Minkowski sums. Their Theorem~4 proves $\#\mathrm P$-hardness of exact computation
even for two rectangular parallelotopes. This does not preclude randomized
approximation. By Remark~\ref{rem:polynomial-regimes}, our algorithm has
randomized polynomial running time in their dominant-multiplicity regime
and when the number of distinct bodies is fixed.

\paragraph{Surface area through heat flow.}
The identity
\[
 \operatorname{area}(\partial K)
 =\left.\frac{d}{ds}\Vol_n(K+sB_n)\right|_{s=0^+}
 =nV(K[n-1],B_n)
\]
places surface area within the present problem. Belkin--Narayanan--Niyogi
\cite{BNN} estimate it through the small-time escape probability
\[
 \Pr\{X+\sqrt{2s}\,G\notin K\}
 =\sqrt{s/\pi}\,
 \frac{\operatorname{area}(\partial K)}{\Vol_n(K)}(1+o(1)),
 \qquad s\downarrow0,
\]
for independent $X\sim\Unif(K)$ and $G\sim N(0,I_n)$. Their finite-time
bounds permit the estimation of this mixed-volume quantity from samples.

\paragraph{Mixed Monge--Amp\`ere measures.}
The identity in Section~\ref{sec:representation} is a smooth case of the
mixed Monge--Amp\`ere identity. Passare--Rullg{\aa}rd
\cite[Proposition~3, pp.~496--497]{PR} identify the total mixed real
Monge--Amp\`ere mass with the mixed volume of bounded domains of convex
conjugates.

For smooth convex functions, the mixed Monge--Amp\`ere densities are mixed
discriminants of the Hessians. We use this fact to evaluate the statistic.

\subsection{Convex duality}
\label{intro:kkt}

We compare the local optimization in Narayanan--Roy~\cite{NR} with the
optimization used to evaluate our weight.

\subsubsection{Subdivision and cell recovery}

After capacity scaling, the target coefficient divided by the volume of the
Minkowski sum is bounded below by the reciprocal of a capacity factor.
Both algorithms estimate this ratio by sampling. Narayanan--Roy~\cite{NR}
use a subdivision of the Minkowski sum of vertex-described lattice polytopes
into sums of faces. Given a sampled point $X$, they solve
\[
 \max\left\{\sum_i\langle a_i,y_i\rangle:
       y_i\in\lambda_iP_i,\quad \sum_i y_i=X\right\},
\]
where the $a_i$ are suitably generic directions. If $F_i$ is the minimal face
of $P_i$ containing $y_i/\lambda_i$, the Karush--Kuhn--Tucker (KKT) conditions
give a common multiplier $v$ with
\[
 a_i-v\in N(\lambda_iF_i,\lambda_iP_i)\qquad(i=1,\ldots,k).
\]
Here $N(F,P)$ denotes the outer normal cone of the face $F$ of $P$.
These normal-cone conditions determine the cell in Schneider's subdivision.
The statistic is
\[
 I_\alpha(X)=\mathbf{1}\{\dim F_i=\alpha_i\text{ for every }i\}.
\]
Genericity and boundary estimates ensure that the correct cell can be
recovered. The KKT argument is in Section~6, Lemma~4, Claim~1 of~\cite{NR}.
It is used to recover the cell; it does not extract a coefficient from a
given determinant polynomial.

\subsubsection{Smooth optimization}

Let $h_K(u)=\sup_{y\in K}\langle u,y\rangle$. The dual of the preceding
maximization is
\[
 \min_{u\in\R^n}
 \left\{\sum_i h_{\lambda_iP_i}(u+a_i)-\langle u,X\rangle\right\}.
\]
Indeed, using $v$ for the multiplier of $X-\sum_i y_i$ gives the dual objective
$\langle v,X\rangle+\sum_i h_{\lambda_iP_i}(a_i-v)$; setting $u=-v$ gives the
displayed expression. Its optimality condition is
\[
 X\in\sum_i\partial h_{\lambda_iP_i}(u+a_i),
\]
where $\partial h$ denotes the convex subdifferential. The faces are
therefore determined by subgradients of the support functions.

After the section reduction, we work with full-dimensional bodies. In
Sections~\ref{sec:balls} and~\ref{sec:representation} we construct the
enlarged bodies $Q_i$ and smooth convex functions $f_i$ satisfying
$H_i=\nabla^2 f_i\succ0$. For $Q=\sum_iQ_i$ and $F=\sum_i f_i$, we find
the inverse image of $X\in\interior Q$ by solving
\[
 u_* = \mathop{\rm argmin}_{u\in\R^n}
       \{F(u)-\langle u,X\rangle\},
 \qquad \sum_i\nabla f_i(u_*)=X.
\]
This has the same form as the preceding dual problem. Here the functions
are smooth, and their gradients replace the subgradient face data. The
Hessians give the Jacobian of the change of variables.

We then evaluate
\[
 W_\alpha(X)=
 \frac{[z^\alpha]\det(\sum_i z_iH_i(u_*))}
      {\det(\sum_i H_i(u_*))}.
\]
Section~\ref{sec:estimator} proves the bound on this weight and computes
its expectation. Lemma~\ref{lem:numerical-evaluation} gives the approximation
to the inverse point and the Hessians. For the resulting rational matrices,
Section~\ref{sec:determinant-arithmetic} computes the coefficient exactly.

\end{document}